\newcommand{\myparagraph}[1]{\par\smallskip\noindent\textbf{#1.}}
\let\epsilon\varepsilon
\let\phi\varphi
\let\emptyset\varnothing
\let\rho\varrho
\renewclass{\P}{Ptime}
\renewclass{\DTIME}{Dtime}
\renewclass{\DSPACE}{Dspace}
\renewclass{\EXP}{EXPtime}
\newclass{\TWOEXP}{2EXPtime}
\renewclass{\EXPSPACE}{EXPspace}
\newclass{\ACK}{Ack}
\newclass{\FDTIME}{FDtime}
\renewclass{\AP}{APtime}
\renewclass{\PSPACE}{Pspace}
\tikzset{->,>=stealth',
shorten >=1pt,shorten <=1pt,
auto,node distance=1.5cm,
every loop/.style={looseness=6},
initial text={},
every state/.style={inner sep=0.2mm, minimum size=0.5cm},
el/.style={font=\scriptsize},
inner sep=1mm,
loopright/.style={loop,looseness=6,out=30, in=-30},
loopleft/.style={loop,looseness=6,out=210, in=150},
loopabove/.style={loop,looseness=6,out=120, in=60},
loopbelow/.style={loop,looseness=6,out=300, in=240},
ve/.style={rectangle,draw,inner sep=1mm},
va/.style={circle,draw,inner sep=0.6mm},
}
\newtheorem{theorem}{Theorem}
\newtheorem{lemma}{Lemma}
\newtheorem{proposition}[lemma]{Proposition}
\newtheorem{corollary}[lemma]{Corollary}
\theoremstyle{remark}
\theoremstyle{definition}
\newtheorem{example}{Example}
\newcommand{\maxweight}{w_{\max}}
\newcommand{\maxm}{\mu_{\max}}
\newcommand{\defeq}{:=}
\newcommand\inter[1]{\llbracket #1 \rrbracket}
\newcommand{\calM}{\mathcal{M}}
\newcommand{\calN}{\mathcal{N}}
\newcommand{\calC}{\mathcal{C}}
\newcommand{\calD}{\mathcal{D}}
\newcommand{\calE}{\mathcal{E}}
\newcommand{\calG}{\mathcal{G}}
\newcommand{\calT}{\mathcal{T}}
\newcommand{\calU}{\mathcal{U}}
\newcommand{\stratregret}[2]{\mathbf{reg}^{#1}(#2)}
\providecommand\lang{}
\renewcommand{\lang}[1]{\mathcal{L}_{#1}}
\providecommand\st{}
\renewcommand{\st}{\:\mid\:}
\newcommand{\pow}{\mathcal{P}}
\newcommand{\eve}{Eve\xspace}
\newcommand{\adam}{Adam\xspace}
\newcommand{\ie}{\textit{i.e.}\xspace}
\newcommand{\eg}{\textit{e.g.}\xspace}
\newcommand{\prop}[1]{\textbf{\textsf{P}$_{#1}$}}
\newcommand{\joker}{\mathsf{joker}}
\newcommand{\credit}{\mathsf{Cr}}
\newcommand{\el}{\mathsf{EL}}
\newcommand{\dom}{\mathsf{dom}}
\newcommand{\Val}{\mathsf{Val}}
\newcommand{\JG}{\mathrm{JG}}
\title{On Delay and Regret Determinization of Max-Plus Automata
\thanks{This work was partially supported by the ERC Starting grant 279499
	(inVEST), the ARC project Transform (F\'ed\'eration Wallonie-Bruxelles),
	and the Belgian FNRS CDR project Flare. E. Filiot is an F.R.S.-FNRS
	research associate, I. Jecker an F.R.S.-FNRS Aspirant fellow, and G. A.
	P\'erez an F.R.S.-FNRS Aspirant fellow and FWA post-doc fellow.}
}
\author[1]{Emmanuel Filiot}
\author[1]{Isma\"el Jecker}
\author[1,2]{Nathan Lhote}
\author[1]{Guillermo A. P\'erez}
\author[1]{Jean-Fran\c{c}ois Raskin}
\affil[1]{Universit\'{e} Libre de Bruxelles}
\affil[2]{Universit\'e de Bordeaux, LaBRI}
\affil[ ]{\{efiliot, ijecker, gperezme, jraskin\}@ulb.ac.be, nlhote@labri.fr}
\begin{document}

\maketitle

\begin{abstract}
    Decidability of the determinization problem for weighted automata over the
    semiring $(\mathbb{Z}\cup  \{ -\infty \}, \max,+)$, WA for short, is a
    long-standing open question. We propose two ways of approaching it by
    constraining the search space of deterministic WA: $k$-delay and
    $r$-regret. A WA $\calN$ is $k$-delay determinizable if there exists a
    deterministic automaton $\calD$ that defines the same function as $\calN$
    and for all words $\alpha$ in the language of $\calN$, the
    accepting run of $\calD$ on $\alpha$ is always at most $k$-away from a
    maximal accepting run of $\calN$ on $\alpha$. That is, along all prefixes of
    the same length, the absolute difference between the running sums of weights
    of the two runs is at most $k$.
    A WA $\calN$ is $r$-regret determinizable if for all words $\alpha$ in its
    language, its non-determinism can be resolved on the fly to construct a run
    of $\calN$ such that the absolute difference between its value and the value
    assigned to $\alpha$ by $\calN$ is at most $r$.

    We show that a WA is determinizable if and only if it is $k$-delay
    determinizable for some $k$. Hence deciding the existence of some $k$ is
    as difficult as the general determinization problem. When $k$ and $r$ are
    given as input, the $k$-delay and $r$-regret determinization problems are
    shown to be \EXP-complete. We also show that determining whether a
    WA is $r$-regret determinizable for some $r$ is in \EXP.
\end{abstract}

\section{Introduction}

\myparagraph{Weighted automata} Weighted automata generalize finite
automata with weights on transitions~\cite{Droste:2009:HWA:1667106}. 
They generalize word languages to partial functions from words to
values of a semiring. First introduced by Sch\"utzenberger and Chomsky in the 60s,
they have been studied for long~\cite{Droste:2009:HWA:1667106}, with
applications in natural language and image processing for
instance. More recently, they have found new applications in
computer-aided verification as a measure of system
quality through quantitative properties~\cite{cdh10}, and in
system synthesis, as objectives for quantitative games~\cite{conf/concur/FiliotGR12}. 
In this paper, we consider weighted automata $\calN$ over the semiring
$(\mathbb{Z}\cup  \{ -\infty \}, \max,+)$, and just
call them weighted automata (WA). The value of a run is the sum of the weights
occurring on its transitions, and the value of a word is the maximal value of all
its accepting runs.  Absent transitions have a weight of $-\infty$ and runs of
value $-\infty$ are considered non-accepting. This defines a partial function
denoted $\inter{\calN} : \Sigma^*\rightarrow \mathbb{Z}$ whose domain is denoted
by $\lang{\calN}$. 

\myparagraph{Determinization problem} Most of the good algorithmic properties of
finite automata do not transfer to WA. Notably, the
(quantitative) language inclusion $\inter{\mathcal{A}} \leq \inter{\mathcal{B}}$ is undecidable for
WA~\cite{Krob:1992:EPR} (see also \cite{dav16} and
\cite{conf/atva/AlmagorBK11} for different proofs based on reductions
from the halting problem for two-counter machines). This has triggered 
research on sub-classes or other formalisms for which this problem becomes
decidable~\cite{DBLP:conf/fsttcs/FiliotGR14,conf/concur/ChatterjeeDEHR10}.
This includes the class of deterministic WA (DWA, also known as sequential
WA in the literature), which are the WA whose underlying (unweighted)
automaton is deterministic. Another scenario where it is desirable to
have a DWA is the quantitative synthesis problem,  
undecidable even for unambiguous WA, yet decidable for
DWA~\cite{conf/concur/FiliotGR12}. However, and in contrast with finite unweighted
automata, WA are not determinizable in general. For instance, the function
which outputs the maximal value between the number of $a$'s and the
number of $b$'s in a word $\alpha \in\{a,b\}^*$ is not realizable with a DWA. 
This motivates the determinization problem: given a WA $\calN$, is it
determinizable? I.e. is there a DWA defining the same
(partial) function as $\calN$? 

The determinization problem for computational models is
fundamental in theoretical computer science. For WA in particular, it
is sometimes more natural (and at least exponentially more succinct)
to specify a (non-deterministic) WA, even if some equivalent DWA
exists. If the function is specified in a weighted logic equivalent to
WA, such as weighted MSO~\cite{journals/tcs/DrosteG07}, the
logic-to-automata transformation may construct a non-deterministic,
but determinizable, WA. However, despite many research efforts, the largest class
for which this problem is known to be decidable is the class of
polynomially ambiguous WA~\cite{kirsten_et_al:LIPIcs:2009:1850}, and
the decidability status for the full class of WA is a long-standing
open problem. Other contributions and approaches to the determinization problem
include the identification of sufficient conditions for
determinizability~\cite{Mohri_FiniteState_CoLing}, approximate determinizability
(for unambiguous WA) where the DWA is required to produce values 
at most $t$ times the value of the WA, for a given factor
$t$~\cite{journals/tcs/AminofKL13}, and (incomplete) approximation algorithms
when the weights are non-negative~\cite{journals/mst/ColcombetD16}.

\myparagraph{Bounded-delay \& regret determinizers} In this paper, we adopt
another approach that consists in constraining the class of DWA that can be
used for determinization. More precisely, we define a \emph{class} of
DWA $\mathfrak{C}$ as a function from WA to sets of DWA, and
say that a DWA $\calD$ is a $\mathfrak{C}$-determinizer of a WA $\calN$ if
\begin{inparaenum}[$(i)$]
\item $\calD \in \mathfrak{C}(\calN)$ and
\item $\inter{\calN} =
\inter{\calD}$.
\end{inparaenum}
Then, $\calN$ is said to be $\mathfrak{C}$-deteminizable if it admits a
${\mathfrak{C}}$-determinizer. If ${\bf DWA}$ denotes the function mapping any
WA to the whole set of DWA, then obviously the ${\bf DWA}$-determinization
problem is the general (open) determinization problem. In this paper, we consider
two restrictions. 

First, given a bound $k\in\mathbb{N}$, we look for the class of $k$-delay DWA
${\bf Del}_k$, which maintain a strong relation with the sequence of values
along some accepting run of the non-deterministic automaton. More
precisely, a DWA $\calD$ belongs to ${\bf Del}_k(\calN)$ if for all words $\alpha \in
\lang{\calD}$, there is an accepting run $\rho$ of $\calN$ with maximal value
such that the running sum of the prefixes of $\rho$ and the running sum of the
prefixes of the unique run $\rho_{\calD}$ of $\calD$ on $\alpha$ are constantly
close in the following sense: for all lengths $\ell$, the absolute value of the
difference of the value of the prefix of $\rho$ of length $\ell$ and
the value of the prefix of $\rho_\calD$ of length $\ell$ is at most $k$.
Then the ${\bf
Del}_k$-determinization problem amounts to deciding whether there exists
${\calD} \in {\bf Del}_k$ such that $\inter{\calN}=\inter{\calD}$.  And if $k$
is left unspecified, it amounts to decide whether there exists ${\calD} \in
\bigcup_{k \in {\mathbb{N}}} {\bf Del}_k(\calN)$ such that
$\inter{\calN}=\inter{\calD}$. We note ${\bf Del}$ the function mapping any WA
$\calN$ to $\bigcup_k {\bf Del}_k(\calN)$. We will show that the ${\bf
Del}_k$-determinization problem is  complete for \EXP, and the ${\bf
Del}$-determinization problem is equivalent to the general ({\it unconstrained})
determinization problem.

The notion of delay has been central in many works on automata with
outputs. For instance, it has been a key notion in transducer theory
(automata with word outputs) for the determinization and the
functionality problems \cite{journals/tcs/BealCPS03},  and the decomposition of finite-valued
transducers~\cite{sakarovitch_et_al:LIPIcs:2008:1324}. The notion of delay has been also used in the
theory of WA, for instance to give sufficient conditions for
determinizability~\cite{Mohri_FiniteState_CoLing} or for the decomposition of finite-valued group
automata~\cite{DBLP:conf/fsttcs/FiliotGR14}. 

\begin{example} Let $A = \{a,b\}$ and
    $k\in\mathbb{N}$. The left automaton of Fig.~\ref{fig:example2}
    maps any word in $AaA^*$ to $0$, and any word in $AbA^*$ to
    $1$. It is ${\bf Del}_k$-determinizable by the right automaton of
    Fig.~\ref{fig:example2}. After one step, the delay of the DWA is
    $k$ with both transitions of the left WA. After two or more steps,
    the delay is always $0$. It is not ${\bf Del}_j$-determinizable for
    any $j<k$. (A second example of a bounded-delay determinizable automaton is
    shown in Fig.~\ref{fig:example22}.)
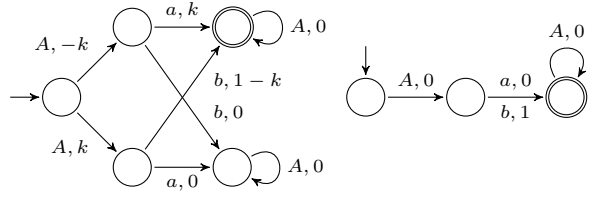
\begin{figure}[h]
\begin{center}
\begin{tikzpicture}[node distance=0.8cm]
	\node[state,initial left] (si) {};
	\node[state,above right= of si] (s0) {};
	\node[state,below right= of si] (s1) {};
	\node[state,right= of s0,accepting] (s2) {};
	\node[state,right= of s1] (s3) {};

	\path
	(si) edge node[el] {$A,-k$} (s0)
	(si) edge node[el,swap] {$A,k$} (s1)
	(s0) edge node[el] {$a,k$} (s2)
	(s1) edge node[el,swap,pos=0.8] {$b,1-k$} (s2)
	(s0) edge node[el,pos=0.8] {$b,0$} (s3)
	(s1) edge node[el,swap] {$a,0$} (s3)
	(s2) edge [loopright] node[el] {$A,0$} (s2)
	(s3) edge [loopright] node[el] {$A,0$} (s3)
	;

	\node[state,initial above] at (4,0) (qi) {};
	\node[state,right= of qi]  (q) {};
	\node[state,accepting, right= of q] (p) {};
        
        \path
	(qi) edge node[el] {$A,0$} (q)
	(q) edge node[el] {$a,0$} node[el,below] {$b,1$} (p)
	(p) edge [loopabove] node[el] {$A,0$} (p)
        ;
\end{tikzpicture}
\end{center}
\caption{A WA (left) and one of its $k$-delay determinizers (right)}
\label{fig:example2}
\end{figure}
\end{example}

\begin{figure}[h]
\begin{center}
\begin{tikzpicture}[node distance=0.8cm]
	\node[state,initial left] (si) {};
	\node[state,above right= of si,accepting] (s0) {};
	\node[state,below right= of si] (s1) {};
	\node[state,right=1.5cm of si,accepting] (s2) {};

	\path
	(si) edge node[el,align=left] {$a,k$\\$b,-k$} (s0)
	(si) edge node[el,swap,align=left] {$a,-k$\\$b,k$} (s1)
	(s0) edge node[el] {$a,0$} (s2)
	(s1) edge node[el,swap] {$b,0$} (s2)
	;

	\node[state,initial left,right=1cm of s2] (ti) {};
	\node[state,above right= of ti,accepting] (t0) {};
	\node[state,below right= of ti,accepting] (t1) {};
	\node[state,right=1.5cm of ti,accepting] (t2) {};

	\path
	(ti) edge node[el] {$a,k$} (t0)
	(ti) edge node[el,swap] {$b,-k$} (t1)
	(t0) edge node[el,align=left] {$a,0$\\$b,-2k$} (t2)
	(t1) edge node[el,swap,align=left] {$a,0$\\$b,2k$} (t2)
	;
\end{tikzpicture}
\end{center}
\caption{Another WA (left) and one of its $2k$-delay determinizers (right)}
\label{fig:example22}
\end{figure}
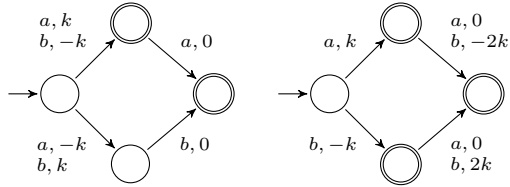

Second, we consider the class $\mathbf{Hom}$ of so-called homomorphic
DWA. Intuitively, any DWA $\calD\in \mathbf{Hom}(\calN)$
maintains a close relation with the structure of $\calN$: the
existence of an homomorphism from $\calD$ to $\calN$. An alternative definition
is that of a $0$-\emph{regret game} \cite{hpr15-journal} played on $\calN$:
\adam chooses input symbols one by one (forming a word $\alpha \in
\lang{\calN}$), while \eve reacts by choosing transitions of $\calN$, thus
constructing a run $\rho$ of ${\calN}$ on the fly (\ie without knowing the full
word $\alpha$ in advance). \eve wins the game if $\rho$ is accepting and its
value is equal to $\inter{\calN}(\alpha)$, \ie $\rho$ is a maximal accepting run
on $\alpha$. Then, any (finite memory) winning strategy for \eve can be seen as
a $\mathbf{Hom}$-determinizer of $\calN$ and conversely.  This generalizes the
notion of good-for-games automata, which do not need to be determinized prior to
being used as observers in a game, from the Boolean setting~\cite{hp06} to the
quantitative one. In some sense, $\mathbf{Hom}$-determinizable WA are ``good for
quantitative games'': when used as an observer in a quantitative game, Eve's
strategy can be applied on the fly instead of determinizing the WA and
constructing the synchronized product of the resulting DWA with the game arena.
This notion has been first introduced in~\cite{akl10} with motivations coming
from the analysis of online algorithms.  In~\cite{akl10},  it was shown that the
$\mathbf{Hom}$-determinization problem is in \P.

\begin{example}\label{ex:regret}
The following WA maps both $ab$ and $aa$ to $0$. It is not 
$\mathbf{Hom}$-determinizable because Eve has to choose whether to go
left or right on reading $a$. If she goes right, then Adam wins by
choosing letter $b$. If she goes left, Adam wins by picking $a$
again. However, it is \emph{almost} $\mathbf{Hom}$-determinizable by the DWA
obtained by removing the right part, in the sense that the function
realized by this DWA is $1$-close from the original one. This
motivates approximate determinization. 
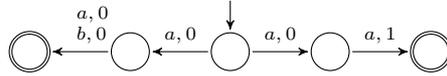
\begin{figure}[h]
\begin{center}
\begin{tikzpicture}[node distance=0.8cm]
	\node[state,initial above] (si) {};
	\node[state,right= of si] (s0) {};
	\node[state,left= of si] (s1) {};
	\node[state,right= of s0,accepting] (s2) {};
	\node[state,left= of s1,accepting] (s3) {};

	\path
	(si) edge node[el] {$a,0$} (s0)
	(si) edge node[el,swap] {$a,0$} (s1)
	(s0) edge node[el] {$a,1$} (s2)
	(s1) edge node[el,swap,align=left,pos=0.3] {$a,0$\\$b,0$} (s3)
	;
\end{tikzpicture}
\end{center}
\caption{A WA that is $\mathbf{Del}_0$-determinizable and 
$(1,\mathbf{Hom})$-determinizable, but not $\mathbf{Hom}$-determinizable}
\label{fig:example3}
\end{figure}

\end{example}

\myparagraph{Approximate determinization} Approximate
determinization of a WA $\calN$ relaxes the determinization problem
to determinizers which do not define
exactly the same function as $\calN$ but approximate it. Precisely, for a class
$\mathfrak{C}$ of DWA, $\calD$ a DWA and $r\in\mathbb{N}$,
we say that $\calD$ is an
$(r,\mathfrak{C})$-determinizer of $\calN$ if
\begin{inparaenum}[$(i)$]
\item $\calD\in \mathfrak{C}(\calN)$,
\item $\lang{\calD} = \lang{\calN}$ and
\item for all words $\alpha \in \lang{\calN}$, $|\inter{\calN}(\alpha)-\inter{\calD}(\alpha)|\leq r$.
\end{inparaenum}
Then, $\calN$ is $(r,\mathfrak{C})$-determinizable if
it admits some $(r,\mathfrak{C})$-determinizer, and it is 
\emph{approximately} $\mathfrak{C}$-determinizable if it is 
$(r,\mathfrak{C})$-determinizable for some $r$.

As Example~\ref{ex:regret} shows, there are WA that are approximately
$\mathbf{Hom}$-determinizable but not $\mathbf{Hom}$-determinizable, making
this notion appealing for the class of homomorphic determinizers. However,
there are classes $\mathfrak{C}$ for which a WA is approximately
$\mathfrak{C}$-determinizable if and only if it is
$\mathfrak{C}$-determinizable, making approximate determinization much less
interesting for such classes. This is the case for classes $\mathfrak{C}$ which
are \emph{complete for determinization} (Theorem~\ref{thm:rclose}), in the sense
that any determinizable WA is also $\mathfrak{C}$-determinizable. Obviously, the
class $\mathbf{DWA}$ is complete for determinization, but we show it is also the
case for the class of bounded-delay determinizers $\mathbf{Del}$
(Theorem~\ref{thm:semi-det}).
Therefore, we study approximate determinization for the class of homomorphic
determinizers only. We call such determinizers $r$-regret
determinizers, building on the regret game analogy given above. Indeed, a WA
$\calN$ is $(r,\mathbf{Hom})$-determinizable if and only if \eve wins the regret game
previously defined, with the following modified winning condition: the run that she
constructs on the fly must be such that
$|\inter{\calN}(\alpha)-\inter{\calD}(\alpha)|\leq r$, for all words $\alpha \in
\lang{\calN}$ that \adam can play.
We say that a WA $\calN$ is approximately $\mathbf{Hom}$-determinizable if
there exists a $r \in \mathbb{N}$ such that $\calN$ is
$(r,\mathbf{Hom})$-determinizable.

\myparagraph{Contributions} We show that the ${\bf Del}_k$-determinization problem
problem is \EXP-complete, even when $k$ is fixed
(Theorems~\ref{thm:kdelay-in-exp}~and~\ref{thm:kdelay-exp-hard}).
We also show that the class $\mathbf{Del}$ is
complete for determinization, \ie any determinizable WA $\calN$
is $k$-delay determinizable for some $k$ (Theorem~\ref{thm:semi-det}). This shows that
solving the $\mathbf{Del}$-determinization problem would solve the
(open) general determinization problem. This also gives a new
(complete) semi-algorithm for determinization, which consists in testing for the
existence of $k$-delay determinizers for increasing values $k$.  
We exhibit a family of bounded-delay determinizable WA, for delays
which depend exponentially on the WA. Despite our efforts, exponential delays
are the highest lower bound we have found. Interestingly, finding higher lower
bounds would lead to a better understanding of the determinization
problem, and proving that one of these lower bound is also an upper bound would
immediately give decidability. 
To decide ${\bf Del}_k$-determinization, we provide
a reduction to $\mathbf{Hom}$-determinization (\ie $0$-regret determinization),
which is known to be decidable in polynomial time~\cite{akl10}. 

We show that the approximate $\mathbf{Hom}$-determinization problem is
decidable in exponential time (Theorem~\ref{thm:existential-regret-prob}), a
problem which was left open in \cite{akl10}. This result is based on a
non-trivial extension to the quantitative setting of a game tool proposed by
Kuperberg and Skrzypczak in~\cite{ks15} for Boolean automata. In particular, our
quantitative extension is based on energy games~\cite{bflms08} while parity
games are sufficient for the Boolean case.  If $r$ is given (in
binary) the $(r,\mathbf{Hom})$-determinization problem is shown to be
\EXP-complete
(Theorems~\ref{thm:regret-upper}~and~\ref{thm:regret-lower-bound}). The hardness
holds even if $r$ is given in unary.  In the course of establishing our results,
we also show that every WA $A$ that is approximately
$\mathbf{Hom}$-determinizable is also {\em exactly} determinizable but there
may not be a homomorphism from a deterministic version of the automaton to the
original one (Lemma~\ref{lem:joker-nec-regret} and
Theorem~\ref{thm:win-joker-det}). Hence, the decision procedure for approximate
$\mathbf{Hom}$-determinizability can also be used as an algorithmically
verifiable sufficient condition for determinizability. 

\myparagraph{Other related works}
In transducer theory, a notion close to the notion of $k$-delay
determinizer has been introduced in \cite{fjlw16}, that of $k$-delay
uniformizers of a transducer. A uniformizer of a transducer $T$ is an
(input)-deterministic transducer such that the word-to-word function it defines
(seen as a binary relation) is included in the relation defined by
$T$, and any of its accepting runs should be $k$-delay close from some
accepting run of $T$. While the notion of $k$-delay uniformizer in
transducer theory is close to the notion of $k$-delay determinizer for
WA, the presence of a max operation in WA makes the $k$-delay
determinization problem conceptually harder. 

\section{Preliminaries}\label{sec:prelims}
We denote by $\mathbb{Z}$ the set of all integers; by $\mathbb{N}$, the set of
all non-negative integers, \ie the natural numbers---including $0$; by
$\mathbb{S}_{\lhd x}$, the subset $\{s \in \mathbb{S} \st s \lhd x\}$ of
any given set $\mathbb{S}$.
Finally, by $\epsilon$ we denote the \emph{empty word} over any alphabet.

\myparagraph{Automata}
A \emph{(non-deterministic weighted finite) automaton} $\calN = (Q, I, A,
\Delta,w, F)$ consists of a finite set $Q$ of
states, a set $I \subseteq Q$ of initial states,
a finite alphabet $A$ of symbols, 
a transition relation $\Delta
\subseteq Q \times A \times Q$, a weight function $w : \Delta \rightarrow
\mathbb{Z}$, and a set $F \subseteq Q$ of final states. By $\maxweight$ we
denote the maximal absolute value of a transition weight, \ie $\maxweight \defeq
\max_{\delta \in \Delta} |w(\delta)|$.
We say $\calN$ is \emph{pair-deterministic} if $|I| = 1$ and for all
$(q,a) \in Q \times A$ we have that $(q,a,q_1), (q,a,q_2) \in \Delta$ implies
$q_1 = q_2$ or $w(q,a,q_1) \neq w(q,a,q_2)$;
\emph{deterministic}, if $|I| = 1$ and
for all $(q,a) \in Q \times A$ we have that $(q,a,q_1), (q,a,q_2) \in \Delta$
implies $q_1 = q_2$.

A \emph{run} of $\calN$ on a word $a_0 \dots a_{n-1} \in A^*$ is a sequence
$\rho = q_0 a_0 q_1 \dots q_{n-1} a_{n-1} q_n \in (Q \cdot A)^*Q$ such that
$(q_i,a_i,q_{i+1}) \in \Delta$ for all $0 \le i < n$.
We say $\rho$ is \emph{initial} if $q_0 \in I$; \emph{final}, if $q_n \in F$;
\emph{accepting}, if it is both initial and final.
The automaton $\calN$ is said to be \emph{trim} if for all states $q \in Q$,
there is a run from a state $q_I \in I$ to $q$ and there is a run from $q$ to
some $q_F \in F$.
The \emph{value of
$\rho$}, denoted by $w(\rho)$, corresponds to the sum of the weights of its
transitions:
\(
	w(\rho) \defeq \sum_{i = 0}^{n-1} w(q_i,a_i,q_{i+1}).
\)

The automaton $\calN$ has the (unweighted) \emph{language}
$\lang{\calN} = \{ \alpha \in A^* \st$ there is an accepting run of
$\calN \text{ on } \alpha \}$ and defines a function $\inter{\calN}
: \lang{\calN} \to \mathbb{Z}$ as follows $\alpha \mapsto \max\{
w(\rho) \st \rho$ is an accepting run of $\calN$ on $\alpha\}$. A run $\rho$ of
$\calN$ on $\alpha$ is said to be \emph{maximal} if $w(\rho) =
\inter{\calN}(\alpha)$.

\myparagraph{Determinization with delay}
Given $k \in \mathbb{N}$ and two automata $\calN = (Q,I, A,
\Delta, w,F)$ and $\calN' =  (Q', I', A, \Delta', w',F')$, we say that
$\calN$ is \emph{$k$-delay-included} (or $k$-included, for short) in
$\calN'$, denoted by $\calN \subseteq_k \calN'$, if for every accepting run
$\rho = q_0 a_0 \dots a_{n-1} q_n$ of $\calN$, there exists an accepting
run $\rho' = q_0' a_0 \dots a_{n-1} q_n'$ of $\calN'$ such that $w'(\rho') =
w(\rho)$, and for every $1 \leq i \leq n$, $|w'(q_0' \dots q_i') - w(q_0 \dots
q_i)| \leq k$. For an automaton $\calN$, we denote by $\mathbf{Del}_k(\calN)$
the set $\{\calD \in \mathbf{DWA} \st 
\calD \subseteq_k \calN\}$.

An automaton $\calN$ is said to be \emph{$k$-delay determinizable} if there
exists an automaton $\calD \in \mathbf{Del}_k(\calN)$ such that $\inter{\calD} =
\inter{\calN}$. Such an automaton is called a \emph{$k$-delay determinizer of
$\calN$}.

\myparagraph{Determinization with regret}
Given two automata $\calN = (Q,I,A,\Delta,w,F)$ and $\calN' =
(Q',I',A,\Delta',w',F')$, a mapping $\mu : Q \to Q'$ from
states in $\calN$ to states in $\calN'$ is a \emph{homomorphism from $\calN$ to
$\calN'$} if $\mu(I) \subseteq I'$, $\mu(F) \subseteq F'$, $\{
\left(\mu(p),a,\mu(q) \right) \st (p,a,q) \in \Delta\} \subseteq \Delta'$, and
$w'(\mu(p),a,\mu(q)) = w(p,a,q)$. For an automaton $\calN$, we denote by
$\mathbf{Hom}(\calN)$ the set of deterministic automata $\calD$ for which there
is a homomorphism from $\calD$ to $\calN$.
The following lemma follows directly from the
preceding definitions.
\begin{lemma}\label{lem:hom-prop}
	For all automata $\calN$, for all $\calD \in \mathbf{Hom}(\calN)$, we
	have that $\lang{\calD} \subseteq \lang{\calN}$ and
	$\inter{\calD}(\alpha) \leq \inter{\calN}(\alpha)$ for all $\alpha \in
	\lang{\calN}$.
\end{lemma}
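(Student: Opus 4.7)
The plan is to exploit the homomorphism $\mu : Q_\calD \to Q_\calN$ to lift accepting runs of $\calD$ into accepting runs of $\calN$ of the same value, and to read off both conclusions from this lifting. Concretely, fix a homomorphism $\mu$ witnessing $\calD \in \mathbf{Hom}(\calN)$ and take an arbitrary accepting run $\rho = q_0 a_0 q_1 \cdots a_{n-1} q_n$ of $\calD$ on some word $\alpha \in \lang{\calD}$. I will show that $\mu(\rho) \defeq \mu(q_0) a_0 \mu(q_1) \cdots a_{n-1} \mu(q_n)$ is an accepting run of $\calN$ on $\alpha$ with $w_\calN(\mu(\rho)) = w_\calD(\rho)$.

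That $\mu(\rho)$ is a run of $\calN$ is immediate from the transition-preservation clause of the definition of homomorphism: each triple $(q_i, a_i, q_{i+1}) \in \Delta_\calD$ yields $(\mu(q_i), a_i, \mu(q_{i+1})) \in \Delta_\calN$. It is initial because $q_0 \in I_\calD$ and $\mu(I_\calD) \subseteq I_\calN$, and final because $q_n \in F_\calD$ and $\mu(F_\calD) \subseteq F_\calN$. The weight-preservation clause $w_\calN(\mu(p),a,\mu(q)) = w_\calD(p,a,q)$ gives $w_\calN(\mu(\rho)) = w_\calD(\rho)$ by summing over the $n$ transitions. Hence whenever $\alpha \in \lang{\calD}$ we have $\alpha \in \lang{\calN}$, proving $\lang{\calD} \subseteq \lang{\calN}$.

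For the value inequality, fix $\alpha \in \lang{\calD}$ (the claim is vacuous on $\lang{\calN} \setminus \lang{\calD}$ where $\inter{\calD}$ is undefined). Since $\calD$ is deterministic, there is a unique accepting run $\rho$ of $\calD$ on $\alpha$, and $\inter{\calD}(\alpha) = w_\calD(\rho)$. Applying the lifting above yields an accepting run $\mu(\rho)$ of $\calN$ on $\alpha$ with $w_\calN(\mu(\rho)) = w_\calD(\rho)$, and since $\inter{\calN}(\alpha)$ is the maximum of $w_\calN$ over all accepting runs of $\calN$ on $\alpha$, we conclude $\inter{\calD}(\alpha) = w_\calN(\mu(\rho)) \leq \inter{\calN}(\alpha)$. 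There is no real obstacle here; the only thing to be careful about is the reading of the second claim as pertaining to $\alpha \in \lang{\calD}$, which is the only case where both sides are defined.
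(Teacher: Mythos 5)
Your proof is correct and is exactly the direct unpacking of the definitions that the paper has in mind (the paper states the lemma "follows directly from the preceding definitions" and gives no further argument). The run-lifting via $\mu$, the preservation of initial/final states, transitions and weights, and the observation that $\inter{\calN}(\alpha)$ is a maximum over all accepting runs are precisely the intended steps.
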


Given $r \in
\mathbb{N}$ and an automaton $\calN$, we say $\calN$ is \emph{$r$-regret
determinizable} if there is a deterministic automaton $\calD$ such that:
\begin{inparaenum}[$(i)$]
\item\label{itm:homo} $\calD \in \mathbf{Hom}(\calN)$,
	\item\label{itm:same-lang} $\lang{\calN} = \lang{\calD}$, and
	\item\label{itm:dist-r} $\sup_{\alpha \in \lang{\calN}}
		\left|\inter{\calN}(\alpha) -
		\inter{\calD}(\alpha)\right| \le r$.
\end{inparaenum}
The automaton $\calD$ is said to be an \emph{$r$-regret determinizer of
$\calN$}.  Note that~(\ref{itm:homo}) implies we can remove the absolute
value in~(\ref{itm:dist-r}) because of Lemma~\ref{lem:hom-prop}.
%
%

\myparagraph{Regret games}
Given $r \in \mathbb{N}$ and an automaton $\calN =
(Q,I,A,\Delta,w,F)$, an \emph{$r$-regret game} is a two-player turn-based game
played on $\calN$ by \eve and \adam. To begin, \eve chooses an initial state.
Then, the game proceeds in rounds as follows. From the current state $q$, \adam
chooses a symbol $a \in A$ and \eve chooses a new state $q'$ (not necessarily a
valid $a$-successor of $q$). After a word $\alpha \in \lang{\calN}$ has been
played by \adam, he may decide to stop the game. At this point \eve loses if the
current state is not final or if she has not constructed a valid run of $\calN$
on $\alpha$. Furthermore, she must pay a (\emph{regret}) value equal to
$\inter{\calN}(\alpha)$ minus the value of the run she has constructed.

Formally, a \emph{strategy for \adam} is a finite word $\alpha \in A^*$ from runs to
symbols and a \emph{strategy for \eve} is a function $\sigma : (Q \cdot A)^* \to
Q$ from state-symbol sequences to states. Given a word (strategy) $\alpha = a_0
\dots a_{n-1}$, we write $\sigma(\alpha)$ to denote the sequence $q_0 a_0 \dots
a_{n-1} q_n$ such that $\sigma(\epsilon) = q_0$ and $\sigma(q_0 a_0 \dots q_i
a_i) = q_{i+1}$ for all $0 \le i < n$. The \emph{regret of $\sigma$} is defined
as follows:
\( 
	\stratregret{\sigma}{\calN} \defeq \sup_{\alpha \in \lang{\calN}}
	\inter{\calN}(\alpha) - \Val(\sigma(\alpha))
\)
where, for all sequences $\rho \in (Q \cdot A)^*Q$, the function $\Val(\rho)$ is
such that $\rho \mapsto  w(\rho)$ if $\rho$ is an accepting run of $\calN$ and
$\rho \mapsto -\infty$ otherwise.  We say \eve wins the $r$-regret game played
on $\calN$ if she has a strategy such that $\stratregret{\sigma}{\calN} \le r$.
Such a strategy is said to be \emph{winning} for her in the regret game.

\myparagraph{Games \& determinization}
A \emph{finite-memory} 
strategy $\sigma$ for \eve in a regret game played on an automaton
$\calN = (Q,I,A,\Delta,w,F)$ is a strategy that can be encoded as a deterministic 
\emph{Mealy machine} $\calM = (S,s_I,A,\lambda_u,\lambda_o)$ where
$S$ is a finite set of (memory) states, $s_I$ is the initial state,
$\lambda_u : S \times A \to S$ is the update function and $\lambda_o : S \times (A
\cup \{\epsilon\}) \to Q$ is the output function. The machine encodes $\sigma$
in the following sense: $\sigma(\epsilon) = \lambda_o(s_I,\epsilon)$ and
$\sigma(q_0 a_0 \dots q_n a_n) = \lambda_o(s_{n}, a_{n})$
where $s_0 = s_I$ and $s_{i+1} = \lambda_u(s_i, a_i)$ for all $0 \le i < n$.  We
then say that $\calM$ \emph{realizes} the strategy $\sigma$ and that $\sigma$
has \emph{memory} $|S|$.  In particular, strategies which have memory $1$ are
said to be \emph{positional} (or \emph{memoryless}).

A finite-memory strategy $\sigma$ for \eve in a regret game played on $\calN$
defines the deterministic automaton $\calN_\sigma$ obtained by taking the
\emph{synchronized product} of $\calN$ and the finite Mealy machine
$(S,s_I,A,\lambda_u,\lambda_o)$ realizing $\sigma$. Formally $\calN_\sigma$
is the automaton $(Q\times S,(\lambda_o(s_I,\epsilon),s_I), A,\Delta',w',F \times S)$
where: $\Delta'$ is the set of all triples $\left( (q,s),a,(q',s') \right)$ such
that $(q,s) \in Q \times S,a \in A$, $s' = \lambda_u(s,a)$, and $q' =
\lambda_o(s,a)$; and $w'$ is such that $\left((q,s),a,(q',s')\right) \mapsto
w(q,a,q')$.

We remark that, for all $r \in \mathbb{N}$, for all finite-memory strategies
$\sigma$ for \eve such that $\stratregret{\sigma}{\calN} \le r$, we have that
$\calN_\sigma$ is an $r$-regret determinizer of $\calN$. Indeed, the desired
homomorphism from $\calN_\sigma$ to $\calN$ is the projection on the first
dimension of $Q \times S$, \ie $(q,s) \mapsto q$. Furthermore, from any
$r$-regret determinizer $\calD$ of $\calN$, it is straightforward to define a
finite-memory strategy for \eve that is winning for her in the $r$-regret game.

\begin{lemma}\label{lem:games-det}
	For all $r \in \mathbb{N}$, an automaton $\calN$ is $r$-regret
	determinizable if and only if there exists a finite-memory strategy
	$\sigma$ for \eve such that $\stratregret{\sigma}{\calN} \le r$.
\end{lemma}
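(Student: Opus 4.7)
The plan is to prove the two implications of the biconditional separately, making heavy use of the synchronized product construction $\calN_\sigma$ already described in the excerpt just above the lemma.

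For the ($\Leftarrow$) direction, suppose a finite-memory strategy $\sigma$ for \eve realized by some Mealy machine $\calM = (S,s_I,A,\lambda_u,\lambda_o)$ satisfies $\stratregret{\sigma}{\calN} \le r$. I would take $\calD \defeq \calN_\sigma$ and verify the three conditions for being an $r$-regret determinizer. Condition~(\ref{itm:homo}) is immediate: the first-component projection $\pi_1 : Q \times S \to Q$, $(q,s) \mapsto q$, is a homomorphism from $\calN_\sigma$ to $\calN$ by the very definition of $\Delta'$ and $w'$, and $\pi_1$ maps the initial and final states correctly. For condition~(\ref{itm:dist-r}), note that by construction the unique run of $\calN_\sigma$ on any word $\alpha$ projects on its first component to $\sigma(\alpha)$, so when $\alpha \in \lang{\calN_\sigma}$ we have $\inter{\calN_\sigma}(\alpha) = w(\sigma(\alpha)) = \Val(\sigma(\alpha))$, and the regret bound gives $\inter{\calN}(\alpha) - \inter{\calN_\sigma}(\alpha) \le r$. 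Condition~(\ref{itm:same-lang}) is the one requiring a small argument: the inclusion $\lang{\calN_\sigma} \subseteq \lang{\calN}$ follows from the homomorphism (Lemma~\ref{lem:hom-prop}), and the converse follows by contraposition from the regret bound, since if some $\alpha \in \lang{\calN}$ were not accepted by $\calN_\sigma$ then $\Val(\sigma(\alpha)) = -\infty$, forcing $\stratregret{\sigma}{\calN} = +\infty > r$.

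For the ($\Rightarrow$) direction, assume $\calN$ admits an $r$-regret determinizer $\calD = (Q_\calD, \{q_\calD^0\}, A, \Delta_\calD, w_\calD, F_\calD)$ with homomorphism $\mu : Q_\calD \to Q$. I would define a Mealy machine $\calM = (Q_\calD, q_\calD^0, A, \lambda_u, \lambda_o)$ whose memory is the state space of $\calD$, with $\lambda_u(q, a)$ defined as the unique $a$-successor of $q$ in $\calD$ (and completed arbitrarily if no such successor exists), $\lambda_o(q_\calD^0, \epsilon) \defeq \mu(q_\calD^0)$, and $\lambda_o(q, a) \defeq \mu(\lambda_u(q, a))$ otherwise. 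Let $\sigma$ be the strategy realized by $\calM$. An easy induction on $|\alpha|$ shows that $\sigma(\alpha)$ is exactly the image under $\mu$ of the unique run of $\calD$ on $\alpha$; since $\mu$ is a homomorphism preserving weights and final states, $\sigma(\alpha)$ is an accepting run of $\calN$ with value $\inter{\calD}(\alpha)$ whenever $\alpha \in \lang{\calD} = \lang{\calN}$. Then $\stratregret{\sigma}{\calN} = \sup_{\alpha \in \lang{\calN}} (\inter{\calN}(\alpha) - \inter{\calD}(\alpha)) \le r$ by condition~(\ref{itm:dist-r}) of the definition of $r$-regret determinizer.

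Neither direction is deep: both are essentially bookkeeping on the product-versus-homomorphism correspondence. The only subtlety worth flagging is the use of the regret bound to upgrade $\lang{\calN_\sigma} \subseteq \lang{\calN}$ to equality in the ($\Leftarrow$) direction, which relies on the fact that a strategy producing an invalid run on some $\alpha \in \lang{\calN}$ is penalized with infinite regret through the $\Val$ convention. Everything else follows directly from the definitions of homomorphism, synchronized product, and finite-memory strategy given in the preliminaries.
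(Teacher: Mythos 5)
Your proposal is correct and follows exactly the route the paper takes (the paper only sketches it in the remark preceding the lemma): the synchronized product $\calN_\sigma$ with the first-coordinate projection as homomorphism for one direction, and a Mealy machine built on the determinizer's state space composed with the homomorphism $\mu$ for the other. Your added detail on upgrading $\lang{\calN_\sigma} \subseteq \lang{\calN}$ to equality via the $-\infty$ penalty in $\Val$ is a correct filling-in of a step the paper leaves implicit.
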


In~\cite{akl10} it was shown that if there exists a $0$-regret strategy for \eve
in a regret game, then a $0$-regret memoryless strategy for her exists as well.
Furthermore, deciding if the latter holds is in \P. Hence,
by Lemma~\ref{lem:games-det} we obtain the following.

\begin{proposition}[From~\cite{akl10}]\label{pro:0regret-easy}
	Determining if a given automaton is $0$-regret determinizable is
	decidable in polynomial time.
\end{proposition}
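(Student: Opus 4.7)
The plan is to combine Lemma~\ref{lem:games-det} with a game-theoretic analysis, following~\cite{akl10}. By Lemma~\ref{lem:games-det}, $\calN$ is $0$-regret determinizable if and only if \eve has a finite-memory strategy $\sigma$ in the $0$-regret game on $\calN$ with $\stratregret{\sigma}{\calN} = 0$, so the task reduces to deciding the winner of this game and extracting a finite-memory winning strategy.

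The first step is to show that positional strategies suffice for \eve. The idea is that if $\sigma$ is realized by a Mealy machine with memory $S$ and $\stratregret{\sigma}{\calN}=0$, then in the product $\calN_\sigma$ any two reachable states $(q,s_1),(q,s_2)$ are ``value equivalent'' in the sense that, for every suffix $\beta$, both $(q,s_1)$ and $(q,s_2)$ attain the maximum value $\max\{w(\rho) \st \rho \text{ is a run of } \calN \text{ from } q \text{ on } \beta\}$. Collapsing the memory component, and keeping, for each $q$, \eve's choice at some canonical reachable $(q,s^*_q)$, therefore yields a positional strategy $\sigma' : Q \times A \to Q$ of regret $0$.

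The second step is a polynomial-time algorithm to check for the existence of a positional $0$-regret strategy. I would set up a two-player game over an arena whose positions are pairs $(q_E,q_A) \in Q \times Q$: the first component is the state of \eve's committed run, and the second is a ``challenger'' state that \adam uses to witness a competing run in $\calN$. In each round, \adam picks a letter $a$ together with an $a$-transition from $q_A$, \eve picks an $a$-transition from $q_E$, and the difference of the two transition weights is added to a running counter. When \adam stops, \eve wins provided she is in a final state whenever \adam is and the cumulative difference is non-positive. A key observation is that it suffices for \adam to maintain a single challenger state rather than the full subset of competitors, because beating every competitor individually is equivalent to beating their maximum. An energy-like argument then caps the counter at $|Q|\cdot\maxweight$ without changing the winner, yielding a game of polynomial size.

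The main obstacle is justifying this single-challenger reduction and the bound on the counter, since it is precisely at this point that the max-plus semantics of $\calN$ is exploited. Once these are in place, standard polynomial-time algorithms for safety/energy games deliver a positional winning strategy for \eve whenever one exists, and via the synchronized-product construction of Lemma~\ref{lem:games-det} this strategy yields a $0$-regret determinizer of polynomial size, completing the proof.
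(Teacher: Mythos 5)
There is a genuine gap, and it sits exactly where you flag "the main obstacle": the single-challenger reduction. Note first that the paper does not prove this proposition at all --- it imports both ingredients (memoryless strategies suffice for $0$-regret, and the resulting check is in \P) from~\cite{akl10}, so you are attempting a proof the paper never gives. Your Step 1 (collapsing the memory of a $0$-regret Mealy machine using the fact that any two reachable $(q,s_1),(q,s_2)$ have maximal continuations on every suffix admitting an accepting run from $q$) is terse but can be made rigorous by an induction on prefixes. Step 2, however, does not work as stated. In your game \adam "picks a letter $a$ together with an $a$-transition from $q_A$", and only then does \eve pick her transition; with that move order \eve wins simply by copying the challenger's transition, so she wins the game even on automata that are not $0$-regret determinizable. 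Concretely, take $q \xrightarrow{a,0} q_1$, $q \xrightarrow{a,0} q_2$, $q_1 \xrightarrow{b,10} f$, $q_1 \xrightarrow{c,0} f$, $q_2 \xrightarrow{b,0} f$, $q_2 \xrightarrow{c,10} f$ with $f$ final: a homomorphic determinizer must commit to $q_1$ or $q_2$ blindly and incurs regret $10$, yet the copying strategy wins your pair game with cumulative difference $0$.

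Even if you repair the move order so that \eve commits to her transition before seeing \adam's challenger move, the deeper problem remains: a winning strategy in the pair game may depend on the challenger's current state, whereas a $\mathbf{Hom}$-determinizer must be oblivious to it. Your justification --- "beating every competitor individually is equivalent to beating their maximum" --- is true for a fixed word and a fixed run of \eve, but it does not show that a strategy winning against each single adaptively-revealed competitor can be converted into one run that beats all competitors simultaneously; when \adam shadows \eve and diverges at step $i$, winning that play only guarantees that \emph{some} continuation of \eve's prefix beats the diverging run, not that \emph{her} blind continuation does. This conversion is precisely the content of the theorem, and it is not a routine energy-game argument: the paper's own single-challenger construction for the analogous question (the Joker game of Section 4) needs the $\joker$/reset edges, the transitivity of $W^{\JG}$ (Lemma~\ref{lem:transitivity}), and Lemma~\ref{lem:good_strat}, and even then it only yields determinizability via the exponential construction of Proposition~\ref{pro:bound-to-det}, not $0$-regret determinizability. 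So the step you defer is the whole proof, and the deferred claim is false in the form you state it.
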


%
%
%

\myparagraph{A sufficient condition for determinizability}
Given $B \in \mathbb{N}$, we say an automaton $\calN$ is $B$-bounded if it is
trim and for every maximal accepting run $\rho_p = p_0 a_0 p_1 \dots a_{n-1}
p_n$ of $\calN$,  for every $0 \leq i \leq n$, and for every initial run $\rho_q
= q_0 a_0 q_1 \dots a_{i-1} q_i$, we have $w(\rho_q) - w(p_0 a_0 p_1 \dots
a_{i-1} p_i) \leq B$.
	
We now prove that, given a $B$-bounded automaton, we are able to build an
equivalent deterministic automaton.

\begin{proposition}\label{pro:bound-to-det}
	Let $B \in \mathbb{N}$ and let $\calN = (Q,I,A,\Delta,w,F)$ be an automaton.
	If $\calN$ is $B$-bounded, then there exists a
	deterministic automaton $\calD$ such that $\inter{\calD} =
	\inter{\calN}$, and whose size and maximal weight are
	polynomial w.r.t. $\maxweight$ and $B$, and exponential w.r.t. $|Q|$.
\end{proposition}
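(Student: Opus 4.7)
The plan is to generalize Mohri's weighted subset construction, using $B$-boundedness both to bound the residual weights that need to be remembered in a deterministic state and to justify discarding states that cannot contribute to any maximal accepting run. For $\alpha \in A^*$ and $q \in Q$, let $M(\alpha, q)$ denote the maximum weight of an initial run on $\alpha$ ending in $q$ (and $-\infty$ if no such run exists); set $M(\alpha) \defeq \max_q M(\alpha, q)$ and define the \emph{residual} of $q$ after $\alpha$ as $c_q(\alpha) \defeq M(\alpha, q) - M(\alpha) \leq 0$. States of $\calD$ will be partial functions $C : Q \to \{-B, -B+1, \ldots, 0\}$ with $\max C = 0$. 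The initial state maps each $q \in I$ to $0$; on reading $a$ at state $C$, compute $M'(q') \defeq \max\{C(q) + w(q,a,q') : q \in \dom(C),\ (q,a,q') \in \Delta\}$ and $m \defeq \max_{q'} M'(q')$, move to the successor $C'$ defined on $\{q' : M'(q') \geq m - B\}$ by $C'(q') \defeq M'(q') - m$, and emit weight $m$ on this transition. Final states of $\calD$ are those whose domain meets $F$; the acceptance bonus $\max\{C(q) : q \in \dom(C) \cap F\}$ can be folded into the last transition by a standard padding trick or by enriching the state with one extra bit of final-weight information.

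The heart of the argument is a pruning lemma that follows directly from $B$-boundedness: whenever a state $q$ occurs at position $|\alpha|$ of some maximal accepting run of $\calN$ on an extension $\alpha \gamma \in \lang{\calN}$, the residual $c_q(\alpha) \geq -B$. Indeed, fix such a maximal run $\rho_p$ with $p_{|\alpha|} = q$, and apply $B$-boundedness to $\rho_p$ together with the best initial run of length $|\alpha|$: this gives $M(\alpha) - w(p_0 a_0 \dots a_{|\alpha|-1} p_{|\alpha|}) \leq B$, and hence $M(\alpha, q) \geq w(p_0 a_0 \dots a_{|\alpha|-1} p_{|\alpha|}) \geq M(\alpha) - B$. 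Contrapositively, any $q'$ pruned at step $|\alpha a|$ (where $M'(q') < m - B$) cannot appear at position $|\alpha a|$ in any maximal accepting run on any extension, so omitting it does not affect $\inter{\calN}$. A routine induction on word length then yields $\inter{\calD}(\alpha) = \inter{\calN}(\alpha)$ for every $\alpha \in \lang{\calN}$.

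The complexity bounds are then immediate. The number of reachable states of $\calD$ is at most $(B+2)^{|Q|}$, exponential in $|Q|$ and polynomial in $B$; and since each $C(q) \in [-B,0]$ and each $|w(q,a,q')| \leq \maxweight$, every emitted weight $m$ satisfies $|m| \leq B + \maxweight$. The main obstacle is justifying the pruning step, \ie verifying that states whose residual would fall below $-B$ can safely be forgotten without losing any maximal accepting run; this is precisely the content of the residual lower bound derived above from $B$-boundedness. Once the pruning lemma is in place, the correctness of the construction follows by a standard induction on the input length.
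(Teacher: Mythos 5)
Your construction is essentially the paper's: a Mohri-style subset construction storing bounded residuals, with $B$-boundedness used exactly as you use it to justify pruning any state whose residual drops below $-B$ (your pruning lemma is the paper's property \prop{3}, proved via the same inequality). The complexity accounting also matches.

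There is, however, one concrete flaw in how you handle acceptance. Your automaton emits at each step the increment of $M(\alpha)$, the best initial run \emph{over all states}, so its cumulative output on $\alpha$ is $M(\alpha)$, whereas the correct value is $\inter{\calN}(\alpha) = M(\alpha) + \max\{C(q) \st q \in \dom(C) \cap F\}$, which can be strictly smaller when the overall-best run ends in a non-final state. You defer this correction to ``a standard padding trick or one extra bit of final-weight information,'' but neither works here: the model has no final weights and no end-of-word marker, so the automaton cannot retroactively adjust a sum it has already emitted on the last transition (it does not know which transition is last), and the needed correction is an integer in $\{-B,\dots,0\}$, not a bit. The paper's fix is to re-base the output: whenever the current subset contains a final state, the emitted weight is chosen so that the cumulative output equals the value of the best \emph{accepting} run on the prefix read so far (and the best run over all states otherwise). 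This makes every accepting prefix already carry the correct value, at the cost of letting non-final states have \emph{positive} residuals, which is why the paper's state space is $(\{-B,\dots,B\}\cup\{-\infty\})^Q$ rather than your $\{-B,\dots,0\}$-valued partial functions. Your argument goes through once you adopt this normalization (or an equivalent telescoping of the final-state bonus into every transition weight); as written, the construction computes the wrong function on words whose best initial run is non-accepting.
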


\begin{proof}[Sketch]
	The result is proved by exposing the construction of the deterministic
	automaton $\calD$, inspired by the \emph{determinization algorithm}
	presented in~\cite{Mohri_FiniteState_CoLing}.  On each input word
	$\alpha \in A^*$, $\calD$ outputs the value of the maximal initial run
	$\rho_\alpha$ of $\calN$ on $\alpha$ (respectively the maximal
	accepting run if $\alpha \in \lang{\calN}$), and keeps track of all the
	other initial runs on $\alpha$ by storing in its state the pairs
	$(q,w_q) \in Q \times \{-B, \ldots, B\}$ such that the maximal initial
	run on $\alpha$ that ends in $q$ has weight $w(\rho_{\alpha}) + w_q$.
	If for some state $q$ the delay $w_q$ gets lower than $-B$, the
	$B$-boundedness assumption allows $\calD$ to drop the corresponding runs
	without modifying the function defined: whenever a run has a delay
	smaller than $-B$ with respect to $\rho_\alpha$, no continuation will
	ever be maximal. This ensures that our construction always yields a
	finite automaton, unlike the determinization algorithm, that does not
	always terminate.	
\end{proof}

\myparagraph{On complete-for-determinization classes}
Given $r \in \mathbb{N}$, a class $\mathfrak{C}$ of DWA, and an automaton
$\calN$, we say $\calN$ is $(r,\mathfrak{C})$-determinizable if there exists
$\calD \in \mathfrak{C}(\calN)$ such that:
\begin{inparaenum}[$(i)$]
	\item $\lang{\calN} = \lang{\calD}$, and
	\item $\sup_{\alpha \in \lang{\calN}}
		\left|\inter{\calN}(\alpha) -
		\inter{\calD}(\alpha)\right| \le r$.
\end{inparaenum}

We will now confirm our claim from the introduction: approximate determinization
is not interesting for some classes.

\begin{proposition}\label{pro:finite-range-bound}
	Let $\calN = (Q,I,A,\Delta,w,F)$ be a trim automaton such that the range
	of $\inter{\calN}$ is included into $\{-B, \ldots, B\}$, for some $B \in
	\mathbb{N}$. Then $\calN$ is determinizable.
\end{proposition}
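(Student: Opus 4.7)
The plan is to reduce to Proposition~\ref{pro:bound-to-det} by exhibiting a $B' \in \mathbb{N}$ such that $\calN$ itself is $B'$-bounded, and then invoking that proposition to obtain a deterministic automaton equivalent to $\calN$.

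The key observation I would first establish is that the underlying graph of $\calN$ admits no positive-weight cycle. Indeed, if a cycle labeled by some word $v$ around a state $q'$ had total weight $w_C > 0$, then by trimness one finds words $u, w$ labeling, respectively, an initial run to $q'$ of weight $a$ and a run from $q'$ to a final state of weight $b$; iterating the cycle would give $\inter{\calN}(uv^k w) \geq a + k w_C + b$, which tends to $+\infty$, contradicting the range hypothesis. A standard cycle-removal argument then shows that any run between two given states can be shortened to a simple path of no smaller weight, so the maximal weight of any run from one state to another is at most $(|Q|-1)\maxweight$.

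With this in hand, I would then control prefix weights. For an initial run $\rho_q$ ending at a state $q_i$, trimness provides a simple extension to a final state of weight at least $-(|Q|-1)\maxweight$; the resulting accepting run has value at most $B$, so $w(\rho_q) \leq B + (|Q|-1)\maxweight$. Dually, for a maximal accepting run $\rho_p$ on a word $\alpha$ and any prefix length $i$, the suffix from $p_i$ onwards has weight at most $(|Q|-1)\maxweight$, whence $w(p_0 a_0 \dots a_{i-1} p_i) \geq \inter{\calN}(\alpha) - (|Q|-1)\maxweight \geq -B - (|Q|-1)\maxweight$. Subtracting these bounds gives $w(\rho_q) - w(p_0 a_0 \dots a_{i-1} p_i) \leq 2B + 2(|Q|-1)\maxweight$, so $\calN$ is $B'$-bounded for $B' = 2B + 2(|Q|-1)\maxweight$.

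The main conceptual obstacle is the no-positive-cycle observation, which is precisely what converts the global finite-range hypothesis into a local structural property usable for bounding runs. Once this is in place, the remaining prefix-weight inequalities are routine simple-path arguments, and Proposition~\ref{pro:bound-to-det} then directly yields an equivalent deterministic automaton.
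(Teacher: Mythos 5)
Your proof is correct, and it follows the same overall strategy as the paper: show that $\calN$ is $B'$-bounded for an explicit $B'$ and then invoke Proposition~\ref{pro:bound-to-det}. The one place where you diverge is in how the weight of the suffix $p_i a_i \dots a_{n-1} p_n$ of the maximal run is bounded from above. You first prove the global structural lemma that $\calN$ has no positive-weight cycle (by pumping and trimness), deduce via cycle removal that every run between two states is weight-dominated by a simple path, and conclude that the suffix weighs at most $(|Q|-1)\maxweight$. The paper avoids this detour entirely: it prepends to the suffix a shortest (in edge count) initial run reaching $p_i$, which by trimness has weight at least $-|Q|\maxweight$, and uses the range hypothesis on the resulting accepting run to get $w(p_i a_i \dots a_{n-1} p_n) \le B + |Q|\maxweight$. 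Your route buys a marginally tighter constant ($2B + 2(|Q|-1)\maxweight$ versus the paper's $3B + 2|Q|\maxweight$) and isolates a reusable structural fact about bounded-range automata; the paper's argument is shorter and stays entirely local, never needing the pumping/cycle-removal machinery. Either constant is immaterial for the application, since Proposition~\ref{pro:bound-to-det} only requires some finite bound.
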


\begin{proof}
	Let $\rho_p = p_0 a_0 p_1 \dots a_{n-1} p_n$ be a maximal accepting run
	of $\calN$ and let $\rho_q = q_0 a_0 q_1 \dots a_{i-1} q_i$ be an
	initial run of length $i\leq n$.  We define $\rho_p^i = p_0 a_0 p_1
	\dots a_{i-1} p_i$ for $i\leq n$.

	By assumption, we have $w(\rho_p)\geq -B$.  By trimness assumption, the
	state $q_i$ can reach a final state and we have
	$w(\rho_q)-|Q|\maxweight\leq B$ otherwise there would be an accepting
	run of value greater than $B$.  Similarly, since state $p_i$ can be
	reached from an initial state, we have $-|Q|\maxweight +a \leq B$, with
	$a=w( p_i a_i \dots a_{n-1} p_n)=w(\rho_p)-w(\rho_p^i)$.  By combining
	the three constraints, we obtain: $w(\rho_q)-|Q|\maxweight
	-|Q|\maxweight +a -w(\rho_p)\leq 3B$ which, once rearranged, yields:
	$w(\rho_q) -w(\rho_p^i) \leq 3B + 2|Q|\maxweight$.  Thus, $\calN$ is
	$(3B + 2|Q|\maxweight)$-bounded and determinizable (by
	Proposition~\ref{pro:bound-to-det}).
\end{proof}

Recall that a class $\mathfrak{C}$ of DWA is complete for determinization if any
determinizable automaton is also $\mathfrak{C}$-determinizable.
\begin{theorem}\label{thm:rclose}
	Given a complete-for-determinization class $\mathfrak{C}$ of DWA, an
	automaton $\calN$ is $(r,\mathfrak{C})$-determinizable, for some $r \in
	\mathbb{N}$, if and only if it is $\mathfrak{C}$-determinizable.
\end{theorem}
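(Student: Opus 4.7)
The $(\Leftarrow)$ direction is immediate, since any $\mathfrak{C}$-determinizer of $\calN$ is trivially a $(0,\mathfrak{C})$-determinizer. I focus on the non-trivial $(\Rightarrow)$ direction. The plan is to use the $(r,\mathfrak{C})$-determinizer only as an auxiliary object that lets me show $\calN$ is determinizable (in the unconstrained sense), and then invoke completeness of $\mathfrak{C}$ to upgrade this to $\mathfrak{C}$-determinizability. The algebraic idea is the decomposition $\inter{\calN} = \inter{\calD} + (\inter{\calN} - \inter{\calD})$: the first summand is already computed by the given $\calD$, and the second summand has finite range, so I can realize it with a DWA using Proposition~\ref{pro:finite-range-bound}; summing the two deterministic automata then produces a DWA equivalent to $\calN$.

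Concretely, fix an $(r,\mathfrak{C})$-determinizer $\calD = (Q',\{q'_I\},A,\Delta',w',F')$ of $\calN = (Q,I,A,\Delta,w,F)$ and consider the synchronized product $\calP$ on state set $Q \times Q'$, with initial states $I \times \{q'_I\}$, final states $F \times F'$, transitions $((p,q),a,(p',q'))$ whenever $(p,a,p') \in \Delta$ and $(q,a,q') \in \Delta'$, and weight function $w_\calP((p,q),a,(p',q')) \defeq w(p,a,p') - w'(q,a,q')$. Because $\calD$ is deterministic and $\lang{\calD} = \lang{\calN}$, the accepting runs of $\calP$ over an input $\alpha$ are in bijection with the accepting runs of $\calN$ over $\alpha$ (the $\calD$-component being forced). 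Hence $\lang{\calP} = \lang{\calN}$ and, for every $\alpha \in \lang{\calN}$,
\[
  \inter{\calP}(\alpha) \;=\; \max_{\rho}\bigl(w(\rho) - w'(\rho_\calD)\bigr) \;=\; \inter{\calN}(\alpha) - \inter{\calD}(\alpha),
\]
so $|\inter{\calP}(\alpha)| \leq r$ for all $\alpha \in \lang{\calP}$. After trimming, $\calP$ satisfies the hypotheses of Proposition~\ref{pro:finite-range-bound} with $B = r$, yielding a DWA $\calD_\Delta$ with $\inter{\calD_\Delta} = \inter{\calP}$.

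Finally, take the synchronized product $\calD_\Sigma$ of $\calD_\Delta$ and $\calD$ with weights summed. This is again deterministic, has language $\lang{\calN}$, and satisfies $\inter{\calD_\Sigma}(\alpha) = \inter{\calD_\Delta}(\alpha) + \inter{\calD}(\alpha) = \inter{\calN}(\alpha)$ for all $\alpha \in \lang{\calN}$. Thus $\calN$ is determinizable, and by hypothesis on $\mathfrak{C}$ being complete for determinization, $\calN$ is $\mathfrak{C}$-determinizable.

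The main conceptual step is the product construction together with the observation that the $\max$ over runs of $\calN$ commutes with subtracting the fixed value $w'(\rho_\calD)$, which is what turns the pointwise bound $|\inter{\calN} - \inter{\calD}| \leq r$ into a genuine range bound on the automaton $\calP$. Everything after that is routine: trimming preserves the function, Proposition~\ref{pro:finite-range-bound} does the heavy lifting for the difference automaton, and summing two DWA pointwise is just a deterministic product.
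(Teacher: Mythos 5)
Your proof is correct and follows essentially the same route as the paper: form the product of $\calN$ and $\calD$ with difference weights, observe its range is bounded by $r$, apply Proposition~\ref{pro:finite-range-bound} to the trimmed product, and recombine with $\calD$ by a summed deterministic product before invoking completeness of $\mathfrak{C}$. Your write-up merely spells out details the paper leaves implicit (the bijection of runs forced by determinism of $\calD$, and why the final sum is again deterministic).
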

\begin{proof}
	If $\calN$ is determinizable, then in particular it is
	$(r,\mathfrak{C})$-determinizable for any $r$.  Conversely, let us
	assume that $\calD$ is an $(r,\mathfrak{C})$-determinizer of
	$\calN$, for some $r$.

	Then one can construct an automaton $\calM$ such that $\inter \calM
	=\inter\calN -\inter\calD$ by taking the product of $\calN$ and $\calD$
	with transitions weighted by the difference of the weights of $\calN$
	and $\calD$. Since $\calD$ is $r$-close to $\calN$, the range of $\calM$
	is included in the set $\{ -r,\ldots ,r \}$.  This means, according to
	Propositions~\ref{pro:finite-range-bound},
	that $\calM$ (once trimmed) is
	determinizable and that one can construct a deterministic automaton
	realizing $\inter\calD +\inter\calM=\inter \calN$. Since
	$\mathfrak{C}$ is complete for determinization, the result follows.
\end{proof}

\section{Deciding $k$-delay determinizability}
\label{sec:kdelay}
In this section we prove that deciding $k$-delay determinizability is
\EXP-complete.
First, however, we show that $k$-delay determinization is complete for
determinization: if a given automaton is determinizable, then there is a $k$
such that it is $k$-delay determinizable as well. Hence, exposing an upper bound
for $k$ would lead to an algorithm for the general determinizability problem. We
also give a family of automata for which an exponential delay is required.

\subsection{Completeness for determinization}
\label{sec:semialgo}

\begin{theorem}\label{thm:semi-det}
	If an automaton $\calN$ is determinizable, then there exists $k
	\in \mathbb{N}$ such that $\calN$ is $k$-delay determinizable.
\end{theorem}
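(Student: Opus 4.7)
The plan is to show that any deterministic automaton $\calD$ equivalent to $\calN$ (which exists by hypothesis) is itself a $k$-delay determinizer of $\calN$, for some $k$ depending on the pair $(\calN, \calD)$. No new automaton construction is required: it suffices to exhibit a $k$ large enough so that the existing $\calD$ already lies in $\mathbf{Del}_k(\calN)$. I fix a trim DWA $\calD$ with $\inter{\calD} = \inter{\calN}$ and assume $\calN$ trim. Write $V_\calD(u)$ for the value of the unique $\calD$-run on $u$, $M(p, u)$ for the maximum value of an initial $\calN$-run on $u$ ending at $p$, and symmetrically $V_\calD^q(v)$ and $B(p, v)$ for the value of $\calD$'s run on $v$ starting from $q$ (if accepting, else $-\infty$) and the maximum value of a $\calN$-run on $v$ from $p$ to a final state.

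The central lemma I would prove is that, for every co-reachable pair $(p, q) \in Q_\calN \times Q_\calD$, both suprema
\[
  \Delta^{\mathrm{pre}}(p, q) = \sup_u \bigl(M(p, u) - V_\calD(u)\bigr), \quad \Delta^{\mathrm{suf}}(p, q) = \sup_v \bigl(B(p, v) - V_\calD^q(v)\bigr)
\]
are finite. For the prefix bound I would use trimness of $\calN$ to pick any $v$ giving an accepting $\calN$-run from $p$ of value $B(p, v)$; then $\inter{\calN}(uv) \geq M(p, u) + B(p, v)$, while by equivalence $\inter{\calN}(uv) = V_\calD(u) + V_\calD^q(v)$, which rearranges to $M(p, u) - V_\calD(u) \leq V_\calD^q(v) - B(p, v)$, a constant in $u$. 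The suffix bound is symmetric, fixing a common $u$ reaching $(p, q)$ and varying $v$.

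Set $k = \max_{(p, q)} \max\bigl(\Delta^{\mathrm{pre}}(p, q), \Delta^{\mathrm{suf}}(p, q)\bigr)$, finite since $Q_\calN \times Q_\calD$ is. For any accepting $u$ and any maximal run $\rho^* = p_0 a_0 \dots a_{n-1} p_n$ of $\calN$ on $u$, optimality of $\rho^*$ forces both $w(\rho^*|_i) = M(p_i, u|_i)$ and $B(p_i, u[i{:}]) = \inter{\calN}(u) - w(\rho^*|_i)$ at every index $i$ (otherwise a prefix- or suffix-swap would strictly raise $w(\rho^*)$). Combined with $V_\calD(u) = \inter{\calN}(u)$, the two gap bounds then give both $w(\rho^*|_i) - V_\calD(u|_i) \leq k$ and $V_\calD(u|_i) - w(\rho^*|_i) = B(p_i, u[i{:}]) - V_\calD^{q_{u|_i}}(u[i{:}]) \leq k$, hence $|w(\rho^*|_i) - V_\calD(u|_i)| \leq k$. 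The unique accepting $\calD$-run on each $u$ is therefore matched in $\calN$ by $\rho^*$ with identical label and value and prefix sums within $k$, which is precisely $\calD \in \mathbf{Del}_k(\calN)$.

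The only non-trivial step is the finiteness of $\Delta^{\mathrm{pre}}$ and $\Delta^{\mathrm{suf}}$. The argument is short but hinges entirely on the equivalence $\inter{\calD} = \inter{\calN}$; without that hypothesis, co-reachable delays can easily be unbounded via pumpable cycles and the conclusion fails. Everything after that is a routine decomposition of the optimality identity for $\rho^*$ around each prefix length, combined with the two gap bounds.
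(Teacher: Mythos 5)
Your proof is correct, and at the top level it shares the paper's strategy: both arguments show that the assumed equivalent DWA $\calD$ already lies in $\mathbf{Del}_k(\calN)$ for a suitable $k$, so no new construction is needed. Where you genuinely diverge is in how the delay bound is established. The paper argues by contradiction and splits on the sign of the prefix gap: when a maximal $\calN$-run gets far ahead of $\calD$'s run, it appends a short accepting suffix to contradict $\inter{\calD}=\inter{\calN}$ (this is essentially your $\Delta^{\mathrm{pre}}$ bound made quantitative), and when $\calD$'s run gets far ahead, it runs a pigeonhole argument on $Q\times Q'$ and pumps down a loop on which the two runs accumulate different weight, producing a word where $\calD$ under-approximates $\inter{\calN}$. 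You avoid the pumping entirely: your $\Delta^{\mathrm{suf}}$ bound fixes a single witness word $u$ reaching the pair $(p,q)$ and applies equivalence to all concatenations $uv$, and the optimality identities $w(\rho^*|_i)=M(p_i,u|_i)$ and $B(p_i,u[i{:}])=\inter{\calN}(u)-w(\rho^*|_i)$ do the rest --- a Lipschitz/twinning-style decomposition rather than a pumping argument. What each buys: the paper's pumping yields the explicit bound $k=|Q||Q'|(\maxweight+\maxweight')$ for free, which is relevant to the paper's subsequent discussion of how large the delay must be; your suprema are finite but not explicitly bounded as written, although restricting the witness words ($|v|\le|Q|$ for $\Delta^{\mathrm{pre}}$, $|u|\le|Q||Q'|$ for $\Delta^{\mathrm{suf}}$, both via trimness) would recover a bound of the same order. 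Two cosmetic points that do not affect correctness: what you call ``co-reachable'' should be ``jointly reachable by a common input word'', and you should take $k$ to be the maximum of your suprema and $0$.
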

\begin{proof}
	We proceed by contradiction. Suppose $\calN = (Q,I,A,\Delta,w,F)$ is
	determinizable. Denote by $\calD = (Q',\{q_I'\},A,\Delta',w',F')$ a
	deterministic automaton such that $\inter{\calD} =
	\inter{\calN}$. Let us assume, towards a contradiction,
	that for all $k \in \mathbb{N}$ there is
	no deterministic automaton $\calE$ such that $\calE \subseteq_k \calN$
	and $\inter{\calD} = \inter{\calE}$.
	In particular,
	we have that $\calD \not\subseteq_\chi \calN$ for $\chi \defeq
	|Q||Q'|(\maxweight + \maxweight')$. This means that there is a word
	$\alpha = a_0 \dots a_{n-1} \in \lang{\calN}$ such that for a maximal
	accepting run $\rho = q_0 a_0 \dots a_{n-1} q_n$ of $\calN$ on $\alpha$
	it holds that
	\begin{equation}\label{eqn:bigdiff-abs}
		\left| w(q_0 a_0\dots a_{\ell-1} q_\ell) -
		w'(q'_0 a_0 \dots a_{\ell-1} q'_\ell) \right| > \chi
	\end{equation}
	for some $0 \le \ell \le n$ and $q'_0 a_0 \dots a_{n-1} q'_n$ the
	unique initial run of $\calD$ on $\alpha$.
	We consider the two possibilities.
	
	Suppose that $w(q_0 a_0\dots a_{\ell-1} q_\ell) - w'(q'_0 a_0 \dots
	a_{\ell-1} q'_\ell) > \chi$.  Then, at least one final state $q_n$ is
	reachable in $\calN$ from $q_\ell$, and the shortest path to it consists
	of at most $|Q|$ transitions. Since $\chi \ge |Q|(\maxweight +
	\maxweight')$, $\calD$ does not realize the same function as $\calN$,
	which contradicts our hypothesis. 
	
	Suppose that
	\(
		w'(q'_0 a_0 \dots a_{\ell-1} q'_\ell) -
		w(q_0 a_0 \dots a_{\ell-1} q_\ell) > \chi.
	\)
	Using the fact that $\chi = |Q||Q'|(\maxweight + \maxweight')$, we
	expose a loop that can be pumped down to present a word mapped to
	different values by $\calD$ and $\calN$.  For every $0 \leq j \leq
	|Q||Q'|$, let $0 \leq i_j \leq \ell$ denote the minimal integer
	satisfying $w'(q'_0  a_0 \dots a_{i_j-1}q'_{i_j}) - w(q_0 a_0 \dots
	a_{i_j-1}q_{i_j}) \geq j(\maxweight + \maxweight')$.  Then there exist
	$0 \leq j < k \leq |Q||Q'|$ such that $q_{i_j} = q_{i_k}$, $q'_{i_j} =
	q'_{i_k}$.  Moreover, since $\maxweight$ and $\maxweight'$ correspond to
	the maximal weights of $\calN$ and $\calD$ respectively, $i_j \neq i_k$
	holds, and
	\(
		w(q_{i_j}a_{i_j} \dots a_{i_k-1}q_{i_k}) > w'(q'_{i_j} a_{i_j}
		\dots a_{i_k-1}q'_{i_k}).
	\)
	Since $\calD$ is deterministic, it assigns a strictly lower value than
	$\inter{\calN}$ to the word $a_0 \dots a_{i_j-1} a_{i_k} \dots a_{n-1}$,
	which contradicts our assumption that $\calD$ realizes the same function
	as $\calN$.
\end{proof}


Although we do not have an upper bound on the
$k$ needed for a determinizable automaton to be $k$-delay determinizable, we are
able to provide an exponentially large lower bound.

\begin{proposition}\label{pro:exp-delay-necessary}
	Given an automaton $\calN = (Q,I,A,\Delta,w,F)$, a delay $k$ as big as
	$2^{\mathcal{O}(|Q|)}$ might be needed for it to be $k$-delay
	determinizable.
\end{proposition}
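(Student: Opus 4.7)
The plan is to exhibit a family of automata $(\calN_n)_{n \geq 1}$ with $|Q_n| = O(n)$ states that are determinizable but for which any $k$-delay determinizer requires $k \geq 2^n$. The construction mimics the example of Figure~\ref{fig:example2}, except that the symbol disambiguating between the two nondeterministic branches is delayed by $n$ positions, which forces the required delay to grow exponentially in $n$.

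Concretely, fix $M = 2^n$, let $A = \{0, 1, \#\}$, and set $Q_n = \{q_I, p_A^1, \ldots, p_A^n, p_B^1, \ldots, p_B^n, q_F\}$, so $|Q_n| = 2n+2$, with $q_I$ initial and $q_F$ the unique final state. From $q_I$, on each $u \in \{0,1\}$, there are two nondeterministic transitions: one to $p_A^1$ with weight $+M$ and one to $p_B^1$ with weight $-M$. For $1 \leq i < n$ and both $X \in \{A,B\}$, the transitions $p_X^i \to p_X^{i+1}$ on $\{0, 1\}$ carry weight $0$, with a single exception: the transition $p_B^{n-1} \to p_B^n$ on $1$ has weight $+4M$. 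Finally $p_A^n \to q_F$ and $p_B^n \to q_F$ are both available on $\#$ with weight $0$. Then $\lang{\calN_n} = \{u\# : u \in \{0,1\}^n\}$, and for $u = u_1 \ldots u_n$, branch $A$ yields total weight $M$ while branch $B$ yields $-M$ if $u_n = 0$ and $3M$ if $u_n = 1$. Hence $\inter{\calN_n}(u\#)$ equals $M$ or $3M$, depending only on $u_n$.

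Determinizability of $\calN_n$ follows directly from Proposition~\ref{pro:finite-range-bound}: $\calN_n$ is trim and its range is contained in $\{M, 3M\}$. For the lower bound, let $\calD$ be any $k$-delay determinizer of $\calN_n$, and fix any prefix $p \in \{0,1\}^{n-1}$. Let $V_p$ denote the running weight of $\calD$'s unique run after reading $p$; since $\calD$ is deterministic, $V_p$ is well-defined and independent of what follows $p$. On input $p0\#$, the unique maximal accepting run of $\calN_n$ traverses branch $A$ with running weight $+M$ at position $n-1$, so the $k$-delay condition gives $|V_p - M| \leq k$. On input $p1\#$, the unique maximal accepting run traverses branch $B$ with running weight $-M$ at position $n-1$, so $|V_p + M| \leq k$. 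By the triangle inequality, $2M \leq |V_p - M| + |V_p + M| \leq 2k$, whence $k \geq M = 2^n = 2^{\Omega(|Q_n|)}$.

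The main step to verify carefully is that for every prefix $p \in \{0,1\}^{n-1}$ there exist both ``branch-$A$-winning'' and ``branch-$B$-winning'' extensions in $\lang{\calN_n}$, which is immediate since both $p0\#$ and $p1\#$ belong to the language. The only mildly subtle point is ensuring that the symmetry of the construction keeps the $2M$ gap between the two branches' running weights throughout the prefix, so that the triangle inequality delivers the exponential lower bound; this is automatic from the weight assignment. No serious obstacle is anticipated.
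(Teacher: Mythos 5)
Your construction and the inequality you derive from it are formally correct: for any prefix $p \in \{0,1\}^{n-1}$ the partial sum $V_p$ of the deterministic run must be $k$-close to branch $A$'s partial sum $+M$ (forced by the word $p0\#$, whose unique maximal run is branch $A$) and also $k$-close to branch $B$'s partial sum $-M$ (forced by $p1\#$), so the triangle inequality gives $k \ge M = 2^n$ while $|Q_n| = 2n+2$; together with Proposition~\ref{pro:finite-range-bound} and Theorem~\ref{thm:semi-det} this literally instantiates the statement. The problem is that the exponential comes entirely from the fact that you placed weights of magnitude $2^n$ on the transitions, not from the structure of the automaton. Your lower bound is $k \ge \maxweight/4$, and the $n$-step postponement of the disambiguating letter contributes nothing to its magnitude: the same bound would follow if the two branches merged after two letters, which is exactly the left automaton of Figure~\ref{fig:example2} with its parameter set to $2^n$. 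So the mechanism you invoke in your plan --- that delaying the disambiguation by $n$ positions is what forces the delay to grow exponentially --- is not what is actually driving your bound.

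This is a genuine shortfall relative to what the proposition is meant to establish. The paper's proof uses the $j$-pair language: an automaton with $3n+1$ states over the alphabet $\{1,\dots,n\}$ and \emph{weights in $\{-1,0\}$} (Figure~\ref{fig:length-jpair}) computing $f(\alpha)=0$ if $\alpha$ contains a $j$-pair and $-|\alpha|$ otherwise. Since there is a word $\alpha$ of length $2^n-1$ with no $j$-pair while every one-letter extension $\alpha a$ has one (Lemma~\ref{lem:props-jpair}), any determinizer outputs $1-2^n$ on $\alpha$, whereas every maximal accepting run of $\calN$ on $\alpha a$ has all partial sums equal to $0$; this forces a delay of $2^n-1$ with $\maxweight=1$ (Lemma~\ref{lem:high-delay}). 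That version rules out any bound on the required $k$ that is polynomial in $|Q|$ and $\maxweight$, which is the relevant question for the semi-algorithm discussed after Theorem~\ref{thm:semi-det}. Your construction, by contrast, is perfectly consistent with $k \le \maxweight$ always sufficing, and so it provides no evidence against such a polynomial bound; it proves the literal wording of the proposition only by making the weights, rather than the combinatorics, exponential in $|Q|$.
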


To prove the above proposition we will make use of the language of words with a
\emph{$j$-pair}~\cite{kz15}.

\myparagraph{Words with a $j$-pair}
Consider the alphabet $A = \{1,\dots,n\}$.  Let $\alpha = a_0 a_1 \dots
\in A^*$ and $j \in A$. A \emph{$j$-pair} is a pair of positions
$i_1 < i_2$ such that $a_{i_1} = a_{i_2} = j$ and $a_k \le j$, for all
$i_1 \le k \le i_2$.

\begin{lemma}\label{lem:props-jpair}
	For all $j \in A$:
	\begin{inparaenum}[$(i)$]
		\item for all $\alpha \in A^*$, if $\alpha$ contains no
			$j$-pair, then $|\alpha| < 2^{n}$;
		\item for all $j \in A$, there exists $\alpha \in A^*$ such that
			$|\alpha| = 2^{n} - 1$ and $\alpha$ contains no
			$j$-pair.
	\end{inparaenum}
\end{lemma}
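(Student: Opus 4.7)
The plan is induction on the alphabet size $n = |A|$. For part (i), the core structural observation is that a word $\alpha$ that avoids $j$-pairs (over all relevant $j \in A$, as required for a meaningful exponential length bound) can contain the maximal letter $n$ at most once: any two occurrences of $n$ are automatically an $n$-pair, since every intermediate letter is $\le n$ by the definition of $A$. This lets me decompose $\alpha$ uniquely as either $\beta$ or $\beta \cdot n \cdot \gamma$ with $\beta, \gamma \in \{1, \ldots, n-1\}^*$. The pair-avoidance condition passes to each factor, because any $j$-pair internal to $\beta$ or $\gamma$ lifts verbatim to a $j$-pair in $\alpha$. So the inductive hypothesis yields $|\beta|, |\gamma| \le 2^{n-1} - 1$, and summing gives $|\alpha| \le 2(2^{n-1} - 1) + 1 = 2^n - 1 < 2^n$. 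The base case $n = 1$ is immediate, since on the unary alphabet any two positions form a $1$-pair, so avoiding $1$-pairs forces $|\alpha| \le 1$.

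For part (ii), I would construct an extremal witness recursively by setting $\alpha_1 = 1$ and $\alpha_n = \alpha_{n-1} \cdot n \cdot \alpha_{n-1}$. The length recurrence $|\alpha_n| = 2|\alpha_{n-1}| + 1$ solves to $|\alpha_n| = 2^n - 1$. To check pair-avoidance: the unique occurrence of $n$ rules out $n$-pairs, and for any $j < n$, two occurrences of $j$ in $\alpha_n$ either lie in the same copy of $\alpha_{n-1}$ (no $j$-pair by the inductive hypothesis) or lie in distinct copies, in which case the central $n > j$ separates them and breaks the pair condition. The same $\alpha_n$ therefore witnesses part (ii) for every choice of $j$ simultaneously.

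The main subtlety is in correctly scoping the hypothesis of part (i). No finite length bound can follow from avoidance of a single fixed $j$-pair alone: the word $1^{2^n}$ trivially contains no $n$-pair, yet has length $\ge 2^n$. So the natural reading of the hypothesis, and the one used in the Kirsten--Zhou pumping framework~\cite{kz15}, interprets the avoidance uniformly over $j \in A$. This is precisely the condition my decomposition exploits, and the induction on $n$ then passes cleanly through the split at the maximal letter, matching the extremal length $2^n - 1$ witnessed in part (ii).
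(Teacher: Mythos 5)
Your proof is correct. For part (ii) you use exactly the paper's construction ($\alpha_1 = 1$, $\alpha_n = \alpha_{n-1}\, n\, \alpha_{n-1}$), only spelling out the verification that the paper dismisses as ``easy to verify''. For part (i) the paper simply cites Theorem 1 of Klein and Zimmermann~\cite{kz15} (a small slip on your side: the authors of \cite{kz15} are Klein and Zimmermann, not Kirsten and Zhou), whereas you give a self-contained induction on $n$: the maximal letter can occur at most once, so the word factors as $\beta$ or $\beta\, n\, \gamma$ with $\beta,\gamma$ over $\{1,\dots,n-1\}$, pair-avoidance restricts to each factor, and the lengths sum to at most $2(2^{n-1}-1)+1 = 2^n-1$. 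This is in substance the standard proof of the cited theorem, so the mathematics coincides; what your version buys is that the lemma no longer rests on an external reference, and it makes transparent why the bound $2^n-1$ in part (i) is exactly matched by the witness of part (ii). Your remark on the quantification over $j$ is also well taken: read literally (``for all $j$: if $\alpha$ contains no $j$-pair then $|\alpha|<2^n$''), part (i) is false, since $1^{2^n}$ contains no $n$-pair; the reading you adopt---no $j$-pair for any $j\in A$---is the one the paper actually relies on in Lemma~\ref{lem:high-delay}, where a word of length $2^n$ is claimed to contain a $j$-pair for some $j\in A$.
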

\begin{proof}
	A proof of the first claim is given by Klein and Zimmermman
	in~\cite{kz15} (Theorem $1$).

	To show the second claim holds as well, we can inductively construct a
	word with the desired property. As the base case, consider $\alpha_1 = 1$.
	Thus, for some $i$, there is $\alpha_i$ which contains no $j$-pair,
	contains no letter bigger than $i$, and is of length $2^i - 1$. For the
	inductive step, we let $\alpha_i = \alpha_{i-1} i \alpha_{i-1}$. It is
	easy to verify that the properties hold once more.
\end{proof}

We will now focus on the function $f : A^* \to \mathbb{Z}$, which maps a word
$\alpha$ to $0$ if it contains a $j$-pair and to $-|\alpha|$ otherwise.
Fig.~\ref{fig:length-jpair} depicts
the automaton $\calN$ realizing $f$ with $3n + 1$ states.
Proposition~\ref{pro:exp-delay-necessary} then follows from the following
result.

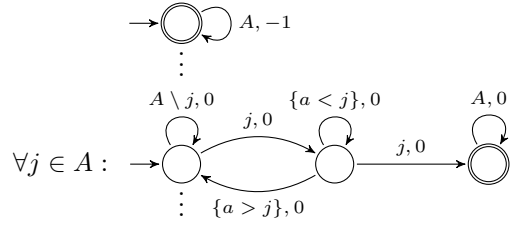
\begin{figure}
\begin{center}
\begin{tikzpicture}

	\node[state,initial left,accepting] (main) {};
	\node[below=-0.2cm of main] (dots1) {$\vdots$};
	\node[state,initial left,below=0.8cm of dots1,
		label={[label distance=0.6cm]left:$\forall j \in A:$}] (initj) {};
	\node[state,right= of initj] (midj) {};
	\node[state,right= of midj,accepting] (finalj) {};
	\node[below=-0.2cm of initj] (dots2) {$\vdots$};

	\path
	(main) edge[loopright] node[el] {$A,-1$} (main)
	(initj) edge[loopabove] node[el] {$A \setminus j, 0$} (initj)
	(initj) edge[bend left] node[el] {$j, 0$} (midj)
	(midj) edge[loopabove] node[el] {$\{a < j\},0$} (midj)
	(midj) edge[bend left] node[el] {$\{a > j\},0$} (initj)
	(midj) edge node[el] {$j,0$} (finalj)
	(finalj) edge[loopabove] node[el] {$A,0$} (finalj)
	;
\end{tikzpicture}
\caption{Automaton $\calN$ realizing function $f$ which outputs the
(negative) length of the word if it has no $j$-pair.}
\label{fig:length-jpair}
\end{center}
\end{figure}

\begin{lemma}\label{lem:high-delay}
	Any determinizer of automaton $\calN$ (see
	Fig.~\ref{fig:length-jpair}), which realizes the function $f$, has a
	delay of at least $2^{n}-1$.
\end{lemma}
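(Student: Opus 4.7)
The plan is to exhibit a single word $\alpha$ on which correctness of any determinizer $\calD$ pins its running sum to an exponentially negative value, while the delay constraint, applied to a one-letter extension of $\alpha$, forces that same running sum to lie within $k$ of $0$. First I would invoke Lemma~\ref{lem:props-jpair}(ii) to fix a word $\alpha \in A^*$ of length $2^n - 1$ that contains no $j$-pair for any $j$. Then for \emph{any} letter $a \in A$, the word $\alpha \cdot a$ has length $2^n$, so by Lemma~\ref{lem:props-jpair}(i) it must contain a $j$-pair. Consequently $f(\alpha) = -(2^n - 1)$ while $f(\alpha \cdot a) = 0$.

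Next I would read off, from Fig.~\ref{fig:length-jpair}, the maximal accepting runs of $\calN$ on these two words. On $\alpha$ no $j$-pair branch accepts, so the only accepting run is the one staying in the initial self-loop state, whose running sum after $\ell$ symbols equals $-\ell$. On $\alpha \cdot a$ the maximum value $0$ can only be attained by a $j$-pair branch (the self-loop branch yields $-2^n < 0$), and since every transition in those branches has weight $0$, any maximal accepting run of $\calN$ on $\alpha \cdot a$ has running sum $0$ at every prefix, in particular at position $2^n - 1$. Now let $\calD$ be any $k$-delay determinizer of $\calN$ and let $v_\ell$ denote the running sum of $\calD$'s unique run on $\alpha \cdot a$ after reading $\ell$ symbols. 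Since $\calD$ is deterministic and $\inter{\calD}(\alpha) = f(\alpha)$, we obtain $v_{2^n - 1} = -(2^n - 1)$. Applying $\calD \subseteq_k \calN$ to the accepting run of $\calD$ on $\alpha \cdot a$ supplies an accepting run $\rho$ of $\calN$ on $\alpha \cdot a$ of total weight $0$ whose running sum $r_\ell$ satisfies $|v_\ell - r_\ell| \leq k$ everywhere. By the previous observation $r_{2^n - 1} = 0$, hence $k \geq |v_{2^n - 1}| = 2^n - 1$, which is the desired bound.

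I do not anticipate a substantial obstacle: the proof is essentially a direct calculation once the pair $(\alpha, \alpha \cdot a)$ is chosen. The only delicate point is recognizing that the $f$-value flips from $-(2^n - 1)$ to $0$ upon appending a single letter, which is precisely what turns the purely combinatorial length bound of Lemma~\ref{lem:props-jpair} into a quantitative lower bound on any admissible delay: the determinizer has been forced to commit to the ``decreasing'' self-loop mode of $\calN$, but the maximal run of $\calN$ on $\alpha \cdot a$ abruptly becomes a constant-$0$ $j$-pair branch, and this switch cannot be absorbed within a window smaller than $2^n - 1$.
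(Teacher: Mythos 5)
Your proof is correct and follows essentially the same route as the paper's: fix a $j$-pair-free word $\alpha$ of length $2^n-1$, observe that $\calD$'s running sum after $\alpha$ is pinned to $-(2^n-1)$ while every maximal (weight-$0$) accepting run of $\calN$ on $\alpha\cdot a$ has all prefix sums equal to $0$, and read off the delay at position $2^n-1$. Your added observation that the run supplied by $\calD \subseteq_k \calN$ must have total weight $0$ and hence be a $j$-pair-branch run is a welcome explicit justification of a step the paper leaves implicit.
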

\begin{proof}
	Consider a word $\alpha$ of length $2^{n} - 1$ containing no
	$j$-pair---which exists according to Lemma~\ref{lem:props-jpair}.
	Further consider an arbitrary $k$-determinizer $\calD$ for $\calN$. We
	remark that $\inter{\calD}(\alpha) = \inter{\calN}(\alpha) = 1 -
	2^{n}$, since both automata realize $f$ and $\alpha$ does not contain a
	$j$-pair. It follows from Lemma~\ref{lem:props-jpair} that $\alpha \cdot
	a$ contains a $j$-pair (that is, for all $a \in A$). Hence, for all $a
	\in A$, we have $\inter{\calN}(\alpha \cdot a) = 0$. Furthermore, by
	construction of $\calN$, for all maximal accepting runs $q_0 a_0 \dots
	a_{|\alpha| +1} q_{|\alpha| +2}$ of $\calN$ on $\alpha \cdot a$ we have
	$w(q_0 \dots q_i) = 0$ for all $1 \le i \le |\alpha| + 2$. In
	particular, for $i = |\alpha| + 1$, we have $\inter{\calD}(\alpha) -
	w(q_0 \dots q_i) = 2^{n} - 1$ which proves the claim.
\end{proof}

\subsection{Upper bound}
We now argue that $0$-delay determinizability is in \EXP.  Then, we show how to
reduce (in exponential time) $k$-delay determinizability to $0$-delay
determinizability. We claim that the composition of the two algorithms remains
singly exponential.
\begin{proposition}\label{pro:0delay-in-exp}
	Deciding the $0$-delay problem for a given automaton is in
	\EXP.
\end{proposition}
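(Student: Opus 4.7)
The plan is to reduce 0-delay determinizability of $\calN$ to $\mathbf{Hom}$-determinizability of a subset-like construction $\hat{\calN}$ of size exponential in $|\calN|$. Combined with the polynomial-time algorithm of Proposition~\ref{pro:0regret-easy}, this yields the $\EXP$ bound.

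I first construct $\hat{\calN}$. Writing $\calN = (Q, I, A, \Delta, w, F)$, the states of $\hat{\calN}$ are pairs $(q, S) \in Q \times 2^Q$, with the intended invariant that after reading a prefix $\alpha$, reaching $(q, S)$ witnesses the existence of an initial $\calN$-run on $\alpha$ ending in $q$ with some running-sum sequence $W_0, \dots, W_{|\alpha|}$, and that $S$ collects exactly those $\calN$-states reachable on $\alpha$ by initial runs sharing the same running-sum sequence. Transitions are $(q, S) \xrightarrow{a,\, w(q,a,q')} (q', S')$ whenever $(q, a, q') \in \Delta$ and $S' = \{q'' \st \exists p \in S,\; (p, a, q'') \in \Delta,\; w(p, a, q'') = w(q, a, q')\}$. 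Initial states are $\{(q_0, I) \st q_0 \in I\}$, final states are $\{(q, S) \st q \in F\}$, and the weights are inherited from $\calN$. A routine induction confirms that every $\calN$-run lifts uniquely to a $\hat{\calN}$-run of the same weight, and conversely the first-coordinate projection of any $\hat{\calN}$-run is a valid $\calN$-run of the same weight; hence $\inter{\hat{\calN}} = \inter{\calN}$ and $\lang{\hat{\calN}} = \lang{\calN}$, while $|\hat{\calN}| \leq |Q| \cdot 2^{|Q|}$.

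I then establish the correspondence: $\calN$ is 0-delay determinizable iff $\hat{\calN}$ is $\mathbf{Hom}$-determinizable. In one direction, given a $\mathbf{Hom}$-determinizer $\mathcal{E}$ of $\hat{\calN}$ with witnessing homomorphism $\mu$, composing with the first-coordinate projection $\pi \colon (q, S) \mapsto q$ assigns to each $\mathcal{E}$-run a matching $\calN$-run that agrees transition-wise in both label and weight; hence the running sums agree exactly and $\mathcal{E} \subseteq_0 \calN$. Since $\inter{\mathcal{E}} = \inter{\hat{\calN}} = \inter{\calN}$, $\mathcal{E}$ is a 0-delay determinizer of $\calN$. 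Conversely, from a 0-delay determinizer $\calD$ of $\calN$ I would build a $\mathbf{Hom}$-determinizer of $\hat{\calN}$ as a synchronized product of $\calD$ and $\hat{\calN}$: each product state $(p, (q, S))$ pairs $\calD$'s current state $p$ with a compatible $\hat{\calN}$-state, where the choice of the nondeterministic $q$-component in each $\hat{\calN}$-transition is resolved using the $\calN$-witness run supplied by the 0-delay condition for $\calD$. The product is deterministic, and the second-coordinate projection is the required homomorphism into $\hat{\calN}$.

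The main obstacle will be the converse direction, where one must show that the $q$-choices along the product can be made so that the resulting automaton realizes $\inter{\calN}$, \ie that every accepted input leads to some $(q, S)$ with $q \in F$. This reduces to showing that a winning 0-delay strategy for \eve lifts to a winning 0-regret strategy in $\hat{\calN}$, which is possible because, by the invariant, $S$ already contains every $\calN$-state reachable with the committed running-sum history, including every final $\calN$-state witnessing a maximal matching run. Once correctness is established, the complexity bound is immediate: $\hat{\calN}$ is constructed in time $2^{\mathcal{O}(|Q|)}$ and $\mathbf{Hom}$-determinizability of $\hat{\calN}$ is decided in time polynomial in $|\hat{\calN}|$ via Proposition~\ref{pro:0regret-easy}, yielding an $\EXP$ procedure overall.
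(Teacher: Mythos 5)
Your high-level plan---an exponential subset-style construction followed by the polynomial-time $0$-regret test of Proposition~\ref{pro:0regret-easy}---is exactly the paper's, but your concrete construction $\hat{\calN}$ does not work: the converse direction of your claimed equivalence (``$\calN$ $0$-delay determinizable $\Rightarrow$ $\hat{\calN}$ is $\mathbf{Hom}$-determinizable'') is false, and the obstacle you flag at the end is not repairable within your setup. Because you keep the explicit first component $q$ and declare $(q,S)$ final only when $q \in F$, \eve in the $0$-regret game on $\hat{\calN}$ must still commit to a concrete $\calN$-state at every step, and the set $S$ buys her nothing: the presence of a final state \emph{inside} $S$ does not make \emph{her} run accepting. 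The automaton of Fig.~\ref{fig:example3} is a counterexample. It is $\mathbf{Del}_0$-determinizable, and in your $\hat{\calN}$ the only initial state is $(s_i,\{s_i\})$, from which reading $a$ leads to $(s_0,\{s_0,s_1\})$ or $(s_1,\{s_0,s_1\})$; the first has no $b$-successor and the second cannot reach value $1$ on $aa$, so no homomorphic image of a DWA into $\hat{\calN}$ realizes $\inter{\hat{\calN}}$, \ie $\hat{\calN}$ is not $\mathbf{Hom}$-determinizable. Your suggested fix for the converse (resolving the $q$-choices using the witness run of the $0$-delay condition) fails precisely because those witnesses need not be prefix-consistent: different futures of the same prefix may require different witness runs, and \eve cannot know which one to commit to. Note also that switching the acceptance condition to $S \cap F \neq \emptyset$ would rescue the converse but break your forward direction, since the projection $\pi$ would no longer map final states to final states.

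The paper avoids this by dropping the first component entirely: $\pow(\calN)$ has subsets $U$ as states, transitions $(U,a,\Delta^x_a(U))$ of weight $x$ for each weight $x$, and $U$ is final iff $U \cap F \neq \emptyset$. The crucial payoff is that $\pow(\calN)$ is \emph{pair-deterministic}: two distinct $a$-successors of a state necessarily carry distinct weights. Hence, given a $0$-delay determinizer $\calD$, the run of $\pow(\calN)$ matching $\calD$'s running sums is \emph{unique}, so \eve's strategy in the $0$-regret game is forced and automatically prefix-consistent, and acceptance comes for free from the subset semantics (Proposition~\ref{pro:zero-delay-regret-pdet}). Combined with $\calN \subseteq_0 \pow(\calN)$, $\pow(\calN) \subseteq_0 \calN$, and transitivity of $0$-inclusion (Lemma~\ref{lem:k-inc_prop}), this yields Proposition~\ref{pro:zero-delay-subset} and the \EXP\ bound. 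The uniqueness afforded by pair-determinism is the missing idea in your argument.
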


The result will follow from Propositions~\ref{pro:0regret-easy}
and~\ref{pro:zero-delay-regret-pdet}.
Before we state and prove Proposition~\ref{pro:zero-delay-regret-pdet} we need
some intermediate definitions and lemmas. The following properties of
$k$-inclusion, which follow directly from 
the definition, will be useful later.
%
\begin{lemma}\label{lem:k-inc_prop}
	For all automata $\calN$, $\calN'$, and $\calN''$, for all $k,k' \in
	\mathbb{N}$, the following hold:
	\begin{enumerate}[nolistsep]
		\item\label{lem-item:k-inc_prop_less} if $\calN \subseteq_k \calN'$ and $k \le k'$, then
			$\calN \subseteq_{k'} \calN'$;
		\item\label{lem-item:k-inc_prop_trans} if $\calN
			\subseteq_{k} \calN'$ and $\calN' \subseteq_{k'}
			\calN''$, then $\calN \subseteq_{k + k'} \calN''$;
		\item\label{lem-item:k-inc_prop_sub} if $\calN \subseteq_k
			\calN'$, then $\lang{\calN} \subseteq
			\lang{\calN'}$, and for every $\alpha \in
			\lang{\calN}$, $\inter{\calN}(\alpha) \leq
			\inter{\calN'}(\alpha)$.
	\end{enumerate}
\end{lemma}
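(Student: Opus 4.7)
The three claims are meant to be read off the definition of $\subseteq_k$, so my plan is to treat each item separately with a short direct argument, handling each in the order $(1), (3), (2)$ since the last one is the only item that really uses something beyond unwrapping definitions.

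For item~(\ref{lem-item:k-inc_prop_less}), I would simply observe that the witnessing run $\rho'$ of $\calN'$ provided by $\calN \subseteq_k \calN'$ for a given accepting run $\rho$ of $\calN$ already satisfies $|w'(q_0' \dots q_i') - w(q_0 \dots q_i)| \le k \le k'$ for every $1 \le i \le n$, so the same $\rho'$ witnesses $\calN \subseteq_{k'} \calN'$. There is nothing further to check.

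For item~(\ref{lem-item:k-inc_prop_sub}), I would fix $\alpha \in \lang{\calN}$ and pick a maximal accepting run $\rho$ of $\calN$ on $\alpha$, so that $w(\rho) = \inter{\calN}(\alpha)$. Applying $\calN \subseteq_k \calN'$ to $\rho$ yields an accepting run $\rho'$ of $\calN'$ on the same input $\alpha$ with $w'(\rho') = w(\rho)$. The mere existence of $\rho'$ gives $\alpha \in \lang{\calN'}$, hence $\lang{\calN} \subseteq \lang{\calN'}$; and $\inter{\calN'}(\alpha) \ge w'(\rho') = w(\rho) = \inter{\calN}(\alpha)$ by definition of $\inter{\calN'}$ as a maximum.

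The only step that requires a brief calculation is item~(\ref{lem-item:k-inc_prop_trans}). Here I would take an arbitrary accepting run $\rho = q_0 a_0 \dots a_{n-1} q_n$ of $\calN$. First apply $\calN \subseteq_k \calN'$ to obtain an accepting run $\rho' = q_0' a_0 \dots a_{n-1} q_n'$ of $\calN'$ on the same word with $w'(\rho') = w(\rho)$ and prefix-wise delay bounded by $k$. Then apply $\calN' \subseteq_{k'} \calN''$ to $\rho'$ to obtain an accepting run $\rho'' = q_0'' a_0 \dots a_{n-1} q_n''$ of $\calN''$ with $w''(\rho'') = w'(\rho') = w(\rho)$ and prefix-wise delay between $\rho'$ and $\rho''$ bounded by $k'$. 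For each $1 \le i \le n$, the triangle inequality gives
\[
\bigl| w''(q_0'' \dots q_i'') - w(q_0 \dots q_i) \bigr|
\le \bigl| w''(q_0'' \dots q_i'') - w'(q_0' \dots q_i') \bigr|
 + \bigl| w'(q_0' \dots q_i') - w(q_0 \dots q_i) \bigr|
\le k' + k,
\]
which together with $w''(\rho'') = w(\rho)$ witnesses $\calN \subseteq_{k+k'} \calN''$. No step is really an obstacle; the only thing to be careful about is to pass the \emph{witness} run $\rho'$ (not just any accepting run of $\calN'$ on the same word) into the second $k'$-inclusion so that the two delay bounds refer to compatible runs and the triangle inequality can be applied.
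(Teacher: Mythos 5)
Your proof is correct and matches the paper's treatment: the paper states that these properties ``follow directly from the definition'' and gives no further argument, and your unwinding of the definitions---in particular the triangle inequality for item~(2), applied to the chained witness runs---is exactly the intended justification. Nothing to add.
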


We now show how to decide $0$-delay determinizability by reduction to
$0$-regret determinizability.  Let us first convince the reader that $0$-regret
determinizability implies $0$-delay determinizability.

\begin{proposition}\label{pro:zero-delay-regret}
	If an automaton $\calN$ is $0$-regret determinizable, then it is
	$0$-delay determinizable.
\end{proposition}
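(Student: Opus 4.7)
The plan is to show that the very automaton that witnesses $0$-regret determinizability already witnesses $0$-delay determinizability---in other words, every $\mathbf{Hom}(\calN)$-determinizer is in $\mathbf{Del}_0(\calN)$. So, fix a $0$-regret determinizer $\calD$ of $\calN$, together with the accompanying homomorphism $\mu : \calD \to \calN$. By hypothesis $\inter{\calD} = \inter{\calN}$, so all that remains is to check that $\calD \subseteq_0 \calN$.

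To verify this, I would pick an arbitrary accepting run $\rho = p_0 a_0 p_1 \dots a_{n-1} p_n$ of $\calD$ on some word $\alpha \in \lang{\calD}$, and lift it pointwise through $\mu$ to the sequence $\mu(\rho) \defeq \mu(p_0) a_0 \mu(p_1) \dots a_{n-1} \mu(p_n)$. Each of the three structural clauses in the definition of homomorphism contributes one ingredient: the inclusion of initial states forces $\mu(p_0)$ to be initial in $\calN$; the inclusion of final states forces $\mu(p_n)$ to be final; and closure of the transition relation under $\mu$ makes $\mu(\rho)$ a genuine run. Hence $\mu(\rho)$ is an accepting run of $\calN$ on $\alpha$. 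Moreover, the weight-preserving clause gives $w_\calN(\mu(p_i), a_i, \mu(p_{i+1})) = w_\calD(p_i, a_i, p_{i+1})$ transition by transition, so the prefix sums coincide on the nose: $w_\calN(\mu(p_0) \dots \mu(p_i)) = w_\calD(p_0 \dots p_i)$ for every $0 \le i \le n$. In particular the absolute differences vanish everywhere, which is exactly the condition $\calD \subseteq_0 \calN$.

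Since $\inter\calD(\alpha) = \inter\calN(\alpha)$ by the $0$-regret hypothesis, the lifted run $\mu(\rho)$ has weight $\inter\calN(\alpha)$ and is therefore automatically a maximal accepting run of $\calN$, matching the stronger intuitive description of $k$-delay determinizers given in the introduction. There is essentially no obstacle here: the proof is a one-step unfolding of the homomorphism axioms, whose key feature is that they preserve not only the total weight of a run but every prefix weight on the nose---which is strictly stronger than what $0$-delay inclusion demands. In fact this observation generalizes: for any $\calD \in \mathbf{Hom}(\calN)$ whose language equals $\lang\calN$, the map $\mu$ already certifies $\calD \subseteq_0 \calN$, independently of any regret bound.
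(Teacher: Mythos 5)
Your proof is correct and rests on the same core observation as the paper's: the witness of $0$-regret determinizability already satisfies the $0$-delay inclusion because its runs correspond to runs of $\calN$ with identical prefix weights. The only cosmetic difference is that you argue directly from the homomorphism $\mu$ in the definition of $\mathbf{Hom}(\calN)$, whereas the paper routes through Lemma~\ref{lem:games-det} and the product automaton $\calN_\sigma$ (whose projection onto $\calN$ is exactly such a homomorphism), so the two arguments coincide in substance.
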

\begin{proof}
	We have, from Lemma~\ref{lem:games-det}, that \eve has a finite-memory
	winning strategy $\sigma$ in the $0$-regret game played on $\calN$.
	Then, by definition of the regret game, $\lang{\calN} =
	\lang{\calN_\sigma}$, and for every $\alpha \in
	\lang{\calN}$, $\inter{\calN}(\alpha) -
	\inter{\calN_\sigma}(\alpha) \leq 0$, hence $\inter{\calN} =
	\inter{\calN_\sigma}$. Moreover, as \eve chooses a run in
	$\calN$, we have $\calN_\sigma \subseteq_0 \calN$.  Therefore
	$\calN_\sigma$ is a $0$-delay determinizer of $\calN$.
\end{proof}

The converse of the above result does not hold in general (see
Fig.~\ref{fig:example3}). Nonetheless, it holds when the automaton is
pair-deterministic. We now show that, under this hypothesis, an automaton is
$0$-regret determinizable if and only if the automaton is $0$-delay
determinizable.

\begin{proposition}\label{pro:zero-delay-regret-pdet}
	A pair-deterministic automaton $\calN$ is $0$-delay
	determinizable if and only if it is $0$-regret determinizable.
\end{proposition}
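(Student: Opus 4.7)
The forward direction is already handled by Proposition~\ref{pro:zero-delay-regret}, so my plan is to prove the converse: if the pair-deterministic $\calN = (Q,\{q_I\},A,\Delta,w,F)$ is $0$-delay determinizable via some DWA $\calD = (P,\{p_I\},A,\Delta_\calD, w_\calD, F_\calD)$ (so $\calD \subseteq_0 \calN$ and $\inter{\calD} = \inter{\calN}$), then $\calN$ is $0$-regret determinizable. The idea is to form a ``weight-matched product'' of $\calD$ and $\calN$ that simultaneously serves as a deterministic automaton and as a witness in $\mathbf{Hom}(\calN)$.

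\textbf{Construction.} I define $\calC = (P \times Q, (p_I, q_I), A, \Delta_\calC, w_\calC, F_\calD \times F)$, where $((p,q),a,(p',q')) \in \Delta_\calC$ with weight $w_\calD(p,a,p')$ whenever both $(p,a,p') \in \Delta_\calD$ and $(q,a,q') \in \Delta$ hold with $w_\calD(p,a,p') = w(q,a,q')$. The projection $\pi_2 : (p,q) \mapsto q$ is, by construction, a homomorphism from $\calC$ to $\calN$ (it maps initial to initial, final to final, and respects weighted transitions), so $\calC \in \mathbf{Hom}(\calN)$ once we verify $\calC$ is deterministic.

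\textbf{Key step: determinism of $\calC$.} Fix $(p,q) \in P \times Q$ and $a \in A$. Determinism of $\calD$ forces at most one $p'$ with $(p,a,p') \in \Delta_\calD$, and hence fixes a single candidate weight $w_\calD(p,a,p')$. Pair-determinism of $\calN$ then forces at most one $q'$ with $(q,a,q') \in \Delta$ carrying that exact weight. Thus $(p,q)$ has at most one $a$-successor in $\calC$, so $\calC$ is deterministic. This is the crux; the rest of the argument is a routine bookkeeping exercise.

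\textbf{Language and value equality.} For any $\alpha \in \lang{\calN} = \lang{\calD}$, let $\rho_\calD$ be the (unique, accepting) run of $\calD$ on $\alpha$; then $w_\calD(\rho_\calD) = \inter{\calD}(\alpha) = \inter{\calN}(\alpha)$. By $\calD \subseteq_0 \calN$, there is an accepting run $\rho$ of $\calN$ on $\alpha$ whose running sums coincide with those of $\rho_\calD$ at every prefix length, which forces the transitions of $\rho$ and $\rho_\calD$ to carry equal weights step-by-step. Synchronising them produces an accepting run of $\calC$ on $\alpha$ with value $\inter{\calN}(\alpha)$, so $\alpha \in \lang{\calC}$ and $\inter{\calC}(\alpha) \ge \inter{\calN}(\alpha)$. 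Conversely, any accepting run of $\calC$ projects via $\pi_2$ to an accepting run of $\calN$ with the same value, giving $\lang{\calC} \subseteq \lang{\calN}$ and $\inter{\calC}(\alpha) \le \inter{\calN}(\alpha)$ (this also follows from Lemma~\ref{lem:hom-prop}). Therefore $\lang{\calC} = \lang{\calN}$ and $\inter{\calC} = \inter{\calN}$, which together with $\calC \in \mathbf{Hom}(\calN)$ exhibits $\calC$ as a $0$-regret determinizer of $\calN$.

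The main conceptual obstacle is precisely the determinism argument above: without pair-determinism of $\calN$, two distinct $\calN$-transitions from $q$ on $a$ could share the weight chosen by $\calD$, spoiling determinism of $\calC$ and forcing us to make an arbitrary, non-homomorphic choice---this is exactly why Example~\ref{ex:regret} (Fig.~\ref{fig:example3}) fails to be $0$-regret determinizable despite being $0$-delay determinizable.
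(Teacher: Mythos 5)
Your proof is correct and is in essence the paper's own argument: the paper defines \eve's strategy in the $0$-regret game by following the unique run of $\calN$ whose prefix sums match those of the run of $\calD$ (uniqueness being exactly where pair-determinism is used), and your weight-matched product $\calC$ is precisely the Mealy-machine/synchronized-product realization of that strategy, via Lemma~\ref{lem:games-det}. The crux you identify---determinism of $\calD$ fixes the weight, pair-determinism of $\calN$ then fixes the successor---is the same key step as in the paper.
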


\begin{proof}[Sketch]
	If $\calN$ is $0$-regret determinizable, then $\calN$ is $0$-delay
	determinizable by Proposition~\ref{pro:zero-delay-regret}. Now suppose
	that $\calN$ is $0$-delay determinizable, and let $\calD$ be a $0$-delay
	determinizer of $\calN$. For every initial run
	$\rho_{\alpha} = p_0 a_0 p_1 \ldots a_{n-1} p_{n}$ of $\calD$ on input
	$\alpha = a_0 \ldots a_{n-1}$, there exists exactly one initial run
	$\rho'_{\alpha} = p'_0 a_0 p'_1 \ldots a_{n-1} p'_{n}$ of $\calN$ such
	that for every $1 \leq i \leq n$, $w(p_0' \dots p_i') = w'(p_0 \dots
	p_i)$.  The existence of $\rho'_{\alpha}$ is guaranteed by the fact
	that $\calD$ is a $0$-delay determinizer of $\calN$, and, since $\calN$
	is pair-deterministic, such a run is unique. Then the strategy for \eve
	in the $0$-regret game played on $\calN$ obtained by following, given an
	input word $\alpha$, the run $\rho'_{\alpha}$ of $\calN$, is winning.
\end{proof}

We observe that any automaton $\calN = (Q,I,A,\Delta,w,F)$ can be transformed
into a pair-deterministic automaton $\pow(\calN)$ with at most an exponential
blow-up in the state-space.  Intuitively, we merge all the states from the
original automaton which can be reached by reading $a \in A$ and taking a
transition with weight $x \in \mathbb{Z}$. This is a generalization of the
classical subset construction used to determinize unweighted automata.
Critically, the construction is such that $\pow(\calN) \subseteq_0 \calN$ and
$\calN \subseteq_0 \pow(\calN)$. (For completeness, the construction is given in
appendix.) The next result then follows immediately from the latter property and
from Lemma \ref{lem:k-inc_prop} item~\ref{lem-item:k-inc_prop_trans}.

\begin{proposition}\label{pro:zero-delay-subset}
	An automaton $\calN$ is $0$-delay determinizable if and only if
	$\pow(\calN)$ is $0$-delay determinizable if and only if $\pow(\calN)$
	is $0$-regret determinizable.
\end{proposition}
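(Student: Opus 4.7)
The plan is to prove the three-way equivalence by showing (b)$\Leftrightarrow$(c) first, which is essentially immediate from the prior proposition, and then (a)$\Leftrightarrow$(b) by a symmetric transitivity argument using the two $0$-inclusions between $\calN$ and $\pow(\calN)$. Throughout, I rely on two facts taken as given: $\pow(\calN)$ is pair-deterministic with $\pow(\calN) \subseteq_0 \calN$ and $\calN \subseteq_0 \pow(\calN)$, and Lemma~\ref{lem:k-inc_prop} (items~\ref{lem-item:k-inc_prop_trans} and~\ref{lem-item:k-inc_prop_sub}).

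For the equivalence between $0$-delay and $0$-regret determinizability of $\pow(\calN)$, I would simply invoke Proposition~\ref{pro:zero-delay-regret-pdet}, since $\pow(\calN)$ is pair-deterministic by construction. This gives (b)$\Leftrightarrow$(c) with essentially no work.

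For (a)$\Leftrightarrow$(b), I would first observe that $\inter{\calN} = \inter{\pow(\calN)}$ and $\lang{\calN} = \lang{\pow(\calN)}$: this follows from Lemma~\ref{lem:k-inc_prop}\ref{lem-item:k-inc_prop_sub} applied in both directions, since the two automata $0$-include one another. Now if $\calD$ is a $0$-delay determinizer of $\calN$, then $\calD \subseteq_0 \calN$ and $\inter{\calD} = \inter{\calN}$. By Lemma~\ref{lem:k-inc_prop}\ref{lem-item:k-inc_prop_trans} combined with $\calN \subseteq_0 \pow(\calN)$, I get $\calD \subseteq_{0} \pow(\calN)$; together with $\inter{\calD} = \inter{\calN} = \inter{\pow(\calN)}$, this shows $\calD$ is a $0$-delay determinizer of $\pow(\calN)$. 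The converse direction is symmetric: any $0$-delay determinizer $\calD$ of $\pow(\calN)$ satisfies $\calD \subseteq_0 \pow(\calN) \subseteq_0 \calN$, hence $\calD \subseteq_{0} \calN$ by transitivity, and $\inter{\calD} = \inter{\pow(\calN)} = \inter{\calN}$.

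No step looks like a genuine obstacle; the proof is a two-line bookkeeping argument once the transitivity lemma and Proposition~\ref{pro:zero-delay-regret-pdet} are in hand. The only thing to double-check is that the sum $0+0 = 0$ in item~\ref{lem-item:k-inc_prop_trans} of Lemma~\ref{lem:k-inc_prop} really does preserve the delay bound (which it trivially does), so that chaining the two $0$-inclusions does not produce a larger delay. I would present the argument as a short paragraph, explicitly naming the lemma items used so that the reader can verify the chain $\calD \subseteq_0 \calN \subseteq_0 \pow(\calN)$ (and its reverse) without having to re-derive them.
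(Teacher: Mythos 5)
Your proof is correct and follows exactly the route the paper intends: the paper states that the proposition "follows immediately" from the mutual $0$-inclusions $\calN \subseteq_0 \pow(\calN)$ and $\pow(\calN) \subseteq_0 \calN$ together with Lemma~\ref{lem:k-inc_prop} item~\ref{lem-item:k-inc_prop_trans}, plus Proposition~\ref{pro:zero-delay-regret-pdet} for the pair-deterministic $\pow(\calN)$, which is precisely the chain you spell out. You have merely made explicit the bookkeeping (transferring determinizers via transitivity of $\subseteq_0$ and the equality $\inter{\calN} = \inter{\pow(\calN)}$) that the paper leaves to the reader.
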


We now show how to extend the above techniques to the general case of
$k$-delay.

\begin{theorem}\label{thm:kdelay-in-exp}
	Deciding the $k$-delay problem for a given automaton is in
	\EXP.
\end{theorem}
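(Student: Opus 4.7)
The plan is to reduce $k$-delay determinizability of $\calN$ to $0$-delay determinizability of a derived automaton $\calN_k$, and then invoke Proposition~\ref{pro:0delay-in-exp}. The reduction uses a product construction whose second coordinate explicitly stores the current delay of a putative determinizer with respect to some accepting run of $\calN$.

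Given $\calN = (Q, I, A, \Delta, w, F)$, I would let $\calN_k$ have state set $Q \times \{-k, \ldots, k\}$, initial states $I \times \{0\}$, final states $F \times \{0\}$, and a transition $((q,d),a,(q',d'))$ of weight $w(q,a,q') + d - d'$ for every $(q,a,q') \in \Delta$ and every $d,d' \in \{-k,\ldots,k\}$. A short telescoping computation shows that along any accepting run of $\calN_k$ the $d$-adjustments cancel (initial and final delays are both $0$), so such a run has total weight equal to that of its underlying $\calN$-run. Consequently $\lang{\calN_k} = \lang{\calN}$ and $\inter{\calN_k} = \inter{\calN}$; moreover, the projection onto the first coordinate witnesses $\calN_k \subseteq_k \calN$, and the trivial lift $d_i = 0$ witnesses $\calN \subseteq_0 \calN_k$.

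The key claim is then that $\calN$ is $k$-delay determinizable if and only if $\calN_k$ is $0$-delay determinizable. For the forward direction, given a $k$-delay determinizer $\calD$ of $\calN$, a run $\rho'$ of $\calD$ on some $\alpha$, and a $k$-close witness run $\rho$ of $\calN$ on $\alpha$ (provided by $\calD \subseteq_k \calN$), I would set $d_i := w(\mathrm{prefix}_i \rho) - w'(\mathrm{prefix}_i \rho')$. These lie in $[-k,k]$ and satisfy $d_0 = d_n = 0$ because the two runs share the same total weight; lifting $\rho$ to $(q_i, d_i)_i$ then yields an accepting run of $\calN_k$ whose prefix weights equal those of $\rho'$ exactly, so $\calD \subseteq_0 \calN_k$. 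Combined with $\inter{\calD} = \inter{\calN} = \inter{\calN_k}$, this makes $\calD$ a $0$-delay determinizer of $\calN_k$. The converse direction is a one-line application of transitivity of $k$-inclusion (Lemma~\ref{lem:k-inc_prop}, item~\ref{lem-item:k-inc_prop_trans}) to the chain $\calD \subseteq_0 \calN_k \subseteq_k \calN$.

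The construction yields $\calN_k$ of size $O(|\calN| \cdot k)$, so feeding it into Proposition~\ref{pro:0delay-in-exp} produces an algorithm running in time singly exponential in the representation of the input. The main technical point is the verification of the forward direction: the lifted sequence $(q_i, d_i)_i$ must be simultaneously a well-formed run of $\calN_k$ (the chosen $d_i$ exactly cancel with the weight-adjusting shifts on transitions) and an accepting one (it starts and ends in the $d=0$ slice), and both properties hinge on the hypothesis that the witness runs $\rho$ and $\rho'$ carry the same total weight on every accepting word.
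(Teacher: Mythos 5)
Your reduction is exactly the paper's: the automaton $\calN_k$ you build coincides with the paper's $k$-delay construction $\delta_k(\calN)$ (same state space $Q\times\{-k,\dots,k\}$, same weight shift $d + w(q,a,q') - d'$, same initial/final slices), your three structural properties are Lemma~\ref{lem:delay-construction}, and your key equivalence is Proposition~\ref{pro:k-to-zero-delay}, argued the same way (lift the witness run of $\calN$ by the prefix-weight differences $d_i$ in one direction, transitivity of $k$-inclusion in the other). All of that is correct.

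The gap is in the final complexity accounting. The automaton $\calN_k$ has $|Q|(2k+1)$ states; with $k$ given in binary (which is the setting of the theorem --- for unary $k$ your argument would be fine, but then the whole refinement would be unnecessary), this is \emph{exponential} in the size of the input. The $0$-delay procedure of Proposition~\ref{pro:0delay-in-exp} runs in time exponential in the number of states of the automaton it receives, because it first applies the subset construction $\pow(\cdot)$ to obtain a pair-deterministic automaton. Feeding $\calN_k$ into it as a black box therefore only yields a \TWOEXP{} bound, which is precisely what the paper says before adding the missing ingredient: when the subset construction is composed with the delay construction, it need only be applied on the first component, so a state of the composed automaton can be taken to be a function $f : Q \to \{-k,\dots,k\}\cup\{\bot\}$. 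There are $(2k+2)^{|Q|} = 2^{\mathcal{O}(|Q|\log_2 k)}$ such functions, which is singly exponential even for binary $k$. Without this observation (or some other argument bounding the composed state space), your last step does not establish membership in \EXP.
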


Given an automaton $\calN$ and $k \in \mathbb{N}$, we will construct a new automaton
$\delta_k(\calN)$ that will encode delays (up to $k$) in its state space. In this new
automaton, for every state-delay pair $(p,i)$ and for every transition
$(p,a,q) \in \Delta$, we will have an $a$-labelled transition to $(q,j)$ with
weight $i + w(p,a,q) - j$ for all $-k \le j \le k$. Intuitively, $i$ is the
amount of delay the automaton currently has, and to get to a point where the
delay becomes $j$ via transition $(p,a,q)$ a weight of $i + w(p,a,q) - j$ must
be outputted.  We will then show that the resulting automaton is $0$-delay
determinizable if and only if the original automaton is $k$-delay
determinizable.

\myparagraph{$k$-delay construction}
Let $\calN =  (Q, I, A, \Delta, w,F)$ be an automaton.
Let $\delta_k(\calN) =  (Q', I', A, \Delta',w',F')$ be the
automaton defined as follows.
\begin{itemize}[nolistsep]
	\item $Q' = Q \times \{ -k, \dots, k \}$;
	\item $I'= I \times \{0\}$;
	\item $\Delta' = \{ ((p,i), a, (q,j)) \st (p, a, q) \in \Delta \}$;
	\item $w' : \Delta' \rightarrow \mathbb{Z}$, $((p,i), a, (q,j))
		\mapsto i + w(p,a,q) - j$;
	\item $F' = F \times \{ 0 \}$.
\end{itemize}

\begin{lemma}\label{lem:delay-construction}
	The $k$-delay construction satisfies the following properties.
	\begin{enumerate}[nolistsep]
		\item \label{lem-item:del_sub} $\delta_k(\calN) \subseteq_k
			\calN$;
		\item \label{lem-item:del_sup} for every automaton
			$\calM$ such that $\calM \subseteq_{k}
			\calN$, we have $\calM \subseteq_0 \delta_k(\calN)$;
		\item \label{lem-item:del_equ} $\inter{\delta_k(\calN)} =
			\inter{\calN}$.
	\end{enumerate}
\end{lemma}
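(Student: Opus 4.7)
The plan is to exploit the following telescoping identity satisfied by $\delta_k(\calN)$: for any run $r = (p_0, i_0) a_0 (p_1, i_1) \cdots a_{n-1} (p_n, i_n)$ of $\delta_k(\calN)$ projecting to the sequence $\bar r = p_0 a_0 \cdots a_{n-1} p_n$ in $\calN$, the weight of the prefix of length $\ell$ satisfies
\[
w'(r_\ell) \;=\; \sum_{j=0}^{\ell-1} \bigl(i_j + w(p_j, a_j, p_{j+1}) - i_{j+1}\bigr) \;=\; i_0 - i_\ell + w(\bar r_\ell),
\]
since consecutive $+i_{j+1}$ and $-i_{j+1}$ terms cancel. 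All three items follow quickly from this identity together with the bookkeeping constraint $|i_\ell| \le k$ imposed by the state space of $\delta_k(\calN)$.

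For item~\ref{lem-item:del_sub}, I would note that an accepting run $r$ of $\delta_k(\calN)$ must have $i_0 = i_n = 0$ (since $I' = I \times \{0\}$ and $F' = F \times \{0\}$), so its projection $\bar r$ is an accepting run of $\calN$. The identity yields $w'(r) = w(\bar r)$ for the full run, and for each prefix, $|w'(r_\ell) - w(\bar r_\ell)| = |i_\ell| \le k$, giving $\delta_k(\calN) \subseteq_k \calN$.

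For item~\ref{lem-item:del_sup}, assume $\calM \subseteq_k \calN$ and take an accepting run $\rho$ of $\calM$. The hypothesis supplies a matching accepting run $\rho' = p_0' a_0 \cdots a_{n-1} p_n'$ of $\calN$ with $w'(\rho') = w(\rho)$ and $|w'(\rho'_\ell) - w(\rho_\ell)| \le k$ for every~$\ell$. I lift $\rho'$ into $\delta_k(\calN)$ by setting $i_\ell \defeq w'(\rho'_\ell) - w(\rho_\ell)$, which lies in $\{-k,\dots,k\}$ and satisfies $i_0 = i_n = 0$. This makes the lifted sequence a \emph{valid accepting} run $r$ of $\delta_k(\calN)$ (every transition $((p_\ell', i_\ell), a_\ell, (p_{\ell+1}', i_{\ell+1}))$ exists by definition of $\Delta'$), and the telescoping identity gives $w'(r_\ell) = -i_\ell + w'(\rho'_\ell) = w(\rho_\ell)$, so the running sums coincide \emph{exactly} with those of $\rho$. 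This witnesses $\calM \subseteq_0 \delta_k(\calN)$.

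For item~\ref{lem-item:del_equ}, I derive the equality of the realised functions from the first two items. Applying item~\ref{lem-item:k-inc_prop_sub} of Lemma~\ref{lem:k-inc_prop} to item~\ref{lem-item:del_sub} gives $\lang{\delta_k(\calN)} \subseteq \lang{\calN}$ and $\inter{\delta_k(\calN)}(\alpha) \le \inter{\calN}(\alpha)$ wherever defined. For the reverse direction, I observe that trivially $\calN \subseteq_0 \calN$ (take the same run as witness), hence $\calN \subseteq_k \calN$ by item~\ref{lem-item:k-inc_prop_less}, and then item~\ref{lem-item:del_sup} applied with $\calM = \calN$ yields $\calN \subseteq_0 \delta_k(\calN)$. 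A second application of item~\ref{lem-item:k-inc_prop_sub} closes the inequality in the other direction, proving $\inter{\delta_k(\calN)} = \inter{\calN}$. The only mildly subtle point is the lifting in item~\ref{lem-item:del_sup}: one must verify that the endpoint constraints $i_0 = i_n = 0$ required for acceptance in $\delta_k(\calN)$ really are guaranteed, which is exactly the condition $w'(\rho') = w(\rho)$ built into the definition of $k$-inclusion.
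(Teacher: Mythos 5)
Your proof is correct and follows essentially the same route as the paper: the telescoping identity for item~\ref{lem-item:del_sub}, the lifting $i_\ell = w'(\rho'_\ell) - w(\rho_\ell)$ for item~\ref{lem-item:del_sup}, and the derivation of item~\ref{lem-item:del_equ} from the first two items (with $\calM = \calN$) via Lemma~\ref{lem:k-inc_prop}. You are in fact slightly more careful than the paper in spelling out why the endpoint constraints $i_0 = i_n = 0$ make the lifted run accepting.
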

\begin{proof}
	\item \ref{lem-item:del_sub})
	Let
	$(q_0,i_0) a_0 (q_1,i_1) \dots a_{n-1}(q_n,i_n)$
	be an accepting run of $\delta_k(\calN)$.  Then
	$q_0 a_0 q_1 \dots a_{n-1} q_n$ is an accepting run of $\calN$,
	and for every $0 \leq j < n$,
	\begin{align*}
		& \textstyle \left|\sum_{\ell = 0}^{j}
			w'((q_{\ell},i_{\ell}),a_\ell,(q_{\ell+1},i_{\ell+1})) - \sum_{\ell =
			0}^{j} w(q_{\ell},a_\ell,q_{\ell+1})\right|\\
		= & \textstyle \left|\sum_{\ell = 0}^{j} (i_{\ell} + w(q_{\ell},a_\ell,q_{\ell+1})
			-i_{\ell+1} -w(q_{\ell},a_\ell,q_{\ell+1}))\right|\\
		= & |i_0 - i_j| = |i_j| \leq k \text{ (since $i_0 = 0$).}
	\end{align*}
	Therefore $\delta_k(\calN) \subseteq_k \calN$.

	\item \ref{lem-item:del_sup})
	Let $\calM =  (Q'', I'', A, \Delta'', w'',F'')$ be an
	automaton such that $\calM \subseteq_{k} \calN$.  For every
	accepting run $p_0a_0 \dots a_{n-1} p_n$ of
	$\calM$, there exists an accepting run
	$q_0 a_0 \dots a_{n-1} q_n$ of $\calN$ such that for
	every $0 \leq j < n$ 	
	\[ \textstyle
		\sum_{l = 0}^j \left(w(q_{l},a_l,q_{l+1}) - 
		w''(p_{l},a_l,p_{l+1})\right) \in \{ -k, \dots, k \}.
	\]
	Let $i_j$ denote the above value.
	Then
	$(q_0,i_0)a_0 \dots a_{n-1} (q_n,i_n)$
	is an accepting 
	run of $\delta_k(\calN)$, and for every $0 \leq j < n$, 
	\begin{align*}
		& w'((q_{j},i_{j}),a_j,(q_{j+1},i_{j+1})) \\
		{ }={ } & i_{j} + w(q_{j},a_j,q_{j+1}) - i_{j+1}
		= w''(p_{j},a_j,p_{j+1}).
	\end{align*}
	Therefore $\calM \subseteq_0 \delta_k(\calN)$.

	\item \ref{lem-item:del_equ})
	This property follows immediately from the first property, the second
	property in the particular case $\calM = \calN$, and Lemma
	\ref{lem:k-inc_prop} item \ref{lem-item:k-inc_prop_sub}.
\end{proof}

The next result follows immediately from the preceding Lemma and
Lemma~\ref{lem:k-inc_prop} item \ref{lem-item:k-inc_prop_trans}.

\begin{proposition}\label{pro:k-to-zero-delay}
	An automaton $\calN$ is $k$-delay determinizable if and only if
	$\delta_k(\calN)$ is $0$-delay determinizable.
\end{proposition}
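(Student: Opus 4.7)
The plan is to prove both directions by directly combining the three items of Lemma~\ref{lem:delay-construction} with the transitivity of $k$-delay inclusion stated in Lemma~\ref{lem:k-inc_prop}, item~\ref{lem-item:k-inc_prop_trans}. In each direction, the witness deterministic automaton will serve unchanged as the determinizer on the other side; only the inclusion bounds need to be adjusted.

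For the forward direction, I would assume $\calN$ is $k$-delay determinizable and pick a deterministic witness $\calD$ with $\calD \subseteq_k \calN$ and $\inter{\calD} = \inter{\calN}$. Applying item~\ref{lem-item:del_sup} of Lemma~\ref{lem:delay-construction} with $\calM = \calD$ immediately gives $\calD \subseteq_0 \delta_k(\calN)$. Since item~\ref{lem-item:del_equ} of that lemma gives $\inter{\delta_k(\calN)} = \inter{\calN}$, we also get $\inter{\calD} = \inter{\delta_k(\calN)}$. Hence the same $\calD$ is a $0$-delay determinizer of $\delta_k(\calN)$.

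For the converse, I would assume $\delta_k(\calN)$ is $0$-delay determinizable via some deterministic $\calE$ with $\calE \subseteq_0 \delta_k(\calN)$ and $\inter{\calE} = \inter{\delta_k(\calN)}$. Item~\ref{lem-item:del_sub} of Lemma~\ref{lem:delay-construction} yields $\delta_k(\calN) \subseteq_k \calN$, so by transitivity (Lemma~\ref{lem:k-inc_prop}, item~\ref{lem-item:k-inc_prop_trans}) we obtain $\calE \subseteq_{0+k} \calN$, i.e.\ $\calE \subseteq_k \calN$. Combining $\inter{\calE} = \inter{\delta_k(\calN)}$ with item~\ref{lem-item:del_equ} of Lemma~\ref{lem:delay-construction} gives $\inter{\calE} = \inter{\calN}$, so $\calE$ is a $k$-delay determinizer of $\calN$.

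Since everything reduces to invoking already-proved lemmas, there is no genuine obstacle to anticipate; the only care required is bookkeeping to ensure that the determinizer from one side really is deterministic when viewed on the other (both $\calN$ and $\delta_k(\calN)$ share the same alphabet and the same notion of determinism, so this is immediate), and that the language-equivalence ingredient in each definition of ``$k$-delay determinizer'' is picked up from item~\ref{lem-item:del_equ} rather than being reproved from scratch.
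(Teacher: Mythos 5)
Your proposal is correct and matches the paper's argument exactly: the paper also derives this proposition directly from Lemma~\ref{lem:delay-construction} together with the transitivity of $k$-inclusion (Lemma~\ref{lem:k-inc_prop}, item~\ref{lem-item:k-inc_prop_trans}), merely stating that it ``follows immediately'' from these. Your write-up just spells out the two directions that the paper leaves implicit.
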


The above result raises the question of whether, for all $k$, $0$-delay
determinization can be reduced to $k$-delay determinization. We give a positive
answer to this question in the form of Lemma~\ref{lem:0-to-k} in
Section~\ref{sec:lower-bound-delay}. 

We now proceed with the proof of Theorem~\ref{thm:kdelay-in-exp}.

\begin{proof}[Proof of Theorem~\ref{thm:kdelay-in-exp}]
	It should be clear that \TWOEXP~membership follows from
	Proposition~\ref{pro:k-to-zero-delay} and
	Proposition~\ref{pro:0delay-in-exp}. We now observe that the subset
	construction used to decide $0$-delay determinizability need only be
	applied on the first component of the state space resulting from the use
	of the delay construction. In other words, once both constructions are
	applied, a state will correspond to a function $f : Q \to \{-k, \dots,
	k\} \cup \{\bot\}$, where $q \mapsto \bot$ signifies that $q$ is not in
	the subset. The size of the resulting state space is then
	$2^{\mathcal{O}(|Q| \log_2k)}$.  Thus, the composition of this two
	constructions yields only a single exponential.
\end{proof}

\subsection{Lower bound}\label{sec:lower-bound-delay}
We reduce the $0$-delay uniformization problem for synchronous transducers to
that of deciding whether a given automaton is $k$-delay determinizable (for any
fixed $k \in \mathbb{N}$). As the former problem is known to be \EXP-complete
(see~\cite{fjlw16}), this implies the latter is \EXP-hard.

\begin{theorem}\label{thm:kdelay-exp-hard}
	Deciding the $k$-delay problem for a given automaton is
	\EXP-hard, even for fixed $k \in \mathbb{N}$.
\end{theorem}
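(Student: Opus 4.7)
The plan is to reduce from the $0$-delay uniformization problem for synchronous transducers, which is \EXP-complete by~\cite{fjlw16}. Recall that a synchronous transducer has transitions $(p,a,b,q) \in Q \times A \times B \times Q$ producing one output symbol per input symbol, and a $0$-delay uniformizer is a deterministic transducer whose accepting runs coincide step-by-step with accepting runs of $T$.

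First I would construct, in polynomial time from a given transducer $T$ and any fixed $k \in \mathbb{N}$, a weighted automaton $\calN_T$ over input alphabet $A$: the state structure of $T$ is copied verbatim, and each transducer transition $(p,a,b,q)$ is replaced by a weighted transition $(p,a,q)$ of weight $\phi(b)$, where $\phi \colon B \to \mathbb{N}$ is an injective encoding with pairwise differences strictly greater than $2k$ (for example, $\phi(b_j) = (2k+1)j$). The key property of this encoding is that, by the triangle inequality, if two accepting runs of $\calN_T$ on the same input have running sums that stay within $k$ of each other at every prefix, then the weights of their transitions must coincide position by position, hence they correspond to the same output sequence in $T$.

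With this separation in hand, I would establish both directions of the equivalence. For the forward direction, a $0$-delay uniformizer $T'$ of $T$ lifts via $\phi$ to a deterministic weighted automaton $\calD$ whose unique run on each input matches step-by-step some run of $\calN_T$, so $\calD \subseteq_0 \calN_T$ and a fortiori $\calD \subseteq_k \calN_T$ by Lemma~\ref{lem:k-inc_prop}. For the converse, any $k$-delay determinizer of $\calN_T$ is forced on each input to commit deterministically to a single output sequence of $T$ (by the separation argument above applied to the unique DWA run and the witness run of $\calN_T$), which directly defines the required $0$-delay uniformizer.

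The main obstacle is the value-matching requirement $\inter{\calD} = \inter{\calN_T}$: the WA semantics is the pointwise maximum over accepting runs, whereas a transducer uniformizer is free to pick any valid output. I would handle this by first preprocessing $T$, in polynomial time, so that on every input all accepting runs of $T$ produce output words of equal $\phi$-weight---for instance by synchronizing $T$ with a small balancing gadget that compensates uneven contributions without affecting the existence of a $0$-delay uniformizer. With this normalization every valid uniformization induces a maximum-weight run of $\calN_T$, and the whole reduction remains polynomial in $|T|$ for every fixed $k$, so the \EXP-hardness of $0$-delay transducer uniformization transfers to $k$-delay determinization of WA even for fixed $k$.
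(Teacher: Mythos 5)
Your high-level plan matches the paper's: reduce from $0$-delay uniformization of synchronous transducers, and use a weight encoding whose granularity exceeds $2k$ so that $k$-delay closeness of running sums forces the two runs to agree transition-by-transition (the paper packages this separation step as a standalone lemma, scaling all weights by $4k+1$ to reduce the $0$-delay problem to the $k$-delay problem, and then only needs to prove \EXP-hardness for $k=0$). The prefix-sum argument you give for why pairwise weight gaps greater than $2k$ force positionwise equality is correct and is essentially the paper's argument.

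The genuine gap is in how you handle the max semantics. You correctly identify that the value-matching requirement $\inter{\calD} = \inter{\calN_T}$ is the main obstacle, but the ``balancing gadget'' you invoke to make all accepting runs on a given input have equal $\phi$-weight does not exist as described. Two accepting runs of $T$ on the same input can produce output words whose total $\phi$-weights differ by an amount that grows linearly with the input length, so no finite-state gadget with bounded transition weights appended or synchronized with $T$ can equalize them; and any per-step correction large enough to do so would have to depend on the output symbol chosen, which would corrupt the very encoding you rely on for the separation argument. The paper resolves this with a structural trick you are missing: it interleaves a fresh symbol $\#$ after every input letter, replacing each transducer transition $(p,(a,m),q)$ by two weighted transitions $(p,a,(q,m))$ of weight $+m$ and $((q,m),\#,q)$ of weight $-m$. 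Every accepting run then has total value exactly $0$ (so $\inter{\calN_\calT}$ is constantly $0$ and the max is vacuous), while the output symbol is still recoverable from the running sum at the intermediate position between $a$ and $\#$, which is exactly what the $0$-delay condition constrains. Without this (or an equivalent mechanism for neutralizing the pointwise maximum), neither direction of your equivalence goes through: a determinizer of $\calN_T$ would be obligated to track the maximum-weight run rather than an arbitrary valid output, and a uniformizer choosing a non-maximal output would not yield an automaton computing $\inter{\calN_T}$.
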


For convenience, we will first prove that the $0$-delay problem reduces to the
$k$-delay problem for any fixed $k$. We then show the former is \EXP-hard.
\begin{lemma}\label{lem:0-to-k}
	The $0$-delay problem reduces in logarithmic space to the $k$-delay
	problem, for any fixed $k \in \mathbb{N}$.
\end{lemma}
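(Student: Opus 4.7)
The plan is to reduce the $0$-delay problem to the $k$-delay problem via a simple scaling trick. Given an instance $\calN$ of the $0$-delay problem, let $c \defeq 2k+1$ and let $\calN_c$ denote the automaton obtained from $\calN$ by multiplying every transition weight by $c$. Since $k$ is fixed, $c$ is a constant, so constructing $\calN_c$ is a logspace transformation. The reduction is justified by the claim that $\calN$ is $0$-delay determinizable if and only if $\calN_c$ is $k$-delay determinizable.

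The forward direction is routine: if $\calD$ is a $0$-delay determinizer of $\calN$, then $\calD_c$ (obtained from $\calD$ by scaling weights by $c$) is deterministic, satisfies $\inter{\calD_c} = c \cdot \inter{\calN} = \inter{\calN_c}$, and inherits $\calD_c \subseteq_0 \calN_c$ from $\calD \subseteq_0 \calN$. By Lemma~\ref{lem:k-inc_prop}(\ref{lem-item:k-inc_prop_less}), $\calD_c \subseteq_k \calN_c$, so $\calD_c$ is a $k$-delay determinizer of $\calN_c$.

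The interesting direction is the converse. Suppose $\calE$ is a $k$-delay determinizer of $\calN_c$; I will build a $0$-delay determinizer $\calE^*$ of $\calN$. The key observation is that every partial sum of an accepting run of $\calN_c$ is a multiple of $c$, so by $\calE \subseteq_k \calN_c$, every partial sum $s_i$ of an accepting run of $\calE$ lies within $k$ of some multiple of $c$. Since $k < c/2$, that multiple is unique: write $s_i = c m_i + r_i$ with $r_i \in \{-k,\dots,k\}$ and $m_i \in \mathbb{Z}$. Define $\calE^*$ with state space $Q_{\calE} \times \{-k,\dots,k\}$, initial state $(q_I^{\calE},0)$, final states $F_{\calE} \times \{0\}$, and for each transition $(q,a,q')$ of $\calE$ of weight $w$ and each $r \in \{-k,\dots,k\}$, add the transition $((q,r),a,(q',r'))$ of weight $d$, where $(d,r')$ is the unique pair with $r' \in \{-k,\dots,k\}$ and $r+w = cd+r'$ (uniqueness holds because the interval $[r+w-k,r+w+k]$ of length $c$ contains exactly one multiple of $c$).

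It then remains to check that $\calE^*$ is a $0$-delay determinizer of $\calN$. Determinism is immediate. For a word $\alpha$ of value $v \defeq \inter{\calN}(\alpha)$, the total weight of the unique $\calE$-run on $\alpha$ is $cv$, so its final residue is $0$ and the corresponding $\calE^*$-run ends in $F_{\calE}\times\{0\}$, yielding $\lang{\calE^*} = \lang{\calE} = \lang{\calN}$; moreover the weights $d_i$ telescope to $m_n - m_0 = v$, so $\inter{\calE^*} = \inter{\calN}$. For $\calE^* \subseteq_0 \calN$, the matching maximal accepting run of $\calN_c$ provided by $\calE \subseteq_k \calN_c$ has partial sums $c m_i$, hence the corresponding run of $\calN$ has partial sums $m_i$, which coincide exactly with the partial sums of the $\calE^*$-run. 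The main obstacle, and the conceptual heart of the argument, is precisely to verify that the $c$-scaling forces the $k$-delay determinizer's partial sums to cluster tightly around multiples of $c$, so that the residue information can be absorbed into a finite state component without changing the behavior.
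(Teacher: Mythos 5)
Your reduction is correct, and it is the same reduction as the paper's at the top level (scale all weights of $\calN$ by a constant linear in $k$; the forward direction is identical), but your back-translation from a $k$-delay determinizer of the scaled automaton to a $0$-delay determinizer of $\calN$ is genuinely different. The paper scales by $4k+1$ and keeps the determinizer's state space untouched: it rounds each \emph{individual transition weight} of the $k$-delay determinizer to the nearest multiple of $4k+1$, which works because two prefix sums each within $k$ of multiples of $4k+1$ force every single transition weight to be within $2k < (4k+1)/2$ of the corresponding scaled $\calN$-weight, so the rounding recovers it exactly. You instead scale only by $2k+1$ and round \emph{prefix sums} rather than transition weights, which requires carrying the residue $r \in \{-k,\dots,k\}$ in an extra state component and telescoping; your uniqueness argument (a window of $2k+1$ consecutive integers contains exactly one multiple of $2k+1$) and the step $|cm_i - cp_i| \le 2k < c$ forcing $m_i = p_i$ are both sound, and the product with a constant-size residue set is still logspace for fixed $k$. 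The trade-off is that the paper's construction is "stateless" (same automaton, reweighted) at the cost of a larger scaling constant, while yours uses the minimal constant at the cost of multiplying the state space by $2k+1$. One cosmetic remark: the run of $\calN_c$ supplied by $\calE \subseteq_k \calN_c$ need not be invoked as "maximal" — the definition of $k$-inclusion only guarantees an accepting run of equal total value — though in your setting it happens to be maximal since $\calE$ realizes $\inter{\calN_c}$; nothing in your argument depends on maximality, only on the partial sums $cp_i$ being within $k$ of the $s_i$.
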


Let us fix some $k \in \mathbb{N}$. Given the automaton $\calN =
(Q,I,A,\Delta,w,F)$ we denote by $x \cdot \calN$ the automaton $(Q,I,A,\Delta,x
\cdot w,F)$, where $x \cdot w$ is such that $d \mapsto x \cdot w(d)$ for all $d
\in \Delta$. Lemma~\ref{lem:0-to-k} is a direct consequence of the following.
\begin{lemma}
	For every automaton $\calN = (Q,I,A,\Delta,w,F)$, the following statements are equivalent.
	\begin{enumerate}[noitemsep, nolistsep]
	\item \label{N1}
	$\calN$ is $0$-delay determinizable;
	\item \label{N2}
	$(4k + 1) \cdot \calN$ is $0$-delay determinizable;
	\item \label{N3}
	$(4k + 1) \cdot \calN$ is $k$-delay determinizable.
	\end{enumerate}
\end{lemma}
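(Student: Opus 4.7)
The plan is to establish (1) $\Leftrightarrow$ (2) via scaling, (2) $\Rightarrow$ (3) via monotonicity of $k$-inclusion, and (3) $\Rightarrow$ (2) via a ``snapping'' construction. Everything turns on the following elementary observation about $(4k+1) \cdot \calN$: all transition weights, and hence all running sums of all runs, are multiples of $(4k+1)$, and two distinct such multiples differ by at least $4k+1 > 2k$; consequently, every integer lies within distance $k$ of at most one multiple of $(4k+1)$, giving a canonical way to ``snap'' nearby integers to multiples of $(4k+1)$.

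For (1) $\Rightarrow$ (2), I would scale a $0$-delay determinizer $\calD$ of $\calN$ by $(4k+1)$ (multiplying every transition weight by $(4k+1)$); since both $\calD$'s and $\calN$'s running sums scale by the same factor, the equality of running sums (\ie the $0$-delay condition) is preserved. For (2) $\Rightarrow$ (1), given a $0$-delay determinizer $\calE$ of $(4k+1) \cdot \calN$, I would first trim $\calE$, so that every transition lies on some accepting run; along any such run the $0$-delay condition forces $\calE$'s running sums to coincide with those of a matching accepting run of $(4k+1) \cdot \calN$, which are multiples of $(4k+1)$. Hence every weight of the trimmed $\calE$ is divisible by $(4k+1)$, and dividing through by $(4k+1)$ produces a $0$-delay determinizer of $\calN$. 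The implication (2) $\Rightarrow$ (3) is immediate from Lemma~\ref{lem:k-inc_prop} item~\ref{lem-item:k-inc_prop_less}.

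The heart of the argument is (3) $\Rightarrow$ (2). Let $\calD = (Q', \{q_0'\}, A, \Delta', w', F')$ be a $k$-delay determinizer of $(4k+1) \cdot \calN$. I would build a new deterministic automaton $\calE$ with state space $Q' \times \{-k, \ldots, k\}$, initial state $(q_0', 0)$, and accepting states $F' \times \{0\}$. The second component of $(p, d)$ records the discrepancy between $\calD$'s running sum $s$ on the input read so far and the unique nearby multiple $M$ of $(4k+1)$, so that $s = M + d$. Transitions are defined as follows: whenever $\calD$ has $(p, a, q, w) \in \Delta'$ and $d_p + w$ lies within $k$ of a multiple of $(4k+1)$, $\calE$ has a transition from $(p, d_p)$ on $a$ to $(q, d_q)$, where $d_q \in \{-k, \ldots, k\}$ is the unique residue with $d_p + w - d_q$ a multiple of $(4k+1)$; the $\calE$-weight of this transition is set to $d_p + w - d_q$, which is precisely the difference $M_q - M_p$ of snapped multiples.

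The main obstacle is to verify that $\calE$ accepts $\lang{\calD}$ with the same value function, which reduces to checking that every required transition of $\calE$ along an accepting run is actually defined. This is exactly where the $k$-delay hypothesis is invoked: for any accepting run of $\calD$, it supplies a matching accepting run of $(4k+1) \cdot \calN$ whose (multiples-of-$(4k+1)$) running sums stay within $k$ of $\calD$'s running sums at every prefix, so each discrepancy lies in $\{-k, \ldots, k\}$ and the corresponding $\calE$-transition exists. Determinism of $\calE$ is inherited from $\calD$, and by construction the $\calE$-running sums coincide at every step with those of the matching run in $(4k+1) \cdot \calN$, yielding $\calE \subseteq_0 (4k+1) \cdot \calN$ and $\inter{\calE} = \inter{(4k+1) \cdot \calN}$. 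Combined with the scaling step (2) $\Rightarrow$ (1), this yields the three equivalences.
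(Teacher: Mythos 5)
Your proposal is correct, and your treatment of (1)~$\Rightarrow$~(2) and (2)~$\Rightarrow$~(3) matches the paper's proof exactly. For the key direction out of statement~(3), however, you take a genuinely different route. The paper goes directly from (3) to a $0$-delay determinizer of $\calN$ \emph{without changing the state space}: it defines $\gamma(x)$ as the unique integer with $|(4k+1)\gamma(x) - x| \le 2k$ and simply replaces each weight $w'(\delta)$ of the $k$-delay determinizer $\calD'$ by $\gamma(w'(\delta))$. The justification is a triangle inequality on two consecutive prefix bounds: each individual transition weight of $\calD'$ is within $2k$ of $(4k+1)$ times the corresponding transition weight of the matching run of $\calN$, so $\gamma$ recovers the latter exactly; this per-transition slack of $2k$ is precisely why the scaling factor must exceed $4k$. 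You instead snap \emph{prefix sums} rather than individual weights, paying for it with a product construction that multiplies the state space by $2k+1$ in order to remember the current discrepancy (essentially a snapping variant of the paper's own $\delta_k$-construction). Your version is heavier but only needs the prefix-discrepancy bound of $k$, so it would in fact work with any factor exceeding $2k$. Two small points you should make explicit: acceptance of the lifted run requires the final discrepancy to be $0$, which follows from the equal-total-value clause in the definition of $k$-inclusion (the final running sum of the $k$-delay determinizer is itself a multiple of $4k+1$); and in your (2)~$\Rightarrow$~(1) step the divisibility of every weight by $4k+1$ genuinely needs the trimming you perform, since only transitions lying on accepting runs are constrained.
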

\begin{proof}
	Given a $0$-delay
	determinizer $\calD$ of $\calN$, the automaton $(4k + 1) \cdot \calD$ is easily seen to be a
	$0$-delay determinizer of $(4k + 1) \cdot \calN$.
	This proves that the first statement implies the second one.
	Moreover, as a direct consequence of Lemma \ref{lem:k-inc_prop} item \ref{lem-item:k-inc_prop_less}, the second
	statement implies the third one.
	To complete the proof, we argue that if $(4k + 1) \cdot \calN$ is $k$-delay
	determinizable, then $\calN$ is $0$-delay determinizable.  Let $\calD' =
	(Q',I',A,\Delta',w',F')$ be a $k$-delay determinizer of $(4k + 1) \cdot
	\calN$. Let $\gamma$ be the function mapping every integer $x$ to the 
	unique integer $\gamma(x)$ satisfying $|(4k+1)\gamma(x) - x| \leq 2k$, and let $\gamma(\calD')$ denote the
	deterministic automaton $(Q',I',A,\Delta',\gamma \circ w', F')$. 
	We now argue that $\gamma(\calD')$ is a $0$-delay determinizer of $\calN$.
	For every sequence $\rho' = q'_0 a_0 \dots a_{n-1} q'_n \in (Q' \cdot A)^* Q'$, since the states and transitions of $\calD'$ and $\gamma(\calD')$ are identical, $\rho'$
	is an accepting run of $\calD'$ if and only if it is an accepting run of $\gamma(\calD')$.
	Therefore, since $\calD'$ is a $k$-delay determinizer of $(4k + 1) \cdot \calN$, if $\rho'$ is an accepting run of
	$\gamma(\calD')$, there exists an accepting run $\rho = q_0 a_0 \dots a_{n-1} q_n$ of $\calN$ such that $(4k + 1)w(\rho) =
	w'(\rho')$, and for every $0 \leq i \leq n$, $|(4k + 1)w(q_0 \dots q_i) - w'(q_0' \dots q_i')| \leq k$.
	As a consequence, for every $1 \leq i \leq n$ we have
	\[
	\begin{array}{lll}
	& & |(4k + 1)w(q_{i-1} a_{i-1} q_i) - w'(q_{i-1}' a_{i-1} q_{i}')|\\
	& = & |(4k + 1)w(q_0 \dots q_i) - (4k + 1)w(q_0 \dots q_{i-1}) - w'(q_0' \dots q_i') + w'(q_0' \dots q_{i-1}')|\\
	& \leq & |(4k + 1)w(q_0 \dots q_i) - w'(q_0' \dots q_i')| + |w'(q_0' \dots q_{i-1}') - (4k + 1)w(q_0 \dots q_{i-1})| \leq 2k,
	\end{array}
	\]
	hence $w(q_{i-1} a_{i-1} q_i) = \gamma(w(q_{i-1}' a_{i-1} q_{i}'))$. As a consequence, $\gamma(\calD')$ is a $0$-delay determinizer of $\calN$.
\end{proof}

We can now show that the $k$-delay problem is \EXP-hard by arguing that the
$0$-delay problem is \EXP-hard. Let us introduce some notation
regarding transducers.

\myparagraph{Transducers}
A \emph{(synchronous) transducer} $\calT$ from an input alphabet $A_I$ to an
output alphabet $A_O$ is an unweighted automaton $(Q,I,A_I \times
A_O,\Delta,F)$. We denote the \emph{domain} of $\calT$ by $\dom(\calT) \defeq \{
a_0 \dots a_{n-1} \in A_I^* \st (a_0,b_0) \dots (a_{n-1},b_{n-1}) \in
(A_I \times A_O)^*\}$. The transducer $\calT$ is said to be
\emph{input-deterministic} if
for all $p \in Q$, for all $a \in A_I$, there
exist at most one state-output pair $(q,b) \in Q \times A_O$ such that
$(p,(a,b),q) \in \Delta$.
A transducer $\calU$ from $A_I$ to $A_O$ is a
\emph{$0$-delay uniformizer} of $\calT$ if
\begin{inparaenum}[$(i)$]
	\item $\calU$ is input-deterministic,
	\item $\lang{\calU} \subseteq \lang{\calT}$, and
	\item $\dom(\calU) = \dom(\calT)$.
\end{inparaenum}
If such a transducer exists, we say $\calT$ is $0$-delay uniformizable. Given a
transducer, to determine whether it is $0$-delay uniformizable is an \EXP-hard
problem~\cite{fjlw16}.

Intuitively, a transducer induces a relation from input words to output words.
We construct an automaton that replaces the output alphabet by unique positive
integer identifiers. For convenience, we also make sure the constructed
automaton defines a function which maps every word in its language to $0$. 

\myparagraph{From transducers to weighted automata}
Given a transducer $\calT = (Q,I,A_I \times A_O,\Delta,F)$ with $A_0 = \{1,
\dots M\}$, we construct a weighted automaton $\calN_\calT =
(Q',I,A_I\cup\{\#\},\Delta',w,F)$ as follows:
\begin{itemize}[nolistsep]
	\item $Q' = Q \cup Q \times A_O$,
	\item $\Delta' = \{ \left( p, a, (q,m) \right),
			    \left( (q,m), \#, q \right) \st 
		(p,(a,m),q) \in \Delta\}$,
	\item $w : \Delta' \to \mathbb{Z}$, $\left( p, a,
		(q,m) \right) \mapsto m$ and $\left( (q,m), \#,
		q\right) \mapsto -m$.
\end{itemize}
\begin{lemma}\label{lem:props-trans-wa}
	The translation from transducers to weighted automata satisfies the
	following properties.
	\begin{enumerate}[nolistsep]
		\item $q_0 (a_0,m_0) \dots (a_{n-1},m_{n-1}) q_n$ is a run of
			$\calT$ if and only if $q_0 a_0 (q_0, m_0) \# \dots
			(q_{n-1},m_{n-1}) \# q_n$ is a run of $\calN_\calT$.
			Moreover, for all $0 \le i \le n$
			\begin{itemize}[nolistsep]
				\item $w(q_0 \dots q_i) = 0$, and
				\item $w(q_0 \dots (q_i,m_i)) = m_i$;
			\end{itemize}
		\item $\lang{\calN_\calT} = \{a_0 \# \dots \# a_n \# \st a_0
			\dots a_n \in \dom(\calT)\}$;
		\item $\inter{\calN_\calT}(\alpha) = 0$ for all $\alpha
			\in \lang{\calN_\calT}$.
	\end{enumerate}
\end{lemma}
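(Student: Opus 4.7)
The plan is a straightforward unfolding of the construction of $\calN_\calT$, essentially by induction on run length. The first property is the heart of the lemma, and the other two are easy consequences.

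For property (1), I would proceed by induction on $n$. The base case $n=0$ is trivial: a length-$0$ run of $\calT$ is just a single state $q_0$ which is also a length-$0$ run of $\calN_\calT$, and both running-sum equalities reduce to $w(q_0) = 0$. For the inductive step, observe that by construction of $\Delta'$, every transducer transition $(p,(a,m),q) \in \Delta$ corresponds to a unique pair of successive transitions in $\calN_\calT$: an $a$-transition from $p$ to $(q,m)$ with weight $m$, followed by a $\#$-transition from $(q,m)$ back to $q$ with weight $-m$. Conversely, every run of $\calN_\calT$ starting and ending in $Q$ must alternate states in $Q$ and states in $Q \times A_O$, since transitions from $Q$ only lead to $Q \times A_O$ and vice versa. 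Hence the bijection between runs of $\calT$ and runs of $\calN_\calT$ of the claimed form is immediate, and the two weight equalities follow at once: reading $a_i$ from $q_i$ adds $m_i$ to the running sum (yielding $w(q_0 \dots (q_i, m_i)) = m_i$ by the inductive hypothesis $w(q_0 \dots q_i) = 0$), then reading $\#$ subtracts $m_i$ (yielding $w(q_0 \dots q_{i+1}) = 0$).

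For property (2), I use property (1). Every accepting run of $\calN_\calT$ ends in a state of $F' = F \subseteq Q$, so by the structural observation above it must be of the form $q_0 a_0 (q_0,m_0) \# \dots (q_{n-1},m_{n-1}) \# q_n$ with $q_0 \in I$ and $q_n \in F$. Property (1) then gives a matching accepting run of $\calT$ witnessing that $a_0 \dots a_{n-1} \in \dom(\calT)$, and the input word read by $\calN_\calT$ is exactly $a_0 \# \dots a_{n-1} \#$. The converse direction is symmetric.

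Property (3) is immediate from property (1): any accepting run of $\calN_\calT$ has the form guaranteed by the bijection and hence its total weight $w(q_0 \dots q_n) = 0$. Since every accepting run on a given $\alpha \in \lang{\calN_\calT}$ has value $0$, the maximum $\inter{\calN_\calT}(\alpha)$ is also $0$.

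I do not anticipate any obstacle; the construction was tailored so that the $\#$-transitions cancel the weights of the preceding letter-transitions, and all three properties are essentially bookkeeping consequences of the two-step structure of $\calN_\calT$.
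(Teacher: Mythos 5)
Your proof is correct and follows the same route as the paper, which simply states that item (1) holds by construction and that items (2) and (3) are direct consequences of it. You have merely filled in the routine induction and the alternation argument that the paper leaves implicit; there is no substantive difference in approach.
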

\begin{proof}
	The first item follows by construction of the automaton $\calN_\calT$.
	Items $2$ and $3$ are direct consequences of item $1$.
\end{proof}

We are now ready to show the $0$-delay uniformization problem reduces in
polynomial time to the $0$-delay determinization problem. To do so, we show
that any $0$-delay uniformizer of a transducer $\calT$ can be
transformed into a $0$-delay determinizer of $\calN_\calT$,
and vice versa.

\begin{lemma}\label{lem:0delay-exp-hard}
	Deciding the $0$-delay problem for a given automaton is
	\EXP-hard.
\end{lemma}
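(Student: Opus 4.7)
The strategy is to reduce the \EXP-hard $0$-delay uniformization problem for synchronous transducers~\cite{fjlw16} to the $0$-delay determinization problem using the translation $\calT \mapsto \calN_\calT$ just defined. Since $\calN_\calT$ is computable from $\calT$ in logarithmic space, it suffices to show that $\calT$ is $0$-delay uniformizable if and only if $\calN_\calT$ is $0$-delay determinizable.

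For the forward direction, let $\calU$ be an input-deterministic $0$-delay uniformizer of $\calT$. Applying the same transducer-to-weighted-automaton translation to $\calU$ yields an automaton $\calN_\calU$. Input-determinism of $\calU$, together with the fact that the $\#$-transitions of the form $(q,m) \xrightarrow{\#} q$ are always forced, makes $\calN_\calU$ deterministic. An accepting run of $\calN_\calU$ on $a_0 \# \cdots a_{n-1} \#$ corresponds by Lemma~\ref{lem:props-trans-wa}(1) to an accepting $\calU$-run on input $a_0 \cdots a_{n-1}$ with outputs $m_0, \ldots, m_{n-1}$; the inclusion $\lang{\calU} \subseteq \lang{\calT}$ means the same input-output pair is accepted by $\calT$, which lifts back, through the same correspondence applied to $\calT$, to an accepting run of $\calN_\calT$ with identical transition weights. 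Hence $\calN_\calU \subseteq_0 \calN_\calT$. Finally, $\dom(\calU) = \dom(\calT)$ combined with Lemma~\ref{lem:props-trans-wa}(2) gives $\lang{\calN_\calU} = \lang{\calN_\calT}$, and Lemma~\ref{lem:props-trans-wa}(3) gives that both automata realize the constant-$0$ function, so $\calN_\calU$ is a $0$-delay determinizer of $\calN_\calT$.

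For the converse, let $\calD$ be a $0$-delay determinizer of $\calN_\calT$. The key observation is that $0$-delay inclusion forces the running sums of $\calD$'s unique accepting run on any $\alpha \in \lang{\calD}$ to match exactly those of some accepting $\calN_\calT$-run, so taking consecutive differences, the individual transition weights of $\calD$ also coincide with those of such a $\calN_\calT$-run. I extract a uniformizer $\calU$ from $\calD$ by collapsing each pair of consecutive $\calD$-transitions on $a \#$ into a single transducer transition: from a state $p$ of $\calD$ reachable by a prefix in $(A_I \cdot \#)^*$, if $\calD$ transitions on $a$ with weight $m$ to some intermediate state and then on $\#$ to $p''$, put $(p, (a,m), p'')$ into $\calU$'s transition relation, and inherit initial and final states from $\calD$. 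The main obstacle is verifying $\lang{\calU} \subseteq \lang{\calT}$, that is, that the output word extracted by $\calU$ along an entire accepting run is a genuine $\calT$-output and not merely a sequence of locally consistent guesses; this follows because an accepting run of $\calU$ on $(a_0 \cdots a_{n-1}, m_0 \cdots m_{n-1})$ projects to the unique accepting run of $\calD$ on $a_0 \# \cdots a_{n-1} \#$, whose matching $\calN_\calT$-run has transition weights exactly $m_0, -m_0, \ldots, m_{n-1}, -m_{n-1}$ and thus, by Lemma~\ref{lem:props-trans-wa}(1) in the reverse direction, comes from an accepting $\calT$-run producing precisely these outputs. Input-determinism of $\calU$ is inherited from determinism of $\calD$, and $\dom(\calU) = \dom(\calT)$ follows from $\lang{\calD} = \lang{\calN_\calT}$ via Lemma~\ref{lem:props-trans-wa}(2). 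Since both translations are polynomial, \EXP-hardness of the $0$-delay problem follows.
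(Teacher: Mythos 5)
Your proposal is correct and follows essentially the same route as the paper's proof: both directions use the translation $\calT \mapsto \calN_\calT$, converting a uniformizer $\calU$ into the deterministic automaton $\calN_\calU$ via Lemma~\ref{lem:props-trans-wa}, and conversely collapsing consecutive $a$- and $\#$-transitions of a $0$-delay determinizer $\calD$ into transducer transitions whose outputs are the (forced) weights. Your explicit justification that the extracted output word is a genuine $\calT$-output, via the single coherent matching run of $\calN_\calT$, is exactly the point the paper's argument relies on as well.
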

\begin{proof}
	Given a transducer $\calT = (Q,I,A_I \times A_O,\Delta,F)$ with $A_0 =
	\{1,\dots,M\}$, we construct
	$\calN_\calT = (Q',I,A_I\cup\{\#\},\Delta',w,F)$.
%
	Suppose $\calU = (S,\{s_0\},A_I \times A_O,R,G)$ is a $0$-delay
	uniformizer of $\calT$. 
	%
	%
	In what follows we argue that
	$\calN_\calU$ is 
	a $0$-delay determinizer of $\calN_{\calT}$. 
	Since $\calU$ is input-deterministic, the automaton $\calN_\calU$ is
	deterministic. Also, since $\calU$ is a $0$-delay uniformizer of
	$\calT$, then we have that $\dom(\calU) = \dom(\calT)$. Hence, from
	Lemma~\ref{lem:props-trans-wa} item $2$ it follows that $\lang{\calN_\calT}
	= \lang{\calN_\calU}$.
	Since both automata map their languages to the value $0$ (see
	Lemma~\ref{lem:props-trans-wa} item $3$), we have that
	$\inter{\calN_\calU} = \inter{\calN_\calT}$. Finally, by using
	Lemma~\ref{lem:props-trans-wa} item $1$, we get that $\calN_\calU
	\subseteq_0 \calN_\calT$ from the fact that $\lang{\calU} \subseteq
	\lang{\calT}$.

	Assume $\calD = (S,\{s_0\},A_I \cup\{\#\},R,\mu,G)$ is a $0$-delay
	determinizer of $\calN_\calT$.
	Let $\calU$ be the transducer $(S,\{s_0\},A_I
	\times A_O, R',G)$ where $R' = \{ \left( p,(a,m),s )\right) \st
	(p,a,q), (q,\#,s) \in R \land \mu(p,a,q) = -\mu(q,\#,s) = m \}$.
	Since $\calD$ is deterministic, we have
	that $\calU$ is input-deterministic. By construction, we have that
	\[
		\dom(\calU) = \{ a_0 a_1 \dots a_n \st a_0 \# a_1 \# \dots \#
			a_n \# \in \lang{\calD}\}.
	\]
	Therefore, since $\lang{\calN_\calT} = \lang{\calD}$, from
	Lemma~\ref{lem:props-trans-wa} item $2$ we get that $\dom(\calU) =
	\dom(\calT)$. Also, by construction, we have that
	\(
		s_0(a_0,m_0) \dots (a_{n-1},m_{n-1}) s_n
	\)
	is a run of $\calU$ if and only if $s_0 a_0 q_0 \# s_1 a_1 q_1 \# \dots
	q_{n-1}	\# s_n$ is a run of $\calD$ such that $\mu(s_0 \dots q_i) = m_i$ for all $0 \le i \le n$.
	Moreover, since $\calD$ is a $0$-delay determinizer of $\calN_\calT$, $\mu(s_0 \dots s_i) = 0$ for all $0 \le i \le n$ (see Lemma~\ref{lem:props-trans-wa} item $3$).
	Finally, because $\calD \subseteq_0 \calN_\calT$, we get that
	$\lang{\calU} \subseteq \lang{\calT}$ by
	Lemma~\ref{lem:props-trans-wa} item $1$ and the above argument. 
\end{proof}

\section{Deciding $r$-regret determinizability}
In this section we argue that the $r$-regret problem is \EXP-complete. It will
be convenient to \textbf{suppose all automata we work with are trim}.
This is no loss of generality with regard to
$r$-regret determinizability, \ie~an automaton $\calN$ is $r$-regret
determinizable if and only if its trim version, $\calN'$, is $r$-regret
determinizable. Clearly, an $r$-regret determinizer of $\calN'$ is also an
$r$-regret determinizer of $\calN$. Also, it is easy to show that the trim
version $\calD'$ of an $r$-regret determinizer $\calD$ of $\calN$ must also be
an $r$-regret determinizer of $\calN'$.  Furthermore, any automaton can be
trimmed in polynomial time.
%

\subsection{Upper bound}\label{sec:upper-regret}
We will now give an exponential time algorithm to determine whether a given
automaton is $r$-regret determinizable, for a given $r$. The algorithm is based
on a quantitative version of the \emph{Joker game} introduced by Kuperberg and
Skrzypczak to study the determinization of good-for-games automata~\cite{ks15}.
More precisely, the Joker game will correspond to generalization of the
classical energy games~\cite{cdahs03}.

The algorithm is as follows: construct an energy game with resets (which we call
the Joker game) based on the given automaton and decide if \eve wins it; if this
is not the case, then for all $r \in \mathbb{N}$ the automaton is not $r$-regret
determinizable; otherwise, using the winning strategy for \eve in the Joker
game, construct a deterministic automaton $\calD$ realizing the same function as
$\calN$ and use it to decide if $\calN$ is $r$-regret determinizable. The last
step of the algorithm is the simplest. Given a deterministic version of the
original automaton, one can use it as a ``monitor'' and reduce the $r$-regret
determinizability problem to deciding the winner in an energy game.
\begin{theorem}\label{thm:regret-upper}
	Deciding the $r$-regret problem for a given automaton is in
	\EXP.
\end{theorem}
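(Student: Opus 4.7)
The plan is a two-phase algorithm, following the sketch already given in the text. Given $(\calN, r)$ as input, the first phase decides the \emph{existential} approximate $\mathbf{Hom}$-determinization question: is $\calN$ $(s,\mathbf{Hom})$-determinizable for \emph{some} $s \in \mathbb{N}$? If the answer is negative, then in particular $\calN$ is not $r$-regret determinizable, and we reject. If the answer is positive, the witness extracted by this phase is a \emph{deterministic} automaton $\calD$ with $\inter{\calD} = \inter{\calN}$ (though there need not be a homomorphism from $\calD$ to $\calN$, as announced in Theorem~\ref{thm:win-joker-det}). Armed with this exact deterministic equivalent of $\calN$, the second phase reduces the $r$-regret problem to solving a bounded-energy game on the synchronized product of $\calN$ and $\calD$.

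For the first phase, I would encode approximate $\mathbf{Hom}$-determinizability as the winning condition of the \emph{Joker game} played on $\calN$: at each round \adam picks a letter and \eve picks a transition of $\calN$, and she is periodically allowed to spend a \emph{joker} to reset the current configuration to some other reachable state, paying the current accumulated gap between her partial run and the best partial run of $\calN$ as an energy cost. \eve wins if her accumulated energy stays bounded throughout the play. This is a quantitative analogue of the parity-based Joker game of Kuperberg and Skrzypczak~\cite{ks15}, with the parity objective replaced by an energy objective in the style of~\cite{bflms08}. Two facts are needed here: (a) \eve wins the Joker game iff $\calN$ is approximately $\mathbf{Hom}$-determinizable (Lemma~\ref{lem:joker-nec-regret}); and (b) from a finite-memory winning strategy with energy bound $B$ one extracts a $B$-bounded automaton equivalent to $\calN$, which by Proposition~\ref{pro:bound-to-det} yields a DWA $\calD$ realizing $\inter{\calN}$ exactly (Theorem~\ref{thm:win-joker-det}). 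The arena of the Joker game is polynomial in $|\calN|$, but the winning energy credit is at most exponential in $|\calN| \cdot \maxweight$, as is standard for energy games, so the whole first phase runs in $\EXP$.

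For the second phase, let $\calD$ be the DWA produced above. I would build a two-player weighted game whose positions are pairs $(q, p) \in Q_\calN \times Q_\calD$: \adam picks a letter $a$, the $\calD$-component advances deterministically, and \eve chooses an $a$-successor in $\calN$; each move carries the weight $w_\calD - w_\calN$. This is exactly a bounded-energy game: \eve has a finite-memory strategy keeping every running sum within $[-r, r]$ at final positions iff there is an $r$-regret determinizer of $\calN$, by Lemma~\ref{lem:games-det} applied with $\calD$ as a ``monitor'' of the true value $\inter{\calN}$. Since such games on arenas of size $|\calN| \cdot |\calD|$ with window $r$ are solvable in time polynomial in the arena and in $r$, this second phase runs in at most singly exponential time. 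Composing the two phases yields an \EXP~algorithm.

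The main obstacle is the first phase, \ie~the design of the quantitative Joker game and the proof of its correctness. Swapping the parity objective of~\cite{ks15} for an energy objective changes both what a winning strategy guarantees and how jokers must be accounted for; the reset mechanism must be calibrated so that an energy-bounded winning strategy can be turned into a genuine deterministic equivalent of $\calN$, not merely an approximation. Getting this correspondence right is precisely what also yields the announced side result that every approximately $\mathbf{Hom}$-determinizable automaton is exactly determinizable, and hence gives the decidability of approximate $\mathbf{Hom}$-determinization left open in~\cite{akl10}.
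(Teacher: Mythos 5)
Your proposal follows essentially the same two-phase route as the paper: first test whether \eve wins a quantitative (energy-with-resets) Joker game on $\calN$, use a winning strategy to show $\calN$ is $B$-bounded and hence exactly determinizable via Proposition~\ref{pro:bound-to-det} (Theorem~\ref{thm:win-joker-det}), and then decide $r$-regret determinizability by solving an energy game without resets on the product of $\calN$ with the resulting DWA, whose exponential size but polynomially bounded weights keep the whole procedure in \EXP. Two corrections to your sketch of the first phase: in the paper's Joker game it is \adam, not \eve, who plays the joker (resetting \emph{his} alternative run to branch off from \eve's current state, with the energy tracking the difference of the two runs since the last reset), and winning the Joker game is only shown to be \emph{necessary} for approximate $\mathbf{Hom}$-determinizability (Lemma~\ref{lem:joker-nec-regret} is one direction, so your ``iff'' overstates what is proved) --- but necessity is all the algorithm needs, since a positive answer is used only to manufacture the deterministic monitor $\calD$, not to certify approximate determinizability.
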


\myparagraph{Energy games with resets}
An \emph{energy game with resets} (EGR for short) is an infinite-duration
two-player turn-based game played by \eve and \adam on a directed weighted
graph. Formally, an EGR $\calG = (V,V_\exists, E, E_\rho,w)$ consists of: a set
$V$ of vertices, a set $V_\exists \subseteq V$ of vertices of \eve---the set
$V_\forall \defeq
(V \setminus V_\exists)$ of vertices thus belongs to \adam, a set $E \subseteq
V \times V$ of directed edges, a set $E_\rho \subseteq E$ of \emph{reset edges}
such that $E_\rho \subseteq V_\forall \times V$, and a weight
function $w : E \to \mathbb{Z}$.
(Observe that if $E_\rho = \emptyset$, we obtain the classical energy
games without resets~\cite{cdahs03}.)
Pictorially, we represent \eve vertices by
squares and \adam vertices by circles. We denote by $\maxweight$ the value
$\max_{e \in E} |w(e)|$.  Intuitively, from the current vertex $u$, the player
who owns $u$ (\ie~\eve if $u \in V_\exists$, and \adam otherwise)
chooses an edge $(u,v) \in
E$ and the \emph{play} moves to $v$. We formalize the notions of strategy and
play below. 

\begin{figure}
\begin{center}
\begin{tikzpicture}
	\node[ve] (v0) {$v_0$};
	\node[va,right=1.8cm of v0] (v1) {$v_1$};
	\node[ve,above right=1cm of v0] (v2) {$v_2$};

	\path
	(v0) edge[bend left] node[el] {$1$} (v1)
	(v1) edge[bend left] node[el,swap] {$-1$} (v0)
	(v1) edge node[el,swap] {$0$} (v2)
	(v2) edge node[el,swap] {$-2$} (v0)
	;
\end{tikzpicture}
\caption{Energy game with reset edges $E_\rho = \{(v_1,v_2)\}$ where \eve wins
from $v_0$ with initial credit $c_0 = 3$}
\label{fig:egr-example}
\end{center}
\end{figure}
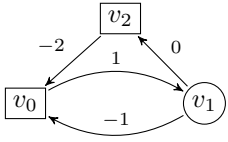

A strategy for \eve (respectively, \adam) in $\calG$ is a mapping $\sigma :
V^* \cdot V_\exists \to V$ (respectively, $\tau : V^* \cdot V_\forall \to V$)
such that $\sigma(v_0 \dots v_n) = v_{n+1}$ ($\tau(v_0 \dots v_n) = v_{n+1}$)
implies $(v_n, v_{n+1}) \in E$. As in regret games, a strategy $\sigma$ for
either player is one which can be encoded as a deterministic Mealy machine
$(S,s_I,\lambda_u,\lambda_o)$ with update function $\lambda_u : S \times V \to S$
and output function $\lambda_o : S \times V \to V$. The machine encodes $\sigma$
in the following sense: $\sigma(v_0 \dots v_n) = \lambda_u(s_n,v_{n})$ where $s_0
= s_I$ and $s_{i+1} = \lambda_u(s_i,v_i)$ for all $0 \le i < n$. As usual,
the memory of a finite-memory strategy refers to the size of the
Mealy machine realizing it.

A play in $\calG$ from $v \in V$ corresponds to an infinite
path in the underlying directed graph $(V,E)$. That is, a sequence $\pi = v_0
v_1 \dots$ such that $(v_i,v_{i+1}) \in E$ for all $i \in \mathbb{N}$. Since an
EGR is played for an infinite duration,
we will henceforth assume they are
played on digraphs with no sinks: \ie for all $u \in V$, there exists $v \in V$
such that $(u,v) \in E$.
We say a play $\pi = v_0 v_1 \dots$ is consistent with a strategy $\sigma$ for
\eve (respectively, $\tau$ for \adam) if it holds that $v_i \in V_\exists$
implies $\sigma(v_0 \dots v_i) = v_{i+1}$ ($v_i \not\in V_\forall$ implies
$\tau(v_0 \dots v_i) = v_{i+1}$).  Given a strategy $\sigma$ for \eve and a
strategy $\tau$ for \adam, and a vertex $v \in V$ there is a unique play
$\pi^v_{\sigma \tau}$ compatible with both $\sigma$ and $\tau$ from $v$.

Given a \emph{finite path} $\phi$ in $\calG$, \ie a sequence $v_0 \dots v_n$
such that $(v_i,v_{i+1}) \in E$ for all $0 \le i < n$, and an initial credit
$c_0 \in \mathbb{N}$, we define the \emph{energy level} of $\phi$ as
$\el_{c_0}(\phi) \defeq c_0 + \sum^{n-1}_{j = i_0} w(v_j,v_{j+1})$ where $0 \le
i_0 < n$ is the minimal index such that $(v_{\ell},v_{\ell + 1}) \not\in E_\rho$
for all $i_0 < \ell < n$.

We say \eve wins the EGR from a vertex $v \in V$ with initial credit $c_0$ if
she has a strategy $\sigma$ such that, for all strategies $\tau$ for \adam, for
all finite prefixes $\phi$ of $\pi^v_{\sigma \tau}$ we have $\el_{c_0}(\phi) \ge
0$. \adam wins the EGR from $v$ with initial credit $c_0$ 
if and only if \eve does not win it.

\begin{example}
	Consider the EGR shown in Fig.~\ref{fig:egr-example}. In this game,
	\eve wins from $v_0$ with initial credit $2$. Indeed, whenever
	\adam plays from $v_1$ to $v_0$ the energy level drops by $1$ but is
	then increased by $1$ when the play returns to $v_1$; when he
	plays from $v_1$ to $v_2$ the energy level is first reset to $2$ and
	then drops to $0$ when the play reaches $v_0$. Clearly then, \adam
	cannot force a negative energy level. However, if $E_\rho$ were empty,
	then \eve would lose the game regardless of the initial credit.
\end{example}

The following properties of energy games (both, with or without resets), which
include \emph{positional determinacy}, will be useful in the sequel. A game is
positionally determined if: for all instances of the game, from all
vertices, either \eve has a positional strategy which is winning for her against
any strategy for \adam, or \adam has a positional strategy which is winning for
him against any strategy for \eve.

\begin{proposition}\label{pro:eg-props}
	For any energy game (both, with or without resets)
	$\calG = (V,V_\exists,E,E_\rho,w)$ the following hold.
	\begin{enumerate}[nolistsep]
		\item\label{itm:pos-det} The game is positionally
			determined if $c_0 \ge |V|\maxweight$.
		\item\label{itm:any-ic} For all $v \in V$, \eve wins from $v \in
			V$ with initial credit $|V|\maxweight$ if and only if
			there exists $c_0 \in \mathbb{N}$ such that she wins
			from $v \in V$ with initial credit $c_0$.
		\item Determining if there exists $c_0 \in \mathbb{N}$ such that
			\eve wins from $v \in V$ with initial credit $c_0$
			is decidable in time polynomial in $|V|$, $|E|$, and
			$\maxweight$.
	\end{enumerate}
\end{proposition}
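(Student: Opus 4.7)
Since reset edges belong exclusively to \adam, every play in an EGR splits naturally into \emph{segments} delimited by \adam's uses of reset edges; within each segment the energy evolves as in a classical energy game, starting from credit $c_0$ at the reset vertex. I would characterize \eve's winning set $W_{c_0} \subseteq V$ via a greatest fixpoint whose iterations reduce to classical energy games, and derive the three items from the classical theory of energy games without resets~\cite{bflms08}.

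Concretely, I would define $W_{c_0}$ as the largest $S \subseteq V$ such that, from every $v \in S$, \eve has a strategy in the classical energy subgame (obtained by ignoring the reset property) that keeps the energy nonnegative until the next reset and forces every reset edge $(u,v') \in E_\rho$ that \adam might pick to satisfy $v' \in S$, so that the next segment starts at $v'$ with credit $c_0 + w(u,v') \geq 0$. This fixpoint can be computed by starting with $S_0 = V$ and iteratively removing from $S_i$ every vertex at which no such strategy exists against the current approximation, each iteration amounting to solving a classical energy game augmented with a safety condition on reset targets. Correctness follows in one direction by stitching \eve's per-segment strategies across resets into a single winning strategy, and conversely by observing that any winning strategy of \eve in the EGR must stay in every approximation.

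Items~\ref{itm:pos-det} and~\ref{itm:any-ic} then follow by combining the fixpoint with positional determinacy of classical energy games. Since every iteration uses the same underlying graph, the positional strategies selected at each step glue into a single positional strategy for \eve on the full EGR. The credit bound propagates across resets because after a reset the new segment opens with credit $c_0 + w(u,v') \geq c_0 - \maxweight$, so choosing $c_0 = |V|\maxweight$ ensures every segment receives at least $(|V|-1)\maxweight$ credit, which is the classical sufficient bound of~\cite{bflms08}. The third item follows since each iteration solves a classical energy game with an extra safety condition in time polynomial in $|V|$, $|E|$, and $\maxweight$, and at most $|V|$ iterations are needed (each strictly shrinks the approximation until the fixpoint is reached).

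The main obstacle is the bookkeeping around credit at resets: one must maintain an invariant of the form ``every segment opens with at least $(|V|-1)\maxweight$ credit,'' which forces the $c_0 = |V|\maxweight$ bound to absorb a worst-case $-\maxweight$ shift contributed by the reset edge itself. A related subtlety, already visible in the Boolean good-for-games setting of~\cite{ks15}, is to ensure that the positional strategies selected at distinct fixpoint iterations are mutually compatible, so that they merge into a single memoryless winning strategy on the full game.
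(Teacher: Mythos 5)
Your route is genuinely different from the paper's. The paper does not decompose plays by resets at all: it recasts the EGR as a \emph{first cycle game} in which the winner is decided by the first cycle formed (\eve wins iff that cycle contains a reset edge or has non-negative weight), shows this cycle property is closed under cyclic permutations and concatenation and that the EGR with credit $|V|\maxweight$ is \emph{greedy} with respect to it, and then invokes the transfer framework of~\cite{ar14} to obtain uniform positional determinacy and strategy transfer in one shot; item~2 then follows from a pumping observation and item~3 from a reduction to a safety game whose states track the energy level up to $|V|\maxweight$. Your segment-wise greatest fixpoint instead reduces everything to classical energy games~\cite{cdahs03,bflms08}, which is more self-contained (no first-cycle machinery) and yields a perhaps more transparent algorithm (at most $|V|$ calls to a classical energy-game solver rather than one large safety game); the price is that determinacy and the credit bound must be re-derived by hand across the strata of the fixpoint, which is exactly where your plan is still thin.

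Two concrete gaps remain. First, the completeness direction of your fixpoint --- and with it the ``only if'' of item~2 --- needs a stratified induction you do not state: if $v$ is removed at iteration $i+1$, \adam either pumps a negative cycle inside the current segment (which defeats \emph{every} initial credit, not just $|V|\maxweight$) or steers the play into a reset edge whose target was removed at some iteration $j \le i$; one must argue that the reset-nesting depth strictly decreases along the strata so that \adam eventually reaches a segment where he really drives the energy negative. Without this, a play in which \adam keeps resetting into ``losing'' vertices replenishes \eve's credit forever and \adam never wins, so ``\eve's winning strategies stay in every approximation'' is not by itself enough. Second, positional determinacy: for \eve the argument closes, since uniform positional determinacy of the fixpoint's modified game gives a single positional $\sigma$ whose consistent plays never reach an \adam vertex carrying a reset edge out of the fixpoint, and every reset re-opens a segment with credit at least $(|V|-1)\maxweight$; but for \adam you must glue the positional strategies $\tau_j$ produced at distinct iterations into one positional strategy (play $\tau_{j(u)}$ at a vertex $u$ removed at iteration $j(u)$) and check, using the inclusion of the successive approximations, that switching to an earlier-stratum strategy mid-segment is still winning. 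You flag this compatibility issue but do not resolve it; it is precisely the work that the appeal to~\cite{ar14} does for the paper.
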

\begin{proof}[Sketch]
	All three properties are known to hold for energy games without
	resets (see, \eg~\cite{cdahs03,bcdgr09}).

	For EGRs the arguments to show these properties hold are
	almost identical to those used in~\cite{bcdgr09}. We first define a
	finite version of the game which is stopped after the first cycle is
	formed and in which the winner is determined based on properties of that
	cycle. If we let \eve win if and only if the cycle has non-negative sum
	of weights or it contains a reset, then we can show she wins this
	\emph{First Cycle Game}~\cite{ar14} if and only if she wins the EGR with
	initial credit $|V|\maxweight$. Furthermore, using a result
	from~\cite{ar14} we obtain that positional strategies suffice for both
	players in both games, \ie~the games are positionally determined.
	
	The second property follows from the relationship between the EGR and the
	first cycle game we construct. More precisely, we show that winning
	strategies for both players transfer between the games. In the first
	cycle game, \adam wins if he can force cycles which have a negative sum
	of weights. Hence, if \eve does not win the EGR with initial credit
	$|V|\maxweight$, then by determinacy, \adam wins the first cycle game,
	and his strategy---when played on the original EGR---ensures only
	negatively-weighted cycles are formed, which in turn means that he wins
	the EGR with any initial credit.
	
	Finally, to obtain an algorithm, we reduce the problem of deciding if
	\eve wins the EGR from $v \in V$ with a given initial credit $c_0$ to
	her winning a \emph{safety game}~\cite{ag11} played on an unweighted
	digraph where the states keep track of the energy level (up to a maximum
	of $|V|\maxweight$).
\end{proof}

Energy games will be our main tool for the rest of this section. They allow us
to claim that, given an automaton $\calN$ and a deterministic automaton $\calD$
which defines the same function, we can decide $r$-regret determinizability.
\begin{proposition}\label{pro:r-regret-eg}
	Given an automaton $\calN = (Q,I,A,\Delta,w,F)$ and $\calD =
	(Q',\{q'_I\},A,\Delta',w',F')$ such that $\calD$ is deterministic and
	$\inter{\calD} = \inter{\calN}$, the $r$-regret problem for
	$\calN$ is decidable in time polynomial in $|Q|$, $|Q'|$,  $|A|$, 
	$\maxweight$, and $\maxweight'$.
\end{proposition}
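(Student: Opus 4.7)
The plan is to reduce the $r$-regret problem for $\calN$ (with monitor $\calD$) to deciding the winner of an energy game with resets, and then invoke Proposition~\ref{pro:eg-props} item~3 to obtain a polynomial-time algorithm.

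The key idea is that, because $\inter{\calD} = \inter{\calN}$ and $\calD$ is deterministic, for every $\alpha \in \lang{\calN}$ the unique accepting run $\rho'_{\alpha}$ of $\calD$ on $\alpha$ realizes the optimal value $\inter{\calN}(\alpha)$. Thus $\calD$ acts as a deterministic \emph{oracle} for the target value Eve must approach. I would construct an EGR $\calG$ whose Adam vertices are pairs $(q,q') \in Q \times Q'$ and whose Eve vertices additionally record the letter just played. At an Adam vertex $(q,q')$, Adam picks $a \in A$, sending the play to an Eve vertex $(q,q',a)$; from there, Eve picks a transition $(q,a,q_1) \in \Delta$ and the play moves to $(q_1,q'_1)$, where $q'_1$ is the unique $a$-successor of $q'$ in $\calD$. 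The weight of Eve's edge is $w(q,a,q_1) - w'(q',a,q'_1)$, so that with initial credit $c_0 = r$ the energy level after any prefix of length $n$ equals $r - (w'(\rho'|_n) - w(\rho|_n))$, i.e.\ the regret budget remaining.

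Two regret-specific constraints must be baked into the game. First, whenever $q' \in F'$ (so the played prefix lies in $\lang{\calD}=\lang{\calN}$), Eve's state must satisfy $q \in F$, since otherwise Adam may stop with a non-accepting Eve run; vertices violating this are routed to a losing sink for Eve. Conversely, when Adam picks a letter that $\calD$ cannot follow, the word has left $\lang{\calN}$ irrevocably, so the play is sent to a winning sink for Eve. Second, the energy must be non-negative precisely at ``checkpoint'' configurations where both $q \in F$ and $q' \in F'$, while being allowed to dip below zero at non-accepting configurations where Eve is repositioning her run. This ``check only at accepting configurations'' semantics is exactly where the reset edges $E_\rho$ of the EGR come into play: by a careful placement of resets on transitions entering non-accepting configurations, the cumulative regret at every accepting prefix can be made to coincide with the EGR's energy-level accounting.

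The resulting game $\calG$ has $O(|Q||Q'||A|)$ vertices, $O(|Q|^2 |Q'| |A|)$ edges, and maximum absolute edge weight $\maxweight + \maxweight'$. Applying Proposition~\ref{pro:eg-props} item~3 then decides the winner in time polynomial in these parameters, matching the claimed bound. The main obstacle is the correctness of the reduction: one must show, using the translation of Lemma~\ref{lem:games-det}, that a winning strategy for Eve in $\calG$ with initial credit $r$ corresponds to a finite-memory strategy of regret $\le r$ in the regret game on $\calN$, and conversely that any $r$-regret determinizer of $\calN$ induces such a winning strategy in $\calG$. Verifying that the reset placement neither undercounts deviations between consecutive accepting prefixes (falsely declaring Eve a winner) nor overcounts them (missing valid $r$-regret strategies) is the crux of the argument.
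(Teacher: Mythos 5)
Your high-level reduction --- simulate the regret game on the product of $\calN$ with the deterministic monitor $\calD$, weight each round by the difference $w(q,a,q_1)-w'(q',a,q'_1)$, route configurations with the $\calD$-component final but the $\calN$-component non-final to a losing sink, send plays where $\calD$ blocks to a winning sink, and solve the resulting energy game via Proposition~\ref{pro:eg-props} --- is exactly the paper's construction, and your complexity count is right. The genuine gap is the mechanism you propose for restricting the regret check to accepting prefixes. Reset edges do not implement ``check the energy only at accepting configurations'': by the definition of $\el_{c_0}$, a reset erases all weight accumulated before it, so if you put resets on edges entering non-accepting configurations, the energy observed at a later accepting configuration reflects only the deficit accrued since the most recent reset, not the total deficit $w'(\rho'_\alpha)-w(\rho_\alpha)$ over the whole prefix --- and the latter is what the regret is. This undercounts deviations across non-accepting stretches and would declare \eve a winner with strategies of unbounded regret. (There is also a formal obstruction: the paper requires $E_\rho\subseteq V_\forall\times V$, whereas the edges entering a new configuration are \eve's.)

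The paper instead uses an energy game \emph{without} resets and exploits trimness of $\calD$: if the total deficit at any prefix exceeds $r+|Q'|(\maxweight+\maxweight')$, then \adam can complete the word to one in $\lang{\calD}=\lang{\calN}$ within $|Q'|$ letters and force regret greater than $r$, so it is sound to demand non-negativity of the energy at \emph{every} prefix provided the initial credit is padded to $r+|Q'|(\maxweight+\maxweight')$. The exact bound $r$ at accepting configurations is then enforced by a separate gadget ($\bot^r_1,\bot^r_2$) reachable by \adam from any configuration that is final in both components, via an edge of weight $1-|Q'|(\maxweight+\maxweight')$ followed by a $-1/+1$ cycle; surviving that gadget is possible precisely when the un-padded deficit at that configuration is at most $r$. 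If you replace your reset idea with this padding-plus-gadget argument, the rest of your proposal goes through.
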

\begin{proof}[Sketch]
	We construct an energy game without resets
	which simulates the regret game played on $\calN$
	while using $\calD$ to compare the weights of transitions chosen by \eve
	to those of the maximal run of $\calN$. Intuitively, \eve chooses an
	initial state in $\calN$, then \adam chooses a symbol, and \eve responds
	with a transition $t \in \Delta$ in $\calN$. Finally, the state of
	$\calD$ is deterministically updated via transition $t'$. The weight of
	the whole round is $w(t) - w'(t')$. We also make sure \eve
	loses if in the regret game she reaches a non-final state and the run of
	$\calD$ is at a final state; or if she reaches a final state with too
	low an energy level (implying a large regret).
	Formally, the energy
	game without resets is $\calG = (V,V_\exists,E, \emptyset,\mu)$ where:
	\begin{itemize}[nolistsep]
		\item $V = Q^2 \cup Q^3 \times A \cup
			\{\top,\bot,\bot^r_1,\bot^r_2\}$;
		\item $V_\exists =  Q^3 \times A$;
		\item $E$ contains edges to simulate transitions of $\calN$
			and $\calD$, \ie 
			$\{ \left((p,q),(p,q,q',a)\right) \st (q,a,q') \in
			\Delta' \} \cup \{ \left((p,q,q',a),(p',q')\right) \st
			(p,a,p') \in \Delta \}$, edges required
			to verify \eve does not reach a non-final state when
			$\calD$ accepts, \ie~$\{ \left((p,q),\bot\right) \st p
			\not\in F \land q \in F' \} \cup \{ (\bot,\bot) \}$,
			edges used to make sure the regret is at most $r$ when
			on final states,
			\ie~$\{ \left((p,q),\bot^r_1\right) \st p \in F \land q
			\in F'\} \cup \{(\bot^r_1,\bot^r_2),
			(\bot^r_2,\bot^r_1)\}$, and edges to punish one of the
			players if an automaton blocks, \ie~$\{ \left( (p,q), \top
			\right) \st \lnot \exists (q,a,q') \in \Delta'\} \cup
			\{ \left( (p,q,q',a), \bot \right) \st \lnot \exists
			(p,a,p') \in \Delta\} \cup \{(\top,\top)\}$;
		\item $\mu : E \to \mathbb{Z}$ is such that
			\begin{itemize}[nolistsep]
				\item $\left( (p,q,q',a),(p',q') \right) \mapsto
					w(p,a,p') - w'(q,a,q')$,
				\item $(\bot,\bot) \mapsto -1$,
				\item $\left( (p,q),\bot^r_1 \right) \mapsto 1 -
					|Q'|(\maxweight + \maxweight')$,
				\item $(\bot^r_1,\bot^r_2) \mapsto -1$,
					$(\bot^r_2,\bot^r_1) \mapsto 1$,
				\item $(\top,\top) \mapsto 1$, and
				\item $e \mapsto 0$ for all other $e \in E$.
			\end{itemize}
	\end{itemize}
	We then claim that for some $p_I \in I$, \eve wins the energy game
	without resets $\calG$ from $(p_I,q'_I)$ with initial credit $r +
	|Q'|(\maxweight + \maxweight')$ if and only if $\calN$ is $r$-regret
	determinizable. The result then follows from the fact $\calG$ is of size
	polynomial w.r.t. $\calD$ and $\calN$, and the application of the
	algorithm (see Proposition~\ref{pro:eg-props}) to
	determine the winner of $\calG$.
\end{proof}

\myparagraph{The Joker game}
The Joker game (JG) is a game played by \eve and \adam on an automaton 
$(Q,I,A,\Delta,w,F)$. It is played as follows:
\eve chooses as initial state $p \in I$ and \adam an initial state $q \in I$ and
the initial configuration becomes $(p,q) \in I^2$.  From the current
configuration $(p,q) \in Q^2$
\begin{inparaenum}
\item[\textbf{(Step i):}] \adam chooses a symbol $a \in A$,
\item[\textbf{(Step ii):}] then \eve chooses a transition $(p,a,p') \in \Delta$,
	and
\item[\textbf{(Step iii):}] \adam can
	\begin{inparaenum}
	\item[\textbf{(Step iii.a):}] choose a transition $(q,a,q') \in
		\Delta$ or
	\item[\textbf{(Step iii.b):}] play $\joker$ and choose a
		transition $(p,a,q') \in \Delta$.
	\end{inparaenum}
\end{inparaenum}
The new configuration is then $(p',q')$.  The weight assigned to each round
corresponds to the weight of the transition chosen by \eve minus the weight of
that chosen by \adam.  If \adam played $\joker$, then the sum of weights is
reset before adding the weight of the configuration change. Additionally, if
\eve moves to a non-final state and \adam moves to a final state, or if 
\eve can no longer extend the run she is constructing, then
\textbf{(Step $\star$):} we ensure \eve loses the game.

We formalize the JG played on $\calN = (Q,I,A,\Delta,w,F)$ as an EGR
$(V,V_\exists,E,E_\rho,\mu)$ with $V = V_\exists \cup V_\forall$, $E =
\bigcup_{1 \le i \le 3} E_{\forall_i} \cup E_\exists \cup E_\rho \cup
\{(\bot,\bot)\}$ where:
\begin{itemize}[nolistsep]
	\item $V_\exists = Q^2 \times A \cup \{\bot\}$;
	\item $V_\forall = Q^2 \cup Q^3 \times A$;
	\item \textbf{(Step i):} $E_{\forall_1} = \{ \left( (p,q), (p,q,a)
		\right) \st (p,q) \in Q^2, (q,a,q') \in \Delta\}$;
	\item \textbf{(Step ii):} $E_\exists = \{ \left( (p,q,a), (p,q,p',a)
		\right) \st (p,q,a) \in V_\exists, (p,a,p') \in \Delta \}$;
	\item \textbf{(Step iii.a):} $E_{\forall_2} = \{ \left( (p,q,p',a),
		(p',q') \right) \st (p,q,p',a) \in Q^3 \times A, (q,a,q') \in
		\Delta \}$;
	\item \textbf{(Step iii.b):} $E_\rho = \{ \left( (p,q,p',a), (p',p'')
		\right) \st (p,q,p',a) \in Q^3 \times A, (p,a,p'') \in \Delta
		\}$;
	\item \textbf{(Step $\star$):} $E_{\forall_3} = \{ \left( (p,q), \bot
		\right) \st p \not\in F \land q \in F \text{ or } \exists a \in
		A, \forall p' \in Q : (p,a,p') \not\in \Delta \}
		\cup \{(\bot,\bot)\}$; and
	\item $\mu$ is such that
		\begin{itemize}[nolistsep]
			\item $(\bot,\bot) \mapsto -1$,
			\item $e \mapsto w(p,a,p') - w(q,a,q')$ for all $e =
				\left( (p,q,p',a), (p',q') \right) \in
				E_{\forall_2}$,
			\item $e \mapsto w(p,a,p') - w(p,a,p'')$ for all $e
				= \left( (p,q,p',a) ,(p,p'') \right) \in
				E_{\rho}$,
			\item and $e \mapsto 0$ for all other $e \in E$.
		\end{itemize}
\end{itemize}
It is easy to verify that there are no sinks in the EGR.

\myparagraph{Winning the Joker game}
We say \eve wins the JG played on $(Q,I,A,\Delta,w,F)$ if there is $p \in I$ such
that, for all $q \in I$, she wins from $(p,q)$ with initial credit
$|V|\maxm$ (where $\maxm \defeq \max_{e \in E} |\mu(e)|$).
Proposition~\ref{pro:eg-props} tells us that, if \eve wins with some initial
credit, then she also wins with initial credit $|V|\maxm$.

We now establish a relationship between $r$-regret determinization and the JG.
\begin{lemma}\label{lem:joker-nec-regret}
	If an automaton $\calN = (Q,I,A,\Delta,w,F)$ is $r$-regret
	determinizable, for some $r \in \mathbb{N}$, then \eve wins the JG
	played on $\calN$.
\end{lemma}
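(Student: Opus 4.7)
The plan is to turn any $r$-regret determinizer $\calD = (Q', \{q'_I\}, A, \Delta', w', F')$ of $\calN$, with its homomorphism $\mu : Q' \to Q$, into a finite-memory winning strategy for \eve in the JG. I would first make both $\calN$ and $\calD$ transition-complete by adding to each a fresh non-final trap state with $0$-weight self-loops on every symbol and routing every missing transition to it; extending $\mu$ by mapping the new $\calD$-trap to the new $\calN$-trap preserves $\lang{\calD}$, $\inter{\calD}$, the homomorphism property, and the $r$-regret bound. The strategy of \eve then uses $\calD$-states as memory: she starts at memory $s_0 \defeq q'_I$ and initial state $p_0 \defeq \mu(q'_I) \in I$, and at each round with current memory $s$ and symbol $a$ chosen by \adam, she fires the $\calN$-transition $(\mu(s), a, \mu(\delta_\calD(s,a))) \in \Delta$ and updates the memory to $\delta_\calD(s,a)$. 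The weight of her $\calN$-run on any input prefix coincides with the $\calD$-run weight on the same prefix by the homomorphism, and completeness ensures the prescription is always well-defined.

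A consequence of the completion is that \adam cannot trigger either clause of Step~$\star$: the second clause is dead because every state now has every symbol, and the first is dead because if the run of \adam reaches a final state of $\calN$ after reading some word $\alpha$, then $\alpha \in \lang{\calN} = \lang{\calD}$, so $\calD$'s unique run on $\alpha$ is accepting, and, since $\mu$ maps final states to final states, \eve is also in a final state. It then remains to bound the accumulated energy in each segment of the play, a segment being the portion between two consecutive jokers (or from the start until the first joker).

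The key structural observation is that within a segment reading a word $\beta$, the runs built by \eve and by \adam are both $\calN$-runs on $\beta$, and for segments beginning with a joker at step $n_0$ they both start at the common state $p^* = \mu(s^*)$, where $s^*$ is the pre-joker $\calD$-memory of \eve, because the joker transition of \adam fires from the state of \eve right before the joker; the initial segment differs only in that the start state of \adam is some $q_0 \in I$, but the bound below applies uniformly. To bound the segment energy at a prefix $\beta$, let $q_\beta$ be the current state of \adam's run and $s_\beta$ the current $\calD$-memory of \eve; fix a word $\alpha$ driving $\calD$ from $q'_I$ to $s^*$ (take $\alpha = \epsilon$ in the initial segment), and, by trimness of $\calN$, fix some $\gamma \in A^*$ of length at most $|Q|$ completing $q_\beta$ to a final state. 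Then $\alpha\beta\gamma \in \lang{\calN} = \lang{\calD}$, and combining the $r$-regret inequality $\inter{\calD}(\alpha\beta\gamma) \ge \inter{\calN}(\alpha\beta\gamma) - r$ with the lower bound on $\inter{\calN}(\alpha\beta\gamma)$ obtained from the concrete $\calN$-run consisting of the trace of \eve on $\alpha$, then the segment run of \adam on $\beta$, then the completion on $\gamma$, and with the deterministic decomposition of $\inter{\calD}(\alpha\beta\gamma)$ along $s^*$ and $s_\beta$, the $\alpha$-terms cancel via the homomorphism and I obtain that the weight of the run of \eve on $\beta$ minus the weight of the run of \adam on $\beta$ is at least $-r - |Q|(\maxweight + \maxweight')$. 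Since this holds uniformly over prefixes and segments, \eve preserves non-negative energy with initial credit $c_0 \defeq r + |Q|(\maxweight + \maxweight')$, and by Proposition~\ref{pro:eg-props} item~\ref{itm:any-ic} this suffices for winning the JG with the required credit $|V|\maxm$.

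The main obstacle is precisely the energy bound just sketched: the $r$-regret property speaks only about full words in $\lang{\calN}$, whereas the energy must be controlled at every intermediate step of an unfinished segment. The crux is the trimness-based extension $\gamma$, which turns every partial segment into a complete word $\alpha\beta\gamma \in \lang{\calN} = \lang{\calD}$ on which $r$-regret applies, together with the observation that the two segment runs share a common starting state in $\calN$, so that concatenating \eve's prefix run, \adam's segment run, and the completion yields a valid $\calN$-witness for $\inter{\calN}$.
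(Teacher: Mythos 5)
Your argument is essentially correct and takes a genuinely different route from the paper. The paper proves the contrapositive: if \eve loses the JG then, by determinacy, \adam has a strategy forcing a negative energy level for \emph{any} initial credit; playing it against the JG-strategy induced by a hypothetical winning regret strategy $\sigma$ yields two runs on the same word whose partial values differ by more than $r+2|Q|\maxweight$, and a trimness-completion of that single word contradicts $\stratregret{\sigma}{\calN}\le r$. You instead build \eve's JG strategy directly from the determinizer $\calD$ via the homomorphism and bound the energy of \emph{every} segment prefix: your reading of $\el$ is right (the reset edge is included in the sum, so a segment's energy is exactly \eve's weight minus \adam's weight on the segment word, both runs issuing from \eve's pre-joker state), and the computation combining $\inter{\calN}(\alpha\beta\gamma)\ge w(\text{\eve on }\alpha)+w(\text{\adam on }\beta)-|Q|\maxweight$ with the regret inequality and the cancellation of the $\alpha$-terms through $\mu$ does give the claimed bound $-r-|Q|(\maxweight+\maxweight')$, uniformly over the initial segment as well. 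Your route buys more than the paper's: an explicit initial credit $r+|Q|(\maxweight+\maxweight')$ and an explicit finite-memory winning strategy of size $|Q'|$; the paper's contrapositive is shorter precisely because it never has to verify that an entire strategy is well defined, only that one bad word exists.

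The one step to be careful about is the completion. Adding a fresh trap to $\calN$ changes the object of the statement: the JG is defined on $\calN$ itself, so you add vertices and edges and, in particular, disable the second disjunct of $E_{\forall_3}$; the completed automaton is also no longer trim, against the standing assumption of the section. The completion is in fact unnecessary. In the JG, \adam may only play a symbol that his current state can read, and his current state is always the endpoint of an initial run of the trim $\calN$ on the word played so far (after a joker his run is prefixed by \eve's initial run up to the joker), so that word is always a prefix of a word of $\lang{\calN}=\lang{\calD}$; hence $\delta_\calD$ is defined on it, $\mu(\delta_\calD(s,a))$ is an available $\calN$-transition for every letter $a$ that \adam can actually play, and your argument for the final-state clause of Step~$\star$ goes through as written. (The literal second disjunct of $E_{\forall_3}$, which lets \adam move to $\bot$ whenever \eve's state is missing \emph{some} letter even one he cannot play, would defeat any proof of this lemma, including the paper's own; under the intended reading---\eve loses only when she cannot answer a letter \adam has played---your argument, with the completion simply dropped, is complete.)
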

\begin{proof}
	We will actually prove the contrapositive holds.  Suppose \eve does not
	win the JG. By determinacy of EGRs (Proposition~\ref{pro:eg-props}
	item~\ref{itm:pos-det}) we know that \adam, for all $p_0 \in I$, has a
	strategy $\tau$ to force from some $(p_0,q_0) \in I^2$ a play
	which eventually witnesses a negative energy level.  Furthermore, he can
	do so for any initial credit
	(Proposition~\ref{pro:eg-props} item~\ref{itm:any-ic}). Let us now
	assume, towards a contradiction, that \eve wins the $r$-regret game with
	a strategy $\sigma$ such that $\sigma(\epsilon) = p_0$.
	Since $\sigma$ is winning for her in the regret game,
	then for all $\alpha \in A^*$, $\sigma(\alpha)$ is an initial run of
	$\calN$. Hence $\sigma$ can be converted into a strategy for \eve in the
	JG by ignoring the transitions chosen by \adam and following $\sigma$
	when \adam chooses a symbol $a \in A$.
	If \eve follows $\sigma$ to play
	in the JG against $\tau$, then there exists $q_0 \in I$ such that
	$\pi^{(p_0,q_0)}_{\sigma \tau}$ eventually witnesses a
	negative energy level even if the initial credit is $r + 2|Q|\maxweight$
	(because $\tau$ is winning for \adam in the JG with any initial credit).
	Moreover, $\pi^{(p_0,q_0)}_{\sigma \tau}$ never reaches the vertex
	$\bot$, since $\sigma(\alpha)$ is an initial run of $\calN$ for all
	$\alpha \in A^*$.
	If we let $\phi = (p_0,q_0) (p_0,q_0,a_0) (p_0,q_0,p_1,a_0) \dots
	(p_n,q_n)$ be the first prefix that witnesses a negative energy level
	with initial credit $r + 2|Q|\maxweight$, and $0 \le i_0 < n$ be the
	minimal index such that no reset occurs for all $i_0 < \ell < n$, then
	$\rho = p_0 a_0 p_1 a_1 \dots p_n$ and $\rho' = p_0 a_0 \dots p_{i_0}
	a_{i_0} q_{i_0 + 1} \dots q_n$ are two runs in $\calN$ such that
	$w(\rho') > w(\rho) + r + 2|Q|\maxweight$. Since
	$\calN$ is trim, there is a final run $q_n a_n \dots a_{m-1} q_m$
	such that $m - n \le |Q|$.  Hence, we have that $\inter{\calN}(a_0 \dots
	a_{m-1}) - \Val(\sigma(a_0 \dots a_{m-1}))  > r$, which contradicts the
	fact that $\sigma$ is winning for \eve in the regret game. It follows
	that there cannot be a winning strategy for \eve in the $r$-regret game.
\end{proof}

From the above results we have that if we construct the JG for the given
automaton and \eve does not win the JG, then the automaton cannot be $r$-regret
determinizable (no matter the value of $r$). We now study the
case when \eve does win.

\myparagraph{Using the JG to determinize an automaton}
Let $\calN = (Q,I,A,\Delta,w,F)$ be an automaton.  We will assume that \eve wins
the JG played on $\calN$. Denote by $W^{\JG} \subseteq Q^2$ the
\emph{winning region} of \eve. That is, $W^{\JG}$ is the set of all $(p,q) \in
Q^2$ such that \eve wins the EGR from $(p,q)$ with initial credit
$|V|\maxm$.  Also, let us write $Q^{\JG}$ for the projection of $W^{\JG}$
on its first component.  Moreover, for every $(p,q) \in W^{\JG}$, let
$\credit(p,q)$ denote the minimal integer $c \in \mathbb{N}$ such that \eve wins
the JG from $(p,q)$ with initial credit $c$.

We will now prove some properties of the sets $W^{\JG}$ and $Q^{\JG}$.
First, the relation $W^{\JG}$ is
transitive.
\begin{lemma}\label{lem:transitivity}
	For all
	$p,q,t \in Q$, if $(p,q),(q,t) \in W^{\JG}$ then $(p,t) \in W^{\JG}$.
\end{lemma}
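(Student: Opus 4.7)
The plan is to construct a winning strategy for Eve from $(p,t)$ by composing her winning plays from $(p,q)$ and $(q,t)$. By positional determinacy of energy games with resets (Proposition~\ref{pro:eg-props} item~\ref{itm:pos-det}), I fix a single positional strategy $\sigma^{*}$ for Eve that is winning from every vertex of $W^{\JG}$ with uniform initial credit $c = |V|\maxm$. I will run $\sigma^{*}$ both in the real play of the $\JG$ from $(p,t)$ and in two virtual plays seeded at $(p,q)$ and $(q,t)$; by Proposition~\ref{pro:eg-props} item~\ref{itm:any-ic}, it suffices to show the real play is won with \emph{some} finite credit.

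In Phase~1 (before Adam first plays $\joker$ in the real game), I maintain the invariant that the configurations are $(p_i,t_i)$ in the real game, $(p_i,q_i) \in W^{\JG}$ in virtual game~B, and $(q_i,t_i) \in W^{\JG}$ in virtual game~C, and that the real energy equals the sum of the two virtual energies. In a round, Adam's symbol $a_i$ is first fed to game~C; $\sigma^{*}$ produces Eve's response $q_{i+1}$ with $(q_i,a_i,q_{i+1})\in\Delta$, and this transition certifies that $a_i$ is also a legal Adam choice in game~B. There, $\sigma^{*}$ produces $p_{i+1}$, which Eve also plays in the real game. If Adam does not use $\joker$, I simulate his choice $t_{i+1}$ as a non-joker move in game~C and the resulting $q_{i+1}$ as Adam's non-joker response in game~B; the per-round weight then decomposes as $[w(p_i,a_i,p_{i+1})-w(q_i,a_i,q_{i+1})] + [w(q_i,a_i,q_{i+1})-w(t_i,a_i,t_{i+1})]$, preserving the invariant. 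A short case analysis on $F$-membership, using both $(p_i,q_i),(q_i,t_i) \in W^{\JG}$, shows that Adam can never trigger the Step-$\star$ edge to $\bot$ from $(p_i,t_i)$.

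When Adam plays $\joker$ in the real game, picking $t_{i+1}$ with $(p_i,a_i,t_{i+1})\in\Delta$, I translate it as a $\joker$ in game~B with the same choice; the real energy and the game-B energy both reset, and the new game-B configuration is $(p_{i+1},t_{i+1}) \in W^{\JG}$, since $\sigma^{*}$ being a positional winning strategy forces every successor of a winning position to remain in the winning region. From this point on, Eve simply plays $\sigma^{*}$ on the real game directly; positional winning from $W^{\JG}$ with credit $c$ then keeps her energy $\ge -c$ in every subsequent inter-reset segment, regardless of further jokers. Combining both phases, Eve wins from $(p,t)$ with initial credit $2c$, so $(p,t)\in W^{\JG}$. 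The only conceptually delicate point---and the main obstacle---is that a $\joker$ in the real play has no natural counterpart in game~C, since a reset there would demand $t_{i+1}$ to be a successor of $q_i$ rather than $p_i$; the resolution is precisely to translate it as a $\joker$ in game~B, whose $\joker$ rule matches exactly, and to discard game~C at that instant, collapsing everything onto a single $\sigma^{*}$-run that stays in the winning region.
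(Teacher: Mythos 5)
Your proof is correct, but it takes a genuinely different route from the paper's. The paper fixes, for each pair $(p,t)$ with $(p,q),(q,t)\in W^{\JG}$, an intermediate state $q_{p,t}$ minimizing $\credit(p,q_{p,t})+\credit(q_{p,t},t)$, lets \eve play at $(p,t,a)$ whatever she would play in the game from $(p,q_{p,t})$, and proves the quantitative bound $\credit(p,t)\le\credit(p,q_{p,t})+\credit(q_{p,t},t)$ by a minimal-counterexample argument: Lemma~\ref{lem:joking} shows that after one round a new intermediate state (namely \eve's answer in the game from $(q,t)$) again witnesses membership in $W^{\JG}$, with a credit sum that has decreased by exactly the weight of the round, so a first negative-energy prefix cannot exist. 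You instead run the two games as explicit parallel simulations, add their energies, and collapse onto a single $\sigma^{*}$-run at the first $\joker$; your key observation that the real $\joker$ matches game~B's $\joker$ rule exactly (both pick an $a_i$-successor of $p_i$), and that the simultaneous resets make game~C disposable at that instant, is exactly right, and the facts you invoke --- uniform positional determinacy, sufficiency of initial credit $|V|\maxm$, and the closure of $W^{\JG}$ under plays consistent with a winning strategy --- are all supplied by Proposition~\ref{pro:eg-props} (and are the same facts that power Lemma~\ref{lem:joking}). The paper's route buys the explicit triangle inequality on $\credit$, which it reuses later; yours is more self-contained and isolates the one delicate point more cleanly, at the cost of a temporarily doubled credit $2|V|\maxm$, which Proposition~\ref{pro:eg-props} then shrinks back to $|V|\maxm$. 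One small omission: your Step-$\star$ case analysis should also rule out the blocking condition (some letter having no successor from $p_i$), not only the $F$-membership mismatch; both exclusions follow from $(p_i,q_i)\in W^{\JG}$ by the same argument.
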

\begin{proof}[Sketch]
	For every $(p,q) \in W^{\JG}$, let $\sigma_{(p,q)}$ denote a winning
	strategy for \eve in the JG played from $(p,q)$ with initial credit
	$\credit(p,q)$. We define a strategy $\sigma$ for \eve in the JG as
	follows. For $(p,q),(q,t) \in W^{\JG}$, let $q_{p,t} \in Q$ denote the
	state such that $\credit(p,q_{p,t}) + \credit(q_{p,t},t)$ is minimal.
	For every $(p,t,a) \in Q^2 \times A$, if $(p,q),(p,t) \in W^{\JG}$, we
	then set $\sigma\left((p,t,a)\right) =
	\sigma_{(p,q_{p,t})}((p,q_{p,t})(p,q_{p,t},a))$.  We then claim that for
	every $(p,q)(q,t) \in W^{\JG}$, the strategy $\sigma$ is winning for
	\eve in the EGR starting from $(p,t)$ with initial credit
	$\credit(p,q_{p,t}) + \credit(q_{p,t},t)$.
\end{proof}

Another property which will be useful in the sequel is that, all the
$a$-successors of a state $p \in Q^{\JG}$ are related (by $W^{\JG}$) to the
$a$-successor chosen by a winning strategy for \eve.
\begin{lemma}\label{lem:joking}
	For all
	$(p,q) \in W^{\JG}$ and $a \in A$, let $\sigma^{\JG}$ be a winning
	strategy for \eve in the JG from $(p,q)$ with initial credit $c \in
	\mathbb{N}$, and let $(p,q,p',a) = \sigma^{\JG}\left((p,q) (p,q,a)\right)$.
	Then, for all $(t,a,p'') \in \Delta$ such that $t \in \{p,q\}$, it holds
	that $(p',p'') \in W^{\JG}$, and $\credit(p',p'') \leq c + w(p,a,p') -
	w(t,a,p'')$.
\end{lemma}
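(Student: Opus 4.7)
The approach is to exhibit a winning strategy for \eve in the EGR from $(p',p'')$ with initial credit $c' := c + w(p,a,p') - w(t,a,p'')$, obtained by restricting $\sigma^{\JG}$ to plays starting at $(p',p'')$. As a preliminary observation, since $\sigma^{\JG}$ is winning from $(p,q)$ with credit $c$, any response of \adam that follows \eve's move to $(p,q,p',a)$---either the non-$\joker$ transition via $(q,a,p'')$ when $t=q$, or the $\joker$ transition via $(p,a,p'')$ when $t=p$---must leave the energy level non-negative at $(p',p'')$. This immediately yields $c' \ge 0$, so $c'$ is a valid candidate for an initial credit.

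Define the candidate strategy $\sigma'$ for \eve in the subgame by $\sigma'(\phi') := \sigma^{\JG}(\pi_0 \cdot \phi')$, where $\pi_0 := (p,q)(p,q,a)(p,q,p',a)$ is the fixed initial segment. Fix any \adam strategy $\tau'$ in the subgame, and extend it to a strategy $\tau$ in the original game by prepending \adam's initial moves leading to $(p',p'')$ (with the $\joker$ move when $t=p$). Each finite prefix $\phi'$ of $\pi^{(p',p'')}_{\sigma'\tau'}$ extends to the prefix $\phi := \pi_0 \cdot \phi'$ of $\pi^{(p,q)}_{\sigma^{\JG}\tau}$, and $\el_c(\phi) \ge 0$ because $\sigma^{\JG}$ is winning from $(p,q)$ with credit $c$. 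It remains to deduce $\el_{c'}(\phi') \ge 0$ from this.

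This deduction proceeds by a case analysis on whether $\phi'$ contains a reset edge. When no reset occurs within $\phi'$, the last reset of $\phi$ is either the initial $\joker$ edge (when $t=p$) or $\phi$ has no reset at all (when $t=q$); a direct computation then yields $\el_{c'}(\phi') = \el_c(\phi) \ge 0$. The main obstacle is the case in which $\phi'$ itself contains a reset edge: then the last resets of $\phi$ and $\phi'$ coincide, so the post-reset sums $R$ agree and one obtains $\el_c(\phi) = c + R$ while $\el_{c'}(\phi') = c' + R$, which calls for a more delicate accounting. I would settle this by combining reset-by-reset bookkeeping with the positional determinacy of EGRs (Proposition~\ref{pro:eg-props}): were \eve to have no winning strategy from $(p',p'')$ with credit $c'$, \adam would have a positional winning counterstrategy at $(p',p'')$, and splicing it after $\pi_0$ into the original game would force the energy below $0$ against $\sigma^{\JG}$, contradicting the latter's winning property from $(p,q)$ with credit $c$. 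Hence $(p',p'') \in W^{\JG}$ and $\credit(p',p'') \le c'$, as claimed.
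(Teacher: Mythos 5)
Your approach is the same as the paper's: restrict $\sigma^{\JG}$ to the continuation after $\pi_0=(p,q)(p,q,a)(p,q,p',a)$ reaches $(p',p'')$, observe that both the $\joker$ and the non-$\joker$ move to $(p',p'')$ are consistent with $\sigma^{\JG}$, and compare energy levels. Your computation in the reset-free case is correct, and it is in fact all that the paper's (very terse) proof supplies: the paper simply asserts that $\sigma^{\JG}$ ``must be winning from $(p',p'')$ with initial credit $c_1 = c + w(p,a,p') - w(t,a,p'')$'' and then invokes Proposition~\ref{pro:eg-props} in case $c_1 > |V|\maxm$.

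The gap is the reset case, which you correctly isolate but do not close. If a prefix $\phi'$ of the sub-play contains a later $\joker$ edge, then $\el_{c'}(\phi') = c' + R$ while $\el_{c}(\pi_0\cdot\phi') = c + R$, and when $c' < c$ (\ie when $w(p,a,p') < w(t,a,p'')$) the non-negativity of the latter says nothing about the former. Your proposed repair via determinacy is the same argument in contrapositive form and inherits the same defect: a winning \adam strategy from $(p',p'')$ for credit $c'$ witnesses $\el_{c'}<0$ on some prefix, and if that prefix contains a reset, splicing it after $\pi_0$ only gives $\el_c = \el_{c'} + (c - c')$, which can be non-negative, so no contradiction with $\sigma^{\JG}$ being winning is reached. (A secondary point: Proposition~\ref{pro:eg-props} gives positional determinacy for initial credit $|V|\maxweight$, not for an arbitrary $c'$.) What your reset-free analysis does yield is that the restricted strategy wins from $(p',p'')$ with initial credit $\max(c,c')$ --- resets then restore at least as much as in the original game --- hence $(p',p'')\in W^{\JG}$ by Proposition~\ref{pro:eg-props}; this membership statement is the part of the lemma used in Lemma~\ref{lem:good_strat}. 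The quantitative bound $\credit(p',p'')\le c'$, however, does not follow from your argument (nor from the paper's one-line justification) when $c' < c$, precisely because $\credit(p',p'')$ is defined with resets restoring the subgame's own credit $c'$ rather than $c$. So you have not missed an idea that the paper provides; you have put your finger on the one step where the published argument is itself incomplete.
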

\begin{proof}
	Observe that from any $(p,q) \in W^{\JG}$ in the JG, after \adam has
	chosen a letter $a \in A$ and \eve a transition $(p,a,p') \in \Delta$,
	\adam could play $\joker$ and choose any transition $(p,a,p'') \in
	\Delta$ or (without playing $\joker$) choose any transition $(q,a,p'')$.
	Hence, for any winning strategy $\sigma^{\JG}$ for \eve in the JG played
	from $(p,q)$ with initial credit $c$ such that $\sigma^{\JG}\left(
	(p,q)(p,q,a) \right) = (p,q,p',a)$, for any $(t,a,p'') \in \Delta$ such
	that $t \in \{p,q\}$, reaching $(p',p'')$ is consistent with
	$\sigma^{\JG}$. It follows that $\sigma^{\JG}$ must be winning for \eve
	from $(p',p'')$ with initial credit $c_1 = c + w(p,a,p') - w(t,a,p'')$.
	If $c_1 \leq |V|\maxm$, we are done.  Otherwise, by
	Proposition~\ref{pro:eg-props}, there is a strategy $\sigma'$ winning
	for \eve from $(p',p'')$ with initial credit $|V|\maxm$. From the
	definition of $W^{\JG}$ we get that $(p',p'') \in W^{\JG}$ as required,
	and the result follows.
\end{proof}

\begin{corollary}\label{cor:rel-two-strats}
	If there are winning strategies $\sigma^{\JG}_1,\sigma^{\JG}_2$ for \eve
	in the JG with initial credit $|V|\maxm$ from $(p,q_1),(p,q_2)$,
	respectively, such that
	$\sigma^{\JG}_1\left( (p,q_1)(p,q_1,a)\right) = (p,q_1,p_1,a)$ and
	$\sigma^{\JG}_2\left( (p,q_2)(p,q_2,a)\right) = (p,q_2,p_2,a)$
	for some $a \in A$,
	then $(p_1,p_2),(p_2,p_1) \in W^{\JG}$.
\end{corollary}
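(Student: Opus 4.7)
The plan is to derive both conclusions by a single direct application of Lemma~\ref{lem:joking}, twice, exploiting the symmetric roles of the two strategies. The key point is that both strategies start from a configuration whose \emph{first} component is the same state $p$, so the transitions they each pick on reading $a$ are all transitions of the form $(p,a,\cdot) \in \Delta$, and such transitions are precisely what \adam is allowed to play via the \joker move in either game.

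First I would show $(p_1,p_2) \in W^{\JG}$. The hypotheses give $(p,q_1) \in W^{\JG}$, a winning strategy $\sigma^{\JG}_1$ for \eve from $(p,q_1)$ with initial credit $|V|\maxm$, and $\sigma^{\JG}_1\bigl((p,q_1)(p,q_1,a)\bigr) = (p,q_1,p_1,a)$. This matches the premises of Lemma~\ref{lem:joking} with $c = |V|\maxm$ and $p' = p_1$. The lemma then asserts that for every $(t,a,p'') \in \Delta$ with $t \in \{p,q_1\}$, we have $(p_1,p'') \in W^{\JG}$. To apply this with $t = p$ and $p'' = p_2$, I need $(p,a,p_2) \in \Delta$; but this is immediate from the existence of the \eve-move $\sigma^{\JG}_2\bigl((p,q_2)(p,q_2,a)\bigr) = (p,q_2,p_2,a)$ in Step~ii of the JG, which by definition requires $(p,a,p_2) \in \Delta$. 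Hence $(p_1,p_2) \in W^{\JG}$.

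Then $(p_2,p_1) \in W^{\JG}$ follows by an entirely symmetric application: use $\sigma^{\JG}_2$ as the winning strategy from $(p,q_2)$, take $p' = p_2$, and invoke Lemma~\ref{lem:joking} with $t = p$ and $p'' = p_1$. The required transition $(p,a,p_1) \in \Delta$ is guaranteed by $\sigma^{\JG}_1$'s move in the first game. No quantitative bound on the credits is needed for the corollary's statement, since membership in $W^{\JG}$ is by definition the existence of \emph{some} initial credit (equivalently $|V|\maxm$, by Proposition~\ref{pro:eg-props}).

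I do not anticipate any real obstacle: the entire proof is two one-line invocations of Lemma~\ref{lem:joking}. The only thing to double-check is that the symmetry is legitimate---namely, that in each application the transition taken by the \emph{other} strategy can legally be chosen by \adam as a \joker move in the game under consideration. This is clear because the \joker option in Step~iii.b of the JG lets \adam pick any $\Delta$-successor of the state $p$ chosen by \eve in Step~ii, and $p$ is the common first component of both initial configurations.
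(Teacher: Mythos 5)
Your proof is correct and is exactly the intended derivation: the paper states this as an immediate corollary of Lemma~\ref{lem:joking} without spelling out a proof, and the argument it has in mind is precisely your two symmetric applications of that lemma with $t=p$, using the \joker move to justify that the other strategy's chosen transition $(p,a,p_2)$ (resp.\ $(p,a,p_1)$) is a legal \adam response.
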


Finally, we note that by following a winning strategy for \eve in the JG, we are
sure all alternative runs of an automaton are related (by $W^{\JG}$)
to the run built by \eve.
\begin{lemma}\label{lem:good_strat}
	For all play prefixes $(p_0,q_0) (p_0,q_0,a_0) \dots$
	$(p_{n-1},q_{n-1},p_n,a_{n-1}) (p_n,q_n)$ consistent with a winning
	strategy for \eve in the JG from $(p_0,q_0)$ with initial credit
	$|V|\maxm$, for all runs $p_0 a_0 p'_1 a_1 \dots a_{n-1} p'_n$ of
	$\calN$ on $a_0 \dots a_{n-1}$ we have that $(p_n,p'_n) \in W^{\JG}$.
\end{lemma}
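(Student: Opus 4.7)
The plan is to prove the claim by induction on $n \ge 1$, using Lemma~\ref{lem:joking} to propagate the winning region one step at a time on both the main play and the alternative run, and then gluing the two via transitivity (Lemma~\ref{lem:transitivity}).

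For the base case $n = 1$, the play prefix is $(p_0,q_0)(p_0,q_0,a_0)(p_0,q_0,p_1,a_0)(p_1,q_1)$ and the alternative run is $p_0 a_0 p_1'$. Since this prefix is consistent with a winning strategy for \eve from $(p_0,q_0)$ with initial credit $|V|\maxm$, we have $(p_0,q_0)\in W^{\JG}$, and \eve's strategy picks $p_1$ at configuration $(p_0,q_0,a_0)$. Noting that $(p_0, a_0, p_1')\in\Delta$ and $p_0 \in \{p_0,q_0\}$, Lemma~\ref{lem:joking} applied with $t = p_0$ and $p'' = p_1'$ immediately yields $(p_1,p_1')\in W^{\JG}$.

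For the inductive step, assume $(p_n,p_n')\in W^{\JG}$ and let $p_{n+1}'$ be such that $(p_n', a_n, p_{n+1}')\in\Delta$. The plan is to interpose an intermediate state $p^*$ chosen by a winning strategy from $(p_n,p_n')$ and chain through it. First, I observe that $(p_n,q_n)\in W^{\JG}$: the tail of the original winning strategy, considered as a strategy from $(p_n,q_n)$ starting with the energy accumulated so far, remains winning (otherwise \adam could have forced a loss in the original play), so \eve wins from $(p_n,q_n)$ with some initial credit, and by Proposition~\ref{pro:eg-props}(2) she then wins with credit $|V|\maxm$. Next, by the inductive hypothesis $(p_n,p_n')\in W^{\JG}$, so fix a winning strategy $\sigma^{*}$ from $(p_n,p_n')$ with credit $|V|\maxm$, and let $p^*$ denote its choice at $(p_n,p_n',a_n)$; note $(p_n,a_n,p^*)\in\Delta$.

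Now I apply Lemma~\ref{lem:joking} twice. Applied to $(p_n,q_n)\in W^{\JG}$ with the continuation of the original strategy picking $p_{n+1}$, taking $t = p_n\in\{p_n,q_n\}$ and $p'' = p^*$, it gives $(p_{n+1},p^*)\in W^{\JG}$. Applied to $(p_n,p_n')\in W^{\JG}$ with $\sigma^{*}$ picking $p^*$, taking $t = p_n'\in\{p_n,p_n'\}$ and $p'' = p_{n+1}'$, it gives $(p^*,p_{n+1}')\in W^{\JG}$. Finally, Lemma~\ref{lem:transitivity} combines $(p_{n+1},p^*)$ and $(p^*,p_{n+1}')$ to conclude $(p_{n+1},p_{n+1}')\in W^{\JG}$, completing the induction.

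The main obstacle is the intermediate step showing that $(p_n,q_n)\in W^{\JG}$ and that a winning strategy from $(p_n,q_n)$ picking $p_{n+1}$ at letter $a_n$ actually exists: the original strategy provides such a strategy but with a possibly different initial credit than $|V|\maxm$, which is why Lemma~\ref{lem:joking}'s formulation for arbitrary $c\in\mathbb{N}$ and Proposition~\ref{pro:eg-props}(2) are both crucial. The rest is a bookkeeping exercise of picking the right witnesses in $\{p_n,q_n\}$ and $\{p_n,p_n'\}$ and chaining the two conclusions through $p^*$.
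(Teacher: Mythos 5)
Your proof is correct and follows essentially the same route as the paper's: an induction establishing $(p_i,q_i)\in W^{\JG}$ and $(p_i,p'_i)\in W^{\JG}$, with the inductive step interposing the state chosen by a winning strategy from $(p_n,p'_n)$ (your $p^*$ is the paper's $t$), applying Lemma~\ref{lem:joking} twice, and closing with transitivity (Lemma~\ref{lem:transitivity}). The only cosmetic difference is that you invoke Lemma~\ref{lem:joking} directly where the paper cites Corollary~\ref{cor:rel-two-strats}, which is itself just that lemma applied to two strategies from a common first component.
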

\begin{proof}
	Let $\sigma^{\JG}_1$ denote the winning strategy referred to in the claim.

	First, it is easy to show by induction that $(p_i,q_i) \in W^{\JG}$ for all
	$0 \le i \le n$. Intuitively, using the fact that
	$\sigma^{\JG}_1$ is winning for \eve
	with initial credit $|V|\maxm$ from $(p_0,q_0)$ we get that
	for any $(p_i,q_i)$ the strategy $\sigma^{\JG}_1$ is winning for her with
	some initial credit. Then, by Proposition~\ref{pro:eg-props}, there is
	another strategy $\sigma'$ that is winning from $(p_i,q_i)$ with initial
	credit $|V|\maxm$.

	We will now argue, by induction, that $(p_i,p'_i) \in W^{\JG}$ for all $0 <
	i \le n$. For the base case,
	it should be clear that $(p_1,p'_1) \in W^{\JG}$. This follows from
	Lemma~\ref{lem:joking}. Hence, we can assume the claim holds for some
	$0 < i < n$. By definition of $W^{\JG}$ we have that \eve has a winning
	strategy $\sigma^{\JG}_2$ in the JG from $(p_i,p'_i)$ with initial credit
	$|V|\maxm$. It follows from Corollary~\ref{cor:rel-two-strats}
	that $(p_{i+1},t) \in W^{\JG}$ where
	\(
		(p_i,q_i,p_{i+1},a_i) = 
		\sigma^{\JG}_1\left( (p_0,q_0) \dots (p_i,q_i,a_i)\right)
	\)
	and
	\(
		(p_i,q_i,t,a_i) =
		\sigma^{\JG}_2\left( (p_0,p_0) \dots (p_i,p'_i,a_i)\right).
	\)
	Using Lemma~\ref{lem:joking} we get that $(t,p'_{i+1}) \in W^{\JG}$. Now, by
	transitivity of $W^{\JG}$ (see Lemma~\ref{lem:transitivity}) we conclude
	that $(p_{i+1},p'_{i+1}) \in W^{\JG}$. The claim then follows by induction.
\end{proof}

We now prove that if \eve wins the JG played on $\calN$,
then the automaton $\calN$ is determinizable. In order to do so, we first prove that 
$\calN$ is $2|V|\maxm$-bounded.

\begin{proposition}\label{pro:JG_delay}
	Let $\calN = (Q,I,A,\Delta,w,F)$ be such that \eve wins the JG played on
	$\calN$. Then $\calN$ is $2|V|\maxm$-bounded.
\end{proposition}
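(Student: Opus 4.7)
The plan is to argue by contradiction. Suppose, towards a contradiction, that there exist a maximal accepting run $\rho_p = p_0 a_0 \dots a_{n-1} p_n$, an index $0 \le i \le n$, and an initial run $\rho_q = q_0 a_0 \dots a_{i-1} q_i$ with $w(\rho_q) - w(p_0 a_0 \dots a_{i-1} p_i) > 2|V|\maxm$. The goal is to exhibit an \adam strategy against \eve's winning strategy $\sigma^{\JG}$ in the JG that forces her to build an accepting run of $\calN$ on $\alpha = a_0 \dots a_{n-1}$ of weight strictly greater than $w(\rho_p)$, contradicting the maximality of $\rho_p$.

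Let $p_0^* \in I$ be \eve's distinguished initial state, and $\sigma^{\JG}$ her winning strategy from $(p_0^*, q_0)$ with initial credit $|V|\maxm$. The \adam strategy divides into two phases. In Phase~1 (rounds $0, \dots, i-1$), \adam plays symbol $a_j$ and uses a normal transition (Step iii.a) to move $q_j$ to $q_{j+1}$ along $\rho_q$, reaching configuration $(\bar p_i, q_i)$, where $\bar p_i$ is the state \eve reaches under $\sigma^{\JG}$. Since the energy level never becomes negative, $w(\bar p_0 a_0 \dots \bar p_i) \ge w(\rho_q) - |V|\maxm$. In Phase~2 (starting at round $i$), \adam plays $a_i$, \eve commits to some $\bar p_{i+1}$, and \adam fires a Joker (Step iii.b), choosing as the reset target an $a_i$-successor $t_{i+1}$ of $\bar p_i$ that admits a run on the remaining suffix $a_{i+1} \dots a_{n-1}$ to a final state $t_n$ with $w(\bar p_i a_i t_{i+1} \dots t_n) \ge w(p_i a_i \dots p_n)$; \adam then plays the normal transitions matching this continuation. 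Step~$\star$ forces \eve's terminal state $\bar p_n$ to be final, so $\bar p_0 a_0 \dots \bar p_n$ is accepting on $\alpha$, and the post-Joker energy condition yields $w(\bar p_i a_i \bar p_{i+1} \dots \bar p_n) \ge w(\bar p_i a_i t_{i+1} \dots t_n) - |V|\maxm \ge w(p_i a_i \dots p_n) - |V|\maxm$.

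Summing the bounds from both phases gives $w(\bar p_0 \dots \bar p_n) \ge w(\rho_q) + w(p_i a_i \dots p_n) - 2|V|\maxm$, which by the contradictory assumption strictly exceeds $w(p_0 \dots p_i) + w(p_i a_i \dots p_n) = w(\rho_p)$, contradicting the maximality of $\rho_p$. The main obstacle is justifying the Phase~2 choice of $t_{i+1}$ and its near-optimal suffix: concretely, one must show that \eve's Phase-1 endpoint $\bar p_i$ admits a best continuation on $a_i \dots a_{n-1}$ at least as heavy as $w(p_i a_i \dots p_n)$. My plan to prove this is to propagate the winning-region invariant along Phase~1 via Lemma~\ref{lem:good_strat}, use Lemma~\ref{lem:joking} to relate successors of $\bar p_i$ under $\sigma^{\JG}$ to arbitrary \adam alternatives in $W^{\JG}$, and stitch the resulting $W^{\JG}$-memberships together using Lemma~\ref{lem:transitivity}, thereby transferring the maximal suffix of $\rho_p$ from $p_i$ through $\bar p_i$ at no additional loss.
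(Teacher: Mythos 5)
Your overall architecture matches the paper's: a first phase in which \adam spells $a_0\dots a_{i-1}$ and tracks $\rho_q$ with normal moves, costing \eve at most $|V|\maxm$, followed by a second phase tracking the suffix $p_i a_i \dots p_n$ of $\rho_p$, costing another $|V|\maxm$, and then the maximality of $\rho_p$ closes the argument. The arithmetic at the end is also right. But the pivotal step of your Phase~2 --- the existence of an $a_i$-successor $t_{i+1}$ of $\bar p_i$ admitting a continuation to a final state with $w(\bar p_i a_i t_{i+1}\dots t_n)\ge w(p_i a_i \dots p_n)$, i.e.\ a transfer of the suffix of $\rho_p$ from $p_i$ to $\bar p_i$ ``at no additional loss'' --- is a genuine gap, and I do not believe it can be filled by the route you sketch. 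The Joker move only lets \adam reset to successors of \emph{\eve's} current state $\bar p_i$, so you are forced to exhibit a continuation from $\bar p_i$ itself; yet the lemmas you invoke (\ref{lem:good_strat}, \ref{lem:joking}, \ref{lem:transitivity}) only establish memberships in $W^{\JG}$ together with \emph{credit bounds}, and $(\bar p_i,p_i)\in W^{\JG}$ guarantees only that continuations from $\bar p_i$ match those from $p_i$ up to an additive loss of $\credit(\bar p_i,p_i)\le |V|\maxm$ --- nothing forces the loss to be zero. If you account for that loss honestly, your chain gives $w(\bar p_i \dots \bar p_n)\ge w(p_i\dots p_n)-2|V|\maxm$ and hence only $3|V|\maxm$-boundedness, not the claimed $2|V|\maxm$.

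The paper avoids this entirely by not insisting that the two phases belong to a single play of the JG. It runs Phase~1 as you do, obtaining an initial run $p'_0\dots p'_i$ with $w(\rho_q)-w(p'_0\dots p'_i)\le |V|\maxm$; then it invokes Lemma~\ref{lem:good_strat} to get $(p'_i,p_i)\in W^{\JG}$ and starts a \emph{fresh} JG play from $(p'_i,p_i)$ with a fresh initial credit $|V|\maxm$, in which \adam follows $\rho_{p,2}=p_i a_i\dots p_n$ with normal (non-Joker) moves. Since $p_i$ is \adam's own coordinate in this second play, no transfer of the suffix to $\bar p_i$ is needed: the energy condition directly yields a final run $p'_i\dots p'_n$ with $w(\rho_{p,2})-w(p'_i\dots p'_n)\le|V|\maxm$. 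Gluing the two \eve-runs at $p'_i$ produces \emph{some} accepting run of $\calN$ of weight at least $w(\rho_q)+w(\rho_{p,2})-2|V|\maxm$, which is all that maximality of $\rho_p$ requires. The insight you are missing is precisely that the witness run may be assembled from two independent plays, each with its own $|V|\maxm$ budget, rather than forced through a single Joker reset.
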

\begin{proof}
	Let $\rho_p = p_0 a_0 p_1 \dots a_{n-1} p_n$ be a maximal accepting run
	of $\calN$, let $i \in \{0, \ldots, n\}$, and let $\rho_q = q_0 a_0 q_1
	\dots a_{i-1} q_i$ be an initial run.  Let us prove that
	\(
		w(\rho_q) - w(p_0 a_0 p_1 \dots a_{i-1} p_i) \leq 2|V|\maxm.
	\)
	Let $\rho_{p,1}$ denote the run $p_0 a_0 p_{1} \dots a_{i-1} p_{i}$, let
	$\rho_{p,2}$ denote the run $p_i a_i p_{i+1} \dots a_{n-1} p_{n}$.
	First, let $\sigma^{\JG}_1$ be a winning strategy for \eve in the JG from
	$(p'_0,q_0)$ (for some $p'_0 \in I$) with initial credit $|V|\maxm$. Let
	$\phi_q = (p'_0,q_0)(p'_0,q_0,a_0)(p'_0,q_0,p'_1,a_0)(p'_1,q_1) \dots
	(p'_i,q_i)$ be the play prefix consistent with $\sigma^{\JG}_1$  that
	corresponds to \adam playing the word $a_0 \ldots a_{i-1}$ and choosing
	the states from the run $\rho_q$.  Since $\sigma^{\JG}_1$ is winning,
	\adam cannot enforce a negative energy level, in other words
	$\el_{|V|\maxm}(\phi_q) \geq 0$, hence:
	\(
		w(\rho_q) - w(p'_0 a_0 p'_1 \dots a_{i-1} p'_i) \leq |V|\maxm.
	\)
	Second, by Lemma \ref{lem:good_strat}, $(p_i',p_i) \in W^{\JG}$, hence
	\eve has a winning strategy $\sigma^{\JG}_2$ in the JG starting from
	$(p_i',p_i)$ with initial credit $|V|\maxm$. Let $\phi_{p,2} =
	(p'_i,p_i) (p'_i,p_i,a_i) (p'_i,p_i,p'_{i+1},a_i) (p'_{i+1},p_{i+1})
	\dots (p'_n,p_n)$ be the play prefix consistent with $\sigma^{\JG}_2$
	that corresponds to \adam playing the word $a_i \ldots a_n$ and choosing
	the states from the run $\rho_{p,2}$.  Since $\sigma^{\JG}_2$ is
	winning, $p_n'$ is a final state and (for the same reason as above) we
	have
	\(
		w(\rho_{p,2}) - w(p'_i a_i p'_{i+1} \dots a_{n-1} p'_n)  \leq
		|V|\maxm = |V|\maxm.
	\)
	Finally, since  $\rho_p$ is maximal by hypothesis, $w(p'_0 a_0 p'_1
	\dots a_{n-1} p'_n) \leq w(\rho_p)$.  Since $w(\rho_p) = w(\rho_{p,1}) +
	w(\rho_{p,2})$, the desired result follows. 
\end{proof}

Since, by definition of the JG, both $|V|$ and $\maxm$ are polynomial w.r.t.
$|Q|$ and $\maxweight$, using Proposition~\ref{pro:bound-to-det} gives us the
following.
\begin{theorem}\label{thm:win-joker-det}
	Given an automaton $\calN = (Q,I,A,\Delta,w,F)$,
	if \eve wins the JG played on $\calN$, then there exists a
	deterministic automaton $\calD$ such that $\inter{\calD} =
	\inter{\calN}$, and whose size and maximal weight are
	polynomial w.r.t. $\maxweight$, and exponential w.r.t. $|Q|$.
\end{theorem}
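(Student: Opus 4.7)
The plan is to combine the two preceding results directly: Proposition~\ref{pro:JG_delay} gives us that whenever \eve wins the JG on $\calN$, the automaton $\calN$ is $B$-bounded with $B = 2|V|\maxm$, and Proposition~\ref{pro:bound-to-det} then yields an equivalent deterministic automaton $\calD$ whose size and maximal weight are polynomial in $\maxweight$ and $B$, and exponential in $|Q|$. So the core of the proof is to show that the bound $B$ itself remains polynomial in $|Q|$, $|A|$, and $\maxweight$, so that composing with Proposition~\ref{pro:bound-to-det} preserves the announced complexity.

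First I would inspect the JG construction to estimate $|V|$ and $\maxm$. The vertex set $V$ of the EGR encoding the JG is a subset of $Q^2 \cup Q^3 \times A \cup \{\bot\}$, so $|V| = \mathcal{O}(|Q|^3 \cdot |A|)$, i.e. polynomial in $|Q|$ and $|A|$. The weight function $\mu$ assigns, on every edge, either $0$, $-1$, or a difference of two transition weights of $\calN$; hence $\maxm \leq 2\maxweight$. Consequently, $B = 2|V|\maxm$ is polynomial in $|Q|$, $|A|$, and $\maxweight$.

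Next I would invoke Proposition~\ref{pro:JG_delay}: since \eve wins the JG played on $\calN$, the automaton $\calN$ is $B$-bounded (and we may assume it is trim without loss of generality, since trimming neither changes $\inter{\calN}$ nor destroys the existence of a winning strategy in the JG). Finally, I would apply Proposition~\ref{pro:bound-to-det} to this $B$-bounded automaton, obtaining a deterministic automaton $\calD$ with $\inter{\calD} = \inter{\calN}$, whose size is polynomial in $\maxweight$ and $B$ and exponential in $|Q|$, and whose maximal weight is polynomial in $\maxweight$ and $B$. Substituting the polynomial bound on $B$ obtained above shows that both the size and maximal weight of $\calD$ remain polynomial in $\maxweight$ and exponential in $|Q|$, as required.

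There is essentially no obstacle: the real work has already been done in Propositions~\ref{pro:JG_delay} and~\ref{pro:bound-to-det}. The only point that deserves a brief justification is the trimness assumption used by Proposition~\ref{pro:JG_delay} and Proposition~\ref{pro:bound-to-det}; but as noted at the start of the section, one can pre-trim $\calN$ in polynomial time without affecting $\inter{\calN}$, which completes the argument.
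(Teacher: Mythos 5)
Your proposal is correct and follows essentially the same route as the paper: the paper likewise derives Theorem~\ref{thm:win-joker-det} by combining Proposition~\ref{pro:JG_delay} (Eve winning the JG implies $2|V|\maxm$-boundedness) with Proposition~\ref{pro:bound-to-det}, after observing that $|V|$ and $\maxm$ are polynomial in the parameters of $\calN$. Your explicit accounting of $|V|=\mathcal{O}(|Q|^3|A|)$, $\maxm\le 2\maxweight$, and of the trimness assumption only makes explicit what the paper leaves implicit.
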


With the results above, we are now in position to prove an
\EXP~upper bound for the $r$-regret problem.
\begin{proof}[Proof of Theorem~\ref{thm:regret-upper}]
	Given an automaton $\calN$ and $r \in \mathbb{N}$, we first determine
	whether \eve wins the JG played on $\calN = (Q,I,A,\Delta,w,F)$.
	To do so, we determine the winner of the corresponding EGR from all
	$(p,q) \in I^2$ with initial credit $|V|\maxm$. We can then, in
	polynomial time, decide if there exists $p \in I$ such that, for all $q
	\in I$, \eve can win the EGR from $(p,q)$. If the latter does not hold,
	then by contrapositive of Lemma~\ref{lem:joker-nec-regret}, $\calN$ is
	not $r$-regret determinizable. Otherwise, we construct $\calD$ such that
	$\inter{\calD} = \inter{\calN}$ and $\calD$ is deterministic using
	Theorem~\ref{thm:win-joker-det}. Finally, we use $\calD$ to decide if
	\eve wins the $r$-regret game using Proposition~\ref{pro:r-regret-eg}.
	Since $\calD$ is of size exponential w.r.t. to $\calN$ but its maximal
	weight is polynomial w.r.t. $\maxweight$, the resulting energy game
	without resets can be solved in exponential time by
	Proposition~\ref{pro:eg-props}.
\end{proof}

As a corollary, we obtain that the existential version of the $r$-regret problem
is also decidable. More precisely, using the techniques we have just presented,
we are able to decide the question: does there exist $r \in \mathbb{N}$ such
that a given automaton $\calN$ is $r$-regret determinizable? The algorithm to
decide the latter question is almost identical to the one we give for the
$r$-regret problem. The only difference lies in the last step, that is, the
energy game without resets constructed from the deterministic version of the
automaton that one can obtain from the JG. Instead of using a function of $r$ as
initial credit, we ask if \eve wins the energy game with initial credit
$|V|\maxm$---we also remove the gadget using vertices
$\bot^r_1,\bot^r_2$ which ensure a regret of at most $r$.
\begin{theorem}\label{thm:existential-regret-prob}
	Given an automaton, 
	deciding whether there exists $r \in \mathbb{N}$ such that it
	is $r$-regret determinizable is in \EXP.
\end{theorem}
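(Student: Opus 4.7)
The plan is to mimic the three-stage algorithm from the proof of Theorem~\ref{thm:regret-upper}, but to replace the last stage with a regret-free test that quantifies $r$ existentially. First I would construct the Joker game on the input automaton $\calN$ and decide, in polynomial time (Proposition~\ref{pro:eg-props}), whether \eve wins it from some initial state with initial credit $|V|\maxm$. If she does not, then by Lemma~\ref{lem:joker-nec-regret} no $r$ makes $\calN$ $r$-regret determinizable, so I reject. Otherwise, using Theorem~\ref{thm:win-joker-det}, I build a deterministic automaton $\calD$ of exponential size but polynomial maximal weight such that $\inter{\calD}=\inter{\calN}$.

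Next I would set up the energy game used in the proof of Proposition~\ref{pro:r-regret-eg}, but removing the $\bot_1^r,\bot_2^r$ gadget whose sole purpose is to certify a specific regret bound $r$; the vertex $\bot$, which punishes \eve when she leaves $\calN$'s language, is kept. Call this game $\calG'$. The new test is: does \eve win $\calG'$ from $(p_I,q_I')$ with initial credit $|\calG'|\maxm'$ for some $p_I\in I$? By Proposition~\ref{pro:eg-props} this question is equivalent to asking whether \eve wins $\calG'$ with \emph{some} finite initial credit, and it is decidable in time polynomial in the size and maximal weight of $\calG'$.

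The correctness argument, which is the main technical obligation, is a two-way translation. For the ``if'' direction, a positional winning strategy for \eve in $\calG'$ yields a finite-memory strategy in the regret game whose energy invariant bounds $w'(\rho_\calD)-w(\rho_{\calN})$ along every play by the initial credit plus at most $|\calG'|\maxm'$; combined with $\inter{\calD}=\inter{\calN}$ and the fact that the $\bot$-gadget forbids leaving $\lang{\calN}$, this yields an $r$-regret determinizer for some $r$ polynomial in $|\calG'|\maxm'$. For the ``only if'' direction, assume $\calN$ is $r$-regret determinizable. Then by Lemma~\ref{lem:games-det} \eve has a finite-memory winning strategy in the regret game that bounds $\inter{\calN}(\alpha)-\Val(\sigma(\alpha))$ by $r$. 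Translating it to a strategy in $\calG'$ (ignoring \adam's $\calD$-moves, which are deterministic anyway) gives a strategy that keeps the energy level bounded below by $-r$ along every prefix ending at a final/final pair; using trimness and the polynomial bound on the maximal weight of $\calD$, a suitable finite initial credit suffices to keep the level non-negative throughout, and then Proposition~\ref{pro:eg-props} item~\ref{itm:any-ic} promotes this to a win with initial credit $|\calG'|\maxm'$.

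The main obstacle I anticipate is this second direction: I must argue that bounded regret along accepting plays lifts to a global energy invariant in $\calG'$, even on prefixes that do not currently pass through a final-state pair. Here trimness of $\calN$ (and hence of $\calD$) is crucial, since from any reachable state one can extend to a final configuration within $|Q||Q'|$ steps, capping any temporary dip by a quantity polynomial in the size and maximal weight of the game. Once this is done, the complexity follows directly: $\calD$ is exponential in $|Q|$ with weights polynomial in $\maxweight$, so $\calG'$ is exponential in $|Q|$ with polynomially bounded weights, and Proposition~\ref{pro:eg-props} solves it in \EXP.
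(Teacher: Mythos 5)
Your proposal follows the paper's own strategy exactly: test whether \eve wins the Joker game, extract a deterministic equivalent $\calD$ via Theorem~\ref{thm:win-joker-det}, and then solve the energy game of Proposition~\ref{pro:r-regret-eg} with the $\bot^r_1,\bot^r_2$ gadget removed and initial credit $|V|\maxm$, which is precisely the paper's Proposition~\ref{pro:exist-regret-eg}. The only (harmless) divergence is in the ``only if'' direction, where you lift a winning $r$-regret strategy directly to an energy-winning strategy by using trimness of $\calD$ to cap any temporary dip at $r + |Q'|(\maxweight+\maxweight')$ (mirroring how the paper argues the fixed-$r$ case), whereas the paper's appendix instead argues by contradiction, pumping a negatively-weighted cycle of the product of the game with \eve's finite-memory machine; both arguments are sound and yield the same \EXP{} bound.
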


\subsection{Lower bound}
In this section we argue that the complexity of the algorithm we described in
the previous section is optimal. More precisely, the $r$-regret problem
is \EXP-hard even if the regret threshold $r$ is fixed.

\begin{theorem}\label{thm:regret-lower-bound}
	Deciding the $r$-regret problem for a given automaton is
	\EXP-hard, even for fixed $r \in \mathbb{N}_{>0}$.
\end{theorem}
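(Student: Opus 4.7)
The plan is to reduce from an \EXP-hard problem already established in the paper, most naturally the $0$-delay uniformization problem for synchronous transducers used in Lemma~\ref{lem:0delay-exp-hard}. Given a transducer $\calT$ and any fixed $r \ge 1$, the goal is to construct in logarithmic space a weighted automaton $\calN^r$ such that $\calN^r$ is $r$-regret determinizable if and only if $\calT$ is $0$-delay uniformizable.

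A direct reuse of the $\calN_\calT$ construction from Lemma~\ref{lem:0delay-exp-hard} does not suffice. Since $\inter{\calN_\calT}(\alpha) = 0$ for every $\alpha \in \lang{\calN_\calT}$, $r$-regret and $0$-regret determinizability of $\calN_\calT$ would coincide, and the latter is in \P~by Proposition~\ref{pro:0regret-easy}. The construction must therefore be augmented so that the $r$ of slack is genuinely exploited. A plausible approach is to multiply the output-encoding weights of $\calN_\calT$ by $2r+1$ (in the spirit of the scaling trick of Lemma~\ref{lem:0-to-k}), so that any inconsistent choice by \eve contributes a weight difference strictly greater than $r$, and then to attach to each state a ``commitment'' gadget that prevents \eve from indefinitely deferring her output commitments by using the $r$ slack as long-term memory.

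For the forward direction, a $0$-delay uniformizer $\calU$ of $\calT$ lifts to a deterministic homomorphic image of $\calN^r$ (in the spirit of $\calN_\calU$ from Lemma~\ref{lem:0delay-exp-hard}) whose runs always achieve the optimal value $0$ on the $\calN_\calT$-part and are controlled by the gadget elsewhere, yielding an $r$-regret determinizer. For the converse, any $r$-regret determinizer $\calD$ of $\calN^r$ projects, via its homomorphism to $\calN^r$, to an input-deterministic sub-transducer of $\calT$; the weight scaling combined with the commitment gadget should imply that this sub-transducer covers $\dom(\calT)$, giving a valid $0$-delay uniformizer.

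The main obstacle is the design of the commitment gadget: it must be of polynomial size, respect the homomorphism requirement (so that it is ``simulable'' by any homomorphic determinizer), and leave \eve no way to accumulate $r$-slack across many transitions to bypass the uniformization constraint. This point is subtle because the gap between $0$-regret (in \P) and $r$-regret (\EXP) reflects the difference between memoryless ``same-state'' uniformizers and genuine memory-augmented $0$-delay uniformizers; the gadget must precisely encode this gap, using the trimness assumption and reachability-to-final-state arguments to force every hidden inconsistency to eventually surface as a regret strictly exceeding $r$.
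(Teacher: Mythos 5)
Your proposal is a plan rather than a proof: the object on which everything hinges, the ``commitment gadget'', is never constructed, and you yourself identify its design as the main obstacle. Beyond that incompleteness, there is a concrete obstruction to the forward direction as you state it. A $0$-delay uniformizer $\calU$ of $\calT$ only satisfies the \emph{language} inclusion $\lang{\calU} \subseteq \lang{\calT}$; this does not provide a homomorphism from $\calU$ (or from $\calN_\calU$) to $\calT$ (or to $\calN_\calT$), because a single state of $\calU$ may be consistent with several incomparable states of $\calT$ and its transitions need not project to transitions of $\calT$. Condition~(i) of $r$-regret determinizability requires membership in $\mathbf{Hom}(\calN^r)$, \ie that \eve resolve the nondeterminism of $\calT$ \emph{state by state} on the fly, which is strictly harder than choosing outputs as a uniformizer does. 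So ``a $0$-delay uniformizer lifts to a deterministic homomorphic image'' is exactly the step that fails, and it fails for the same reason that $0$-regret determinizability of $\calN_\calT$ (in \P, by memoryless strategies) is \emph{not} equivalent to $0$-delay uniformizability of $\calT$ (\EXP-hard). Scaling weights by $2r+1$ does not address this, since it only penalizes value mismatches, not the absence of a homomorphism.

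The paper takes a different and self-contained route: a reduction from countdown games. The key idea is that in the regret game \eve must track the \emph{belief set} of states reachable on \adam's word, and this exponentially large belief set is used to encode the binary value of the countdown counter; an ``adder'' gadget forces \adam to spell out the bits of each chosen duration, and the regret threshold ($<2$ with $\maxweight = 2$) separates the case where the target value $N$ is hit (all alternative runs trapped at value $0$) from the case where it is overshot (some alternative run reaches value $2$). If you want to repair your approach, you would essentially have to re-derive this belief-set machinery inside your gadget, at which point reducing directly from countdown games is both simpler and already matches what the paper does.
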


Observe that, in regret games, \eve may need to keep track of all runs of the
given automaton on the word $\alpha$ which is being ``spelled'' by \adam.
Indeed, if she has so far constructed the run $\rho$ and \adam chooses symbol
$a$ next, then her choice of transition to extend $\rho$ may depend on the set
of states at which alternative runs of the automaton on $\alpha$ end. The set of
all such configurations is exponential.
%

Our proof of the $r$-regret problem being \EXP-hard makes sure that \eve has to
keep track of a set of states as mentioned above. Then, we encode configurations
of a binary counter into the sets of states so that the set of states at which
\eve \emph{believes} alternative runs could be at, represent a valuation of the
binary counter. Finally, we give gadgets which simulate addition of constants to
the current valuation of the counter. These ingredients allow us to simulate
Countdown games~\cite{jsl08} using regret games. As the former kind of games
are~\EXP-hard, the result follows. The same reduction has been used to show that
regret minimization against \emph{eloquent adversaries} in several
\emph{quantitative synthesis games} is \EXP-hard~\cite{hpr15-journal}.

\section{Further research directions}
When the regret $r$ is given, the $r$-regret determinization problem
is \EXP-complete. When $r$ is not given, the problem is in
\EXP~but we did not found any lower bound other than
\P-hardness. Characterizing the precise complexity of this problem is open.

The latter is related to the following question. From our decision procedure for
solving the existential regret problem, it appears that if a WA is $r$-regret
determinizable for some $r$, it is also $r'$-regret determinizable for some $r'$
that depends exponentially on the WA. So far, we have not found any family of WA
that exhibit exponential regret behaviour, and the best lower bound we have is
quadratic in the size of the WA (see Appendix~\ref{sec:lower-bound-r}).

Finally, we would like to investigate the notions of delay- and
regret-determinization for other measures, such as discounted sum~\cite{cdh10}
or ratio~\cite{conf/concur/FiliotGR12}. These notions also make sense for other
problems, such as comparison and equality of weighted automata (which are
undecidable for max-plus automata), and disambiguation (deciding whether a given
WA is equivalent to some unambiguous one)~\cite{kirsten_et_al:LIPIcs:2009:1850}.

\bibliographystyle{alpha}
\bibliography{refs}

\newcommand{\etalchar}[1]{$^{#1}$}
\begin{thebibliography}{CdAHS03}

\bibitem[ABK11]{conf/atva/AlmagorBK11}
Shaull Almagor, Udi Boker, and Orna Kupferman.
\newblock What's decidable about weighted automata?
\newblock In {\em ATVA}, volume 6996 of {\em LNCS}, pages 482--491, 2011.

\bibitem[AG11]{ag11}
Krzysztof~R. Apt and Erich Gr\"{a}del.
\newblock {\em Lectures in game theory for computer scientists}.
\newblock Cambridge University Press, 2011.

\bibitem[AKL10]{akl10}
Benjamin Aminof, Orna Kupferman, and Robby Lampert.
\newblock Reasoning about online algorithms with weighted automata.
\newblock {\em ACM Trans. on Algorithms}, 2010.

\bibitem[AKL13]{journals/tcs/AminofKL13}
Benjamin Aminof, Orna Kupferman, and Robby Lampert.
\newblock Rigorous approximated determinization of weighted automata.
\newblock {\em TCS}, 480(104--117), 2013.

\bibitem[AR14]{ar14}
Benjamin Aminof and Sasha Rubin.
\newblock First cycle games.
\newblock In {\em Proceedings of Strategic Reasoning}, pages 83--90, 2014.

\bibitem[BCD{\etalchar{+}}11]{bcdgr09}
L.~Brim, J.~Chaloupka, L.~Doyen, R.~Gentilini, and J.-F. Raskin.
\newblock Faster algorithms for mean-payoff games.
\newblock {\em FMSD}, 38(2):97-118, 2011.

\bibitem[BCPS03]{journals/tcs/BealCPS03}
Marie-Pierre B{\'e}al, Olivier Carton, Christophe Prieur, and Jacques
  Sakarovitch.
\newblock Squaring transducers: an efficient procedure for deciding
  functionality and sequentiality.
\newblock {\em TCS}, 292(1):45--63, 2003.

\bibitem[BFL{\etalchar{+}}08]{bflms08}
P.~Bouyer, U.~Fahrenberg, K.~G. Larsen, N.~Markey, and J.~Srba.
\newblock Infinite runs in weighted timed automata with energy constraints.
\newblock In {\em {FORMATS}}, LNCS 5215, pages 33--47. Springer, 2008.

\bibitem[CD16]{journals/mst/ColcombetD16}
Thomas Colcombet and Laure Daviaud.
\newblock Approximate comparison of functions computed by distance automata.
\newblock {\em TCS}, 58(4):579--613, 2016.

\bibitem[CdAHS03]{cdahs03}
Arindam Chakrabarti, Luca de~Alfaro, Thomas~A. Henzinger, and Mari{\"{e}}lle
  Stoelinga.
\newblock Resource interfaces.
\newblock In {\em EMSOFT}, volume 2855 of {\em LNCS}, 2003.

\bibitem[CDH10]{cdh10}
Krishnendu Chatterjee, Laurent Doyen, and Thomas~A. Henzinger.
\newblock Quantitative languages.
\newblock {\em {ACM} Trans. Comput. Log.}, 11(4), 2010.

\bibitem[CE{\etalchar{+}}10]{conf/concur/ChatterjeeDEHR10}
Krishnendu Chatterjee, Laurent~Doyen 0001, Herbert Edelsbrunner, Thomas~A.
  Henzinger, and Philippe Rannou.
\newblock Mean-payoff automaton expressions.
\newblock In {\em CONCUR}, volume 6269 of {\em LNCS}, pages 269--283. Springer,
  2010.

\bibitem[DG07]{journals/tcs/DrosteG07}
Manfred Droste and Paul Gastin.
\newblock Weighted automata and weighted logics.
\newblock {\em TCS}, 380(1-2):69--86, 2007.

\bibitem[DGM16]{dav16}
Laure Daviaud, Pierre Guillon, and Glenn Merlet.
\newblock Comparison of max-plus automata and joint spectral radius of tropical
  matrices.
\newblock {\em arXiv preprint arXiv:1612.02647}, 2016.

\bibitem[DKV09]{Droste:2009:HWA:1667106}
Manfred Droste, Werner Kuich, and Heiko Vogler.
\newblock {\em Handbook of Weighted Automata}.
\newblock Springer Publishing Company, Incorporated, 1st edition, 2009.

\bibitem[FGR12]{conf/concur/FiliotGR12}
Emmanuel Filiot, Raffaella Gentilini, and Jean-Fran{\c c}ois Raskin.
\newblock Quantitative languages defined by functional automata.
\newblock In {\em CONCUR}, volume 7454 of {\em LNCS}, 2012.

\bibitem[FGR14]{DBLP:conf/fsttcs/FiliotGR14}
Emmanuel Filiot, Raffaella Gentilini, and Jean-Fran{\c{c}}ois Raskin.
\newblock Finite-valued weighted automata.
\newblock In {\em {FSTTCS}}, pages 133--145, 2014.

\bibitem[FJLW16]{fjlw16}
Emmanuel Filiot, Isma{\"{e}}l Jecker, Christof L{\"{o}}ding, and Sarah Winter.
\newblock On equivalence and uniformisation problems for finite transducers.
\newblock In {\em {ICALP}}, volume~55 of {\em LIPIcs}, pages 125:1--125:14.
  Schloss Dagstuhl, 2016.

\bibitem[HP06]{hp06}
Thomas~A. Henzinger and Nir Piterman.
\newblock Solving games without determinization.
\newblock In {\em {CSL}}, pages 395--410, 2006.

\bibitem[HPR16]{hpr15-journal}
Paul Hunter, Guillermo~A. P{\'e}rez, and Jean-Fran{\c{c}}ois Raskin.
\newblock Reactive synthesis without regret.
\newblock {\em Acta Informatica}, pages 1--37, 2016.

\bibitem[JSL08]{jsl08}
Marcin Jurdzinski, Jeremy Sproston, and Fran{\c{c}}ois Laroussinie.
\newblock Model checking probabilistic timed automata with one or two clocks.
\newblock {\em {LMCS}}, 4(3), 2008.

\bibitem[KL09]{kirsten_et_al:LIPIcs:2009:1850}
Daniel Kirsten and Sylvain Lombardy.
\newblock Deciding unambiguity and sequentiality of polynomially ambiguous
  min-plus automata.
\newblock In {\em {STACS}}, volume~3 of {\em LIPIcs}, pages 589--600. Schloss
  Dagstuhl, 2009.

\bibitem[Kro92]{Krob:1992:EPR}
D.~Krob.
\newblock The equality problem for rational series with multiplicities in the
  tropical semiring is undecidable.
\newblock {\em LNCS}, 623:101--??, 1992.

\bibitem[KS15]{ks15}
Denis Kuperberg and Michal Skrzypczak.
\newblock On determinisation of good-for-games automata.
\newblock In {\em {ICALP}}, volume 9135 of {\em LNCS}, pages 299--310, 2015.

\bibitem[KZ15]{kz15}
Felix Klein and Martin Zimmermann.
\newblock How much lookahead is needed to win infinite games?
\newblock In {\em ICALP}, volume 9135 of {\em LNCS}, pages 452--463, 2015.

\bibitem[Moh97]{Mohri_FiniteState_CoLing}
Mehryar Mohri.
\newblock Finite-state transducers in language and speech processing.
\newblock {\em Computational Linguistics}, 23(4), 1997.

\bibitem[SdS08]{sakarovitch_et_al:LIPIcs:2008:1324}
Jacques Sakarovitch and Rodrigo de~Souza.
\newblock On the decomposition of k-valued rational relations.
\newblock In {\em STACS}, volume~1 of {\em LIPIcs}, pages 621--632, 2008.

\end{thebibliography}

\clearpage
\appendix
\section{On Proposition~\ref{pro:0regret-easy}}
\myparagraph{Ratio vs. difference}
We remark that the results from~\cite{akl10} are on minimizing regret with
respect to the ratio measure and not the difference. However, ratio $1$
coincides with difference $0$.

\myparagraph{Determinization by pruning \& (N,min,+) vs. (Z,max,+)}
In~\cite{akl10} the authors actually consider automata over the \emph{tropical
semiring}. That is, their automata can only have non-negative integers
and they use $\min$ instead of $\max$ to aggregate multiple runs of the
automaton over the same word. However, it is easy to see that \eve wins an
$r$-regret game played on an automaton $\calN$ with integer weights if and
only if she wins an $r$-regret game on the same automaton with weights
``shifted'' so they all become negative (recall we use $\max$ and not $\min$).
More precisely we subtract $\maxweight$ from the weights of all transitions,
then multiply them all by $-1$, and denote the resulting automaton by $\calM$.
Clearly, the $(\mathbb{N}\cup \{ +\infty \},\min,+)$-automaton $\calM$ is
$r$-regret determinizable if and only if the $(\mathbb{Z}\cup \{ -\infty
\},\max,+)$-automaton $\calN$ is $r$-regret determinizable.

\section{Proof of Proposition~\ref{pro:bound-to-det}}

We present here in details the sketched construction.
Let us define $\mathcal{D} = (Q', \{q_I'\}, A,\Delta', F',w')$ as follows.
\begin{itemize}[nolistsep]
	\item $Q'$ is the set of functions from $Q$ to the set $\{-B, \ldots, B
		\} \cup \{ - \infty \}$.  The idea is that, on input $u$,
		$\mathcal{D}$ deterministically chooses a run $\rho$ of $\calN$ on $u$, outputs
		the corresponding weight, and uses its state to keep in memory,
		for each state $q \in Q$, the delay between $\rho$ and the
		maximal run of $\calN$ on $u$ ending in $q$;
	\item $q_I'$ is the function mapping each initial state of $\calN$ to $0$,
		and all the other states to $- \infty$;
	\item We now define $\Delta'$ and $w'$.  For every pair $(g,a) \in Q'
		\times A$, we have $(g,a,\delta_{g,a}) \in \Delta'$ and
		$w'(g,a,g') = w_{g,a}$, where $\delta_{g,a}$ and $w_{g,a}$ are
		constructed as follows.  First, we update the information
		concerning the runs of $\calN$ contained in $g$.  For every $q
		\in Q$, let 
		\[
			m_q = max\{ w + \mu \st (p,w) \in g, w(p,a,q) = \mu \}.
		\]
		The runs whose weight is too low are dropped.  Let $Q_B
		\subseteq Q$ be the set of states $q$ such that $m_p-m_q \leq B$
		for every $p \in Q$. In particular, if $m_q = - \infty$, $q \notin Q_B$.
		For every $q \in Q \setminus Q_B$, we set
		$\delta_{g,a}(q) = - \infty$.  Then, $w_{g,a}$ is defined as the
		weight corresponding to the maximal accepting run, if any is
		left, and to the value of the maximal (non accepting) run
		otherwise.  Formally, if $Q_B \cap F = \emptyset$, then $w_{g,a}
		= max\{ m_q \st q \in Q \}$, otherwise $w_{g,a} = max\{ m_q \st q
		\in F \}$.  Finally, the state is updated accordingly.  For
		every $q \in Q_B$, we set $\delta_{g,a}(q) = m_q-w_{g,a}$.
	\item $F'$ is the set of functions $g \in Q'$ such that $g(q_f) \neq -
		\infty$ for some final state $q_f \in F$.
\end{itemize}

By definition, $\mathcal{D}$ is complete and deterministic, $|Q'| = (2B+2)^{|Q|}$ and its maximal weight is 
$B + \maxweight$.
 In order to complete
the proof of Theorem  \ref{thm:win-joker-det}, we need to prove that
$\inter{\mathcal{D}} = \inter{\calN}$.  To do so, we expose three properties
satisfied by $\mathcal{D}$.

Given a run $\rho = p_0 a_0 p_1 \ldots a_{n-1} p_n$ of $\calN$, let us call
$\rho$ good if for every $0 \leq i \leq n$, and for every run $\rho_q = q_0 a_0
q_1 \dots a_{i-1} q_i$,
\[w(p_0 a_0 p_1 \dots a_{i-1} p_i) \geq w(\rho_q) - B.\]
Proposition \ref{pro:JG_delay} ensures us that every maximal accepting run of
$\calN$ is good.  Let $u \in A^*$, let $g_0 a_0 g_1 \ldots a_{n-1} g_n$ be the
run of $\mathcal{D}$ on $u$, and for every $0 \leq i \leq n$ let $w_i$ denote
the weight $w'(g_0 a_0 g_1 \ldots a_{i-1} g_i)$.

\begin{description}
	\item[\prop{1}:] For every $q \in F$, $g_n(q) \leq 0$, and if $g_n(p)
		\neq - \infty$ for at least one state $p \in F$, then there
		exists $q \in F$ such that $g_n(q) = 0$.
	\item[\prop{2}:] Let $q \in Q$.  If $g_n(q) \neq - \infty$, then there
		exists an initial run $\rho = q_0 a_0 q_1 \ldots a_{n-1} q_n$ of
		$\calN$ on $u$ such that $q_n = q$ and $w(\rho) = w_n + g_n(q)$.
	\item[\prop{3}:] Let $\rho = p_0 a_0 p_1 \ldots a_{n-1}p_n$ be an
		initial run of $\calN$ on $u$.  If $\rho$ is good, then $w_n +
		g_n(q_n) \geq w(\rho)$.
\end{description}

\begin{proof}
	\prop{1} follows immediately from the definition of $\Delta'$.  \prop{2}
	and \prop{3} are proved by induction on the size of $u$.  If $u =
	\epsilon$, then the state $g_n$ reached by $\mathcal{D}$ on input $u$ is
	the initial state $q_I'$ of $\mathcal{D}$, and the weight $w_n$
	corresponding to this run is $0$.  Then, by definition of $I'$, for
	every $q \in Q$ either $q$ is initial and $q_I'(q) = 0$, which is the
	value of the initial run $\rho' = q$ of $\calN$ on $\epsilon$, or $q$
	is not initial and $q_I'(q) = - \infty$. This proves \prop{2}.
	Conversely, every initial run $\rho$ of $\calN$ on $u$ is of the form
	$q_I$ for some $q_I \in I$, and $q_I'(q_I) = 0$, which proves \prop{3}.
	Now suppose that $u = va$ for some $v \in A^*$ and $a \in A$, and that
	\prop{2} and \prop{3} hold for $v$.

	Let us first prove that \prop{2} holds for $u$.  Suppose that $g_n(q)
	\neq - \infty$.  By definition of $\Delta'$, there exists $p \in Q$ and
	$(p,a,q) \in \Delta$ such that $g_n(q) = g_{n-1}(p) + w(p,a,q) -
	w'(g_{n-1},a,g_n)$.  Then $g_{n-1}(p) \neq - \infty$, hence, by the
	induction hypothesis, there exists an initial run $\rho' = q_0 a_0 q_1
	\ldots a_{n-2}q_{n-1}$ of $\calN$ on $v$ such that $q_{n-1} = p$ and
	$w(\rho') = w_{n-1} + g_{n-1}(p)$.  Then the run $\rho' a q$ of $\calN$
	on $u$ satisfies the statement of \prop{2}, since
	\[
	\begin{array}{lll}
	w(\rho' a q) & = & w(\rho') + w(p,a,q)\\
	& = & w_{n-1} + g_{n-1}(p) + w(p,a,q)\\
	& = & w_n - w'(g_{n-1},a,g_n) + { }\\
	&   & g_{n-1}(p) + w(p,a,q)\\
	& = & w_n + g_n(q).
	\end{array}
	\]

	Finally, we prove \prop{3}.  Suppose that $\rho$ is good. First, note
	that the run $p_0 a_0 p_1 \ldots a_{n-2}p_{n-1}$ on $v$, which is
	obtained by removing the last transition of $\rho$, is also good.
	Hence, by the induction hypothesis, \[ w_{n-1} + g_{n-1}(p_{n-1}) \geq
	w(\rho').  \] Second, if we suppose that $g_n(p_n) = - \infty$, we
	obtain a contradiction with the fact that $\rho$ is good, since, using
	\prop{2} and the definition of $\Delta'$, we are able to build a run
	$\psi$ on $u$ such that $w(\rho) < B - w(\psi)$.  Hence $g_n(p_n) \neq -
	\infty$, and, by definition of $\Delta'$, 
	\[
	w'(g_n a g_{n+1}) + g_n(p_n) \geq w(p_{n-1} a p_n) + g_{n-1}(p_{n-1}).
	\]
	These inequalities imply the correctness of \prop{3}, since
	\[
	\begin{array}{lll}
	w_n + g_n(p_n) & = & w_{n-1} +  w'(g_n a g_{n+1}) + g_n(p_n)\\
	& \geq & w_{n-1} + g_{n-1}(p_{n-1}) + w(q_{n-1} a p_n)\\
	& \geq & w(\rho') + w(q_{n-1} a p_n)\\
	& = & w(\rho).
	\end{array}
	\]
\end{proof}

\begin{corollary}
	The function defined by $\mathcal{D}$ is equal to $\inter{\calN}$.
\end{corollary}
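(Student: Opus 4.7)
The corollary follows by combining the three properties \prop{1}, \prop{2}, \prop{3} with the fact that $\calN$ is $B$-bounded. Since $\mathcal{D}$ is deterministic, on an input $u = a_0 \ldots a_{n-1}$ there is a unique run $g_0 a_0 g_1 \ldots a_{n-1} g_n$ with accumulated weight $w_n$, and we have $u \in \lang{\mathcal{D}}$ iff $g_n \in F'$, \ie iff $g_n(q_f) \neq -\infty$ for some $q_f \in F$. My plan is to establish the two inclusions $\lang{\mathcal{D}} = \lang{\calN}$ together with $\inter{\mathcal{D}}(u) = \inter{\calN}(u)$ simultaneously.

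For the forward direction, I would assume $u \in \lang{\mathcal{D}}$, so $g_n \in F'$. By \prop{1}, there exists $q^\star \in F$ with $g_n(q^\star) = 0$. Applying \prop{2} to $q^\star$ yields an initial run $\rho^\star$ of $\calN$ on $u$ ending in $q^\star$ with weight $w(\rho^\star) = w_n + g_n(q^\star) = w_n$. Since $q^\star \in F$, this run is accepting, so $u \in \lang{\calN}$ and $\inter{\calN}(u) \geq w_n = \inter{\mathcal{D}}(u)$.

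For the reverse direction, suppose $u \in \lang{\calN}$ and let $\rho_{\max} = p_0 a_0 \ldots a_{n-1} p_n$ be a maximal accepting run of $\calN$ on $u$. By the $B$-boundedness assumption (unfolded from the definition), $\rho_{\max}$ is good in the sense required by \prop{3}, hence $w_n + g_n(p_n) \geq w(\rho_{\max}) = \inter{\calN}(u)$. In particular $g_n(p_n) \neq -\infty$, and since $p_n \in F$ we conclude $g_n \in F'$, so $u \in \lang{\mathcal{D}}$. Finally, \prop{1} guarantees that $g_n(q) \leq 0$ for every final state $q$, so $g_n(p_n) \leq 0$, giving $w_n \geq \inter{\calN}(u)$. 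Combined with the first direction we obtain $\inter{\mathcal{D}}(u) = w_n = \inter{\calN}(u)$.

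There is no real obstacle here: the three properties were designed precisely to allow this two-sided comparison, and the $B$-boundedness assumption is exactly what lets us invoke \prop{3} on the specific run that realizes $\inter{\calN}(u)$. The only point requiring care is the bookkeeping around the value $-\infty$: one must check that whenever \prop{3} is applied, the corresponding coordinate $g_n(p_n)$ is actually finite, which follows automatically because otherwise the inequality $w_n + g_n(p_n) \geq w(\rho_{\max}) > -\infty$ would be violated. Modulo this observation, the argument reduces to juxtaposing \prop{1}--\prop{3}.
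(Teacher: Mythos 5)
Your proof is correct and follows essentially the same route as the paper's: the forward inclusion via \prop{1} and \prop{2} applied to a final state $q^\star$ with $g_n(q^\star)=0$, and the reverse inclusion via $B$-boundedness (goodness of the maximal run), \prop{3}, and the upper bound $g_n(p_n)\leq 0$ from \prop{1}. Your explicit tracking of the language equality and of the $-\infty$ bookkeeping is if anything slightly more careful than the paper's write-up, but the argument is the same.
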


\begin{proof}
	Let us begin by proving that for every input word $u$, $\inter{\calN}(u)
	\geq \inter{\mathcal{D}}(u)$.  Let $u \in \lang{\mathcal{D}}$, and let
	$g$ be the state reached by $\mathcal{D}$ on input $u$. Then $g$ is
	final, hence, by definition of $F'$, there exists a state $q_f \in F$
	such that $g(q_f) \neq - \infty$. Moreover, by \prop{1}, there exists a
	final state $p_f \in F$ such that $g_n(p_f) = 0$.  This implies, by
	\prop{2}, the existence of an initial run $\rho = p_0 a_0 p_1 \ldots
	a_{n-1} p_n$ of $\calN$ on $u$ such that $p_n = p_f \in F$ and $w(\rho)
	= \inter{\mathcal{D}}(u)$.  Since $q_n$ is a final state, $\rho$ is
	accepting, hence $\inter{\calN}(u) \geq w(\rho) =
	\inter{\mathcal{D}}(u)$.

	Finally, we prove that, conversely, for every input word $v$,
	$\inter{\mathcal{D}}(u) \geq \inter{\calN}(u)$.  Let $v \in
	\lang{\calN}$, let $\psi$ be a maximal accepting run of $\calN$ on $v$,
	and let $q \in F$ be the corresponding final state.  Let $g_v$ be the
	state of $\mathcal{D}$ reached on input $v$, and let $w_v$ be the
	associated output.  By Proposition \ref{pro:JG_delay}, $\psi$ is good,
	hence by \prop{3}, $w_v + g_v(q) \geq w(\psi) = \inter{\calN}(v)$.
	Therefore $g_v(q) \neq -\infty$, and, since $q$ is a final state of
	$\calN$, $g_v$ is a final state of $\mathcal{D}$.  Moreover, by
	\prop{1}, $g_v(q) \leq 0$, hence $\inter{\mathcal{D}}(u) = w_v \geq
	\inter{\calN}(u)$.
\end{proof}

\section{Proof of Lemma~\ref{pro:zero-delay-regret-pdet}}
\begin{proof}
	If $\calN$ is $0$-regret determinizable, then $\calN$ is $0$-delay
	determinizable by Proposition~\ref{pro:zero-delay-regret}. Now suppose
	that $\calN$ is $0$-delay determinizable, and let $\calD$ be a $0$-delay
	determinizer of $\calN$. Using $\calD$, we define a winning strategy
	$\sigma_{\calD}$ for \eve in the $0$-regret game played on $\calN$.
	Given a sequence $q_0 a_0 \ldots q_{n-1} a_{n-1}$ in $(Q \cdot A)^*$,
	the state $\sigma_{\calD}(q_0 a_0 \ldots q_{n-1} a_{n-1})$ is defined as
	follows. Let $\alpha$ denote the input word $a_0 \ldots a_{n-1}$.  If
	$\calD$ has no initial run on $\alpha$, this word is not a prefix of any
	word of $\lang{\calN}$, hence whatever \eve does, \adam will not be able
	to win.  We set $\sigma_{\calD}(q_0 a_0 \ldots q_{n-1} a_{n-1}) =
	q_{n-1}$.  Otherwise, let $\rho_{\alpha} = p_0 a_0 \ldots p_{n-1}
	a_{n-1} p_{n}$ be the initial run of $\calD$ on $\alpha$. Since $\calD$
	is a $0$-delay determinizer of $\calN$, there exists an initial run
	$\rho'_{\alpha} = p'_0 a_0 \ldots p'_{n-1} a_{n-1} p'_{n}$ of $\calN$
	such that for every $1 \leq i \leq n$, $w(p_0' \dots p_i') = w'(p_0
	\dots p_i)$.  Moreover, since $\calN$ is pair-deterministic,
	such a run is unique.  We set $\sigma_{\calD}(q_0 a_0 \ldots q_{n-1}
	a_{n-1}) = p'_{n}$.  Note that, for every $1 \leq i \leq n-1$, the run
	$\rho_{a_0 \ldots a_i}$ is equal to the prefix $p_0 a_0 \ldots p_{i}
	a_{i} p_{i+1}$ of $\rho_{\alpha}$, since $\calD$ is deterministic.
	Therefore, the run $\rho'_{a_0 \ldots a_i}$ is equal to the prefix $p'_0
	a_0 \ldots p'_{i} a_{i} p'_{i+1}$ of $\rho'_{\alpha}$, since $\calN$ is
	pair-deterministic.  Then $\sigma(p'_0 a_0 \ldots p'_{i} a_{i}) =
	p'_{i+1}$ for every $1 \leq i \leq n-1$, hence $\sigma(a_0 \ldots
	a_{n-1}) = \rho'$ and	
	\[
		\Val(\sigma(\alpha)) = \Val(\rho') = \Val(\rho) =
		\inter{\calD}(\alpha) = \inter{\calN}(\alpha).
	\]
	This proves that $\sigma_{\calD}$ is a winning strategy for the
	$0$-regret game played on $\calN$.
\end{proof}

\section{Making an automaton pair-deterministic}
\myparagraph{Subset construction}
Let $\calN =  (Q, I, A, \Delta, w, F)$ be an automaton.  Let
$\pow(\calN) =  (Q', I', A, \Delta', w', F')$ be the
automaton defined as follows.
\begin{itemize}[nolistsep]
	\item $Q' = \pow(Q)$;
	\item $I'= \{ I \} $;
	\item $\Delta' = \{ (U, a,\Delta_{a}^x(U)) \st a \in
		A, x \in \textsf{Im}(w) \}$, where $\Delta_{a}^x(U) =
		\bigcup_{p \in U}\{ q \in Q \st (p,a,q) \in \Delta,
		w(p,a,q) = x \}$;
	\item $w' : \Delta' \rightarrow \mathbb{Z}$, $(U, a,
		\Delta_{a}^x(U)) \mapsto x$;
	\item $F' = \{ P \subset Q \st P \cap F \neq \emptyset \}$.
\end{itemize}

\begin{lemma}\label{lem:subset-construction}
	The subset construction satisfies the following properties.
	\begin{enumerate}[nolistsep]
		\item\label{lem-item:sub_sub} $\calN \subseteq_0 \pow(\calN)$;
		\item\label{lem-item:sub_sup} $\pow(\calN) \subseteq_0 \calN$;
		\item\label{lem-item:sub_equ} $\inter{\pow(\calN)} =
			\inter{\calN}$.
	\end{enumerate}
\end{lemma}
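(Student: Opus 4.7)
}
The three claims are essentially bookkeeping about the subset construction, and I would prove them in the listed order; items \ref{lem-item:sub_sub} and \ref{lem-item:sub_sup} are the substantive content and item \ref{lem-item:sub_equ} follows at once from them.

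For item \ref{lem-item:sub_sub} ($\calN\subseteq_0\pow(\calN)$), let $\rho = q_0 a_0 q_1 \dots a_{n-1} q_n$ be an accepting run of $\calN$, and write $x_i \defeq w(q_i, a_i, q_{i+1})$ for each $i$. I would define a run of $\pow(\calN)$ of the form $U_0 a_0 U_1 \dots a_{n-1} U_n$ by setting $U_0 \defeq I$ and $U_{i+1} \defeq \Delta_{a_i}^{x_i}(U_i)$; by construction this is a valid transition in $\Delta'$. A straightforward induction on $i$ shows that $q_i \in U_i$: the base case is $q_0 \in I = U_0$, and for the step, since $(q_i, a_i, q_{i+1}) \in \Delta$ with weight $x_i$, the definition of $\Delta_{a_i}^{x_i}$ forces $q_{i+1} \in U_{i+1}$. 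In particular $q_n \in U_n \cap F$, so $U_n \in F'$ and the run is accepting; moreover the weight of the $i$-th transition of this run is exactly $x_i$, so the running sums of $\rho$ and of the run in $\pow(\calN)$ coincide at every prefix, giving a $0$-delay witness.

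For item \ref{lem-item:sub_sup} ($\pow(\calN)\subseteq_0\calN$), fix an accepting run $U_0 a_0 U_1 \dots a_{n-1} U_n$ of $\pow(\calN)$; by definition of $\Delta'$ there exist weights $x_0, \dots, x_{n-1}$ such that $U_{i+1} = \Delta_{a_i}^{x_i}(U_i)$ and $w'$ assigns weight $x_i$ to the $i$-th transition. Since $U_n \in F'$, pick $q_n \in U_n \cap F$. I then trace back a run of $\calN$ by reverse induction: assuming $q_{i+1} \in U_{i+1} = \Delta_{a_i}^{x_i}(U_i)$, the definition of $\Delta_{a_i}^{x_i}$ yields some $q_i \in U_i$ with $(q_i, a_i, q_{i+1}) \in \Delta$ and $w(q_i, a_i, q_{i+1}) = x_i$. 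This produces a run $q_0 a_0 \dots a_{n-1} q_n$ with $q_0 \in U_0 = I$ and $q_n \in F$, hence accepting, and whose per-step weights match those of the run in $\pow(\calN)$, again delivering $0$-delay.

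For item \ref{lem-item:sub_equ}, Lemma~\ref{lem:k-inc_prop}\ref{lem-item:k-inc_prop_sub} applied to items \ref{lem-item:sub_sub} and \ref{lem-item:sub_sup} gives $\lang{\calN} = \lang{\pow(\calN)}$ together with $\inter{\calN}(\alpha) \le \inter{\pow(\calN)}(\alpha) \le \inter{\calN}(\alpha)$ for every $\alpha \in \lang{\calN}$, hence equality. The only point that requires a bit of care is to make sure the transitions constructed in items \ref{lem-item:sub_sub} and \ref{lem-item:sub_sup} really lie in $\Delta'$, which I expect to be the main (minor) bookkeeping obstacle: one must keep in mind that $\pow(\calN)$ has transitions indexed by \emph{pairs} $(a,x)$ rather than just $a$, so an $a$-labelled transition out of $U$ is chosen by first picking a weight $x$ and then reading off the unique image $\Delta_a^x(U)$.
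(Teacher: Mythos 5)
Your proposal is correct and follows essentially the same route as the paper's own proof: forward construction of the subset run with $U_{i+1} = \Delta_{a_i}^{x_i}(U_i)$ for item~\ref{lem-item:sub_sub}, backward extraction of a run of $\calN$ starting from some $q_n \in U_n \cap F$ for item~\ref{lem-item:sub_sup}, and item~\ref{lem-item:sub_equ} deduced from the first two via Lemma~\ref{lem:k-inc_prop}. No gaps.
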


\begin{proof}
	\item \ref{lem-item:sub_sub}.
	Let $\rho = q_0 a_0 \dots a_{n-1} q_n$ be an accepting
	run of $\calN$.  Let $U_0 = I$, and for every $1 \leq i \leq n$,
	let $U_i =
	\Delta_{a_{i-1}}^{w(q_{i-1},a_{i-1},q_i)}(U_{i-1})$.  Note
	that for every $0 \leq i \leq n$, $q_i \in U_i$.  Then $\rho' = U_0
	a_0 \dots a_{n-1} U_n$ is an accepting run of $\pow(\calN)$,
	and for every $0 \leq i < n$, $w'(U_{i},a_i,U_{i+1}) =
	w(q_{i},a_i,q_{i+1})$.  Therefore, $\calN \subseteq_0 \pow(\calN)$.

	\item \ref{lem-item:sub_sup}.
	Let $\rho = U_0 a_0 \dots a_{n-1} U_n$ be an
	accepting run of $\pow(\calN)$. Let $q_n$ be any element of $U_n
	\cap F$. For every $0 < i < n$, suppose that $q_{i+1} \in U_{i+1}$ is defined, and let $q_i \in U_i$ be inductively defined
	as follows.  By definition of $\Delta'$, $U_{i+1} =
	\Delta_{a_{i}}^{w'(U_{i},a_{i},U_{i+1})}(U_{i})$, hence, as
	$q_{i+1} \in U_{i+1}$, there exists $q_{i} \in U_{i}$ such that
	$(q_{i},a_{i},q_{i+1}) \in \Delta$ and
	$w(q_{i},a_{i},q_{i+1}) = w'(U_{i},a_{i},U_{i+1})$.  Then $\rho'
	= q_0 a_0 \dots a_{n-1} q_n$ is an accepting run of $\calN$,
	and $w(q_{i},a_i,q_{i+1}) = w'(U_{i},a_i,U_{i+1})$ for all $0
	\le i < n$.  Therefore, $\pow(\calN) \subseteq_0 \calN$.

	\item \ref{lem-item:sub_equ}.
	This property follows immediately from the two others and Lemma
	\ref{lem:k-inc_prop} item \ref{lem-item:k-inc_prop_sub}.
\end{proof}

\section{A lower bound on the required $r$ for $r$-regret determinizability}
\label{sec:lower-bound-r}
In this section we give
an example of an automaton which requires a
quadratic regret threshold $r$ for it to be $r$-regret determinizable.

\begin{proposition}\label{pro:quad-regret-necessary}
	Given an automaton $\calN$, a regret $r$ as big as
	$\mathcal{O}(|V|)$ might be needed for it to be $r$-regret
	determinizable.
\end{proposition}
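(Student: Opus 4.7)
The plan is to exhibit an explicit family of automata $\{\calN_n\}_{n \geq 1}$ with $\Theta(n)$ states such that each $\calN_n$ is approximately $\mathbf{Hom}$-determinizable, yet any $r$-regret determinizer of $\calN_n$ must satisfy $r = \Omega(n^2)$. The core idea is to chain $n$ independent ``guessing gadgets'', each of which individually forces regret at least $\Omega(n)$, so that the losses accumulate to $\Omega(n^2)$ across a single accepting input word.

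First, I would design a single gadget $G_n$ over a small alphabet (say $\{a,b,\#\}$) with $\mathcal{O}(1)$ states, in which the automaton branches nondeterministically at entry into two paths $P_a$ and $P_b$. The weights are chosen so that $P_a$ yields a bonus of $+n$ if the next letter is $a$ and a penalty of $-n$ if it is $b$, and symmetrically for $P_b$. Since in the regret game Eve must resolve the nondeterminism on the fly (before Adam reveals the next letter), her choice in $G_n$ can be punished by Adam with a loss of exactly $\Omega(n)$ in the worst case, regardless of what memory she uses inside $G_n$.

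Second, I would chain $n$ such gadgets together, separated by a synchronizing letter $\#$ that forces all runs back to a unique state (ensuring ``independence'' of the gadgets and preventing Eve from deferring her choice or recovering across gadget boundaries). The composed automaton has $\Theta(n)$ states. For the upper bound, I would exhibit a deterministic automaton that simply picks one fixed branch in every gadget: it loses at most $n$ per gadget, giving an $r$-regret determinizer with $r = \mathcal{O}(n^2)$. For the lower bound, I would argue that for any finite-memory strategy $\sigma$ of Eve, Adam can, in each gadget, observe Eve's branch choice and play the punishing letter, so that her cumulative loss on the word of length $\Theta(n)$ is at least $\Omega(n) \cdot n = \Omega(n^2)$. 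This yields $\stratregret{\sigma}{\calN_n} \geq \Omega(n^2)$, and by Lemma~\ref{lem:games-det} this translates to a lower bound on $r$ for $r$-regret determinization.

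The main obstacle will be guaranteeing that the gadgets are genuinely \emph{independent}: one must ensure that the deterministic automaton cannot ``amortize'' good choices in one gadget against bad ones in another, and that Eve's finite-memory strategy cannot exploit correlations between gadgets to reduce the total regret below quadratic. This is where the synchronizing separator $\#$ plays its role, together with a careful choice of weights so that the maximal value of any accepting run depends linearly and additively on the correctness of Eve's guess in every gadget. Once this modularity is established, the quadratic lower bound follows by a direct adversarial argument, and the claim of Proposition~\ref{pro:quad-regret-necessary} is proved with $|Q| = \Theta(n)$ and required regret $r = \Omega(|Q|^2)$.
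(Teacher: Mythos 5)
Your construction is correct in outline and would establish the stated quadratic lower bound in the number of states, but it is a genuinely different construction from the paper's, and it is quantitatively weaker in one important respect. The paper (Appendix~\ref{sec:lower-bound-r}, Fig.~\ref{fig:quad-regret}) uses a single automaton $\calN_k$ made of two \emph{disjoint deterministic} sub-automata with weights in $\{0,1\}$: \eve's only nondeterministic choice is the initial state; starting on the left lets \adam kill her run with $a^{k+1}$, so she must start on the right, and then the word $(a^k b)^k$ — scored $\Theta(k^2)$ on the left (one point per $a$ in blocks of length at most $k$) but $\Theta(0)$ on the right (which only starts scoring after $k$ letters $b$) — witnesses regret $\Theta(k^2)$ with $2(k+1)+1$ states and $\maxweight = 1$. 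Your chained-gadget construction instead extracts a loss of $\Theta(n)$ per gadget by putting weights of magnitude $n$ on the two branches, so your family has $\maxweight = \Theta(n)$ and the regret you obtain is $\Theta(|Q| \cdot \maxweight)$ rather than $\Theta(|Q|^2)$ with unit weights. Since a constant-size automaton with a single weight $W$ already forces regret $\Theta(\maxweight)$ (cf.\ Fig.~\ref{fig:example3} with the weight $1$ replaced by $W$), your chaining buys only a factor $|Q|$ over the trivial bound, whereas the paper's example shows the quadratic blow-up is attributable to the state space alone. If you replace your $\pm n$ weights by $\pm 1$ to keep $\maxweight$ constant, each gadget costs only $O(1)$ and the total regret collapses to $O(n)$; recovering a per-gadget loss of $\Omega(n)$ with unit weights forces you to spread the punishment over $\Omega(n)$ letters, which is precisely the role of the $p_1,\dots,p_k$ chain (reused across all $k$ blocks) in the paper's example. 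The modularity argument you sketch (synchronizing separator, max-of-sum equals sum-of-maxes over independent per-gadget choices, adversary punishing each deterministic choice) is sound and does go through; I would only insist that you state explicitly that both branches of each gadget carry the same underlying language, so that the fixed-branch determinizer satisfies $\lang{\calD} = \lang{\calN}$ and condition $(i)$ of $\mathbf{Hom}$-determinizability via the identity embedding.
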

\begin{proof}
	Let $k \in \mathbb{N}_{> 0}$ and consider the corresponding $\calN_k$
	automaton constructed as shown in Fig.~\ref{fig:quad-regret}. Note
	that $\calN_k$ consists of two deterministic automata.
	As the latter are also disjoint, the only decision for
	\eve to make in this game is to start from $p_1$ or from $q_1$. Further,
	notice that if she does start in $p_1$, then any word with more than $k$
	consecutive $a$'s forces her into the state $\bot$ which is not
	accepting. An alternative run starting from $q_1$ reaches $q_k$, which
	is accepting, when reading the same word. Thus, \eve really has no
	choice but to start in $q_1$ to realize at least the domain of $\calN_k$.

	Observe that any word with more than $k$ consecutive $a$'s is not
	accepted by the left sub-automaton. Hence, the maximal regret of the
	strategy for \eve which starts from $q_1$ is witnessed by a word of the
	form
	\(
		a^{i_1} b^{j_1} \dots a^{i_n} b^{j_n}
	\)
	where $i_\ell \le k$ for all $1 \le \ell \le n$. Such a word is assigned
	a value of $\sum_{\ell = 1}^n i_\ell$ by the left sub-automaton. On the
	other hand, any word with $k$ or more $b$'s is assigned a value
	equivalent to the length of the word minus $k$ by the right automaton.
	It is now easy to see that, if \eve starts in $q_1$, she will have
	a regret value of at least $k^2$. This value is realized, for instance,
	by the word $(a^kb)^k$.
\end{proof}

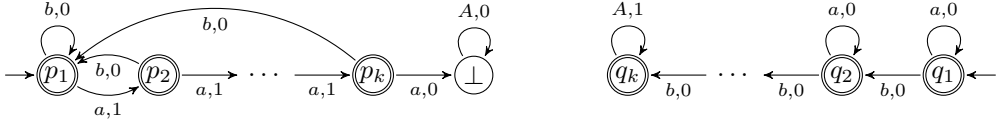
\begin{figure*}
\begin{center}
\begin{tikzpicture}[node distance=0.8cm]
	\node[state,initial left,accepting] (a1) {$p_1$};
	\node[state,right= of a1,accepting] (a2) {$p_2$};
	\node[right= of a2] (adots) {\dots};
	\node[state,right= of adots,accepting] (a3) {$p_k$};
	\node[state,right= of a3] (sink) {$\bot$};

	\node[state,right=1.5cm of sink,accepting] (b3) {$q_k$};
	\node[right= of b3] (bdots) {\dots};
	\node[state,right= of bdots,accepting] (b2) {$q_2$};
	\node[state,initial right,right= of b2,accepting] (b1) {$q_1$};

	\path
	(a1) edge[loopabove] node[el] {$b$,$0$} (a1)
	(a1) edge[bend right] node[el,swap] {$a$,$1$} (a2)
	(a2) edge[bend right] node[el] {$b$,$0$} (a1)
	(a2) edge node[el,swap] {$a$,$1$} (adots)
	(adots) edge node[el,swap] {$a$,$1$} (a3)
	(a3) edge node[el,swap] {$a$,$0$} (sink)
	(a3) edge[out=140,in=40] node[el] {$b$,$0$} (a1)
	(sink) edge[loopabove] node[el] {$A$,$0$} (sink)

	(b1) edge[loopabove] node[el] {$a$,$0$} (b1)
	(b1) edge node[el] {$b$,$0$} (b2)
	(b2) edge[loopabove] node[el] {$a$,$0$} (b2)
	(b2) edge node[el] {$b$,$0$} (bdots)
	(bdots) edge node[el] {$b$,$0$} (b3)
	(b3) edge[loopabove] node[el] {$A$,$1$} (b3)
	;
\end{tikzpicture}
\caption{Automaton $\calN_k$ with $2(k+1) + 1$ states and $A = \{a,b\}$.}
\label{fig:quad-regret}
\end{center}
\end{figure*}

\section{Proof of Proposition~\ref{pro:eg-props} (On Energy Games with Resets)}
In this section we study the properties of EGRs.  Let us
start by establishing that they are \emph{uniformly positionally determined}.
(The latter is even stronger than positional determinacy. A game is uniformly
positionally determined if, for all instances of the game, it is always the case
that both players have a positional strategy which is winning from all vertices
from which it is possible for that player to win, \ie~his winning region.) In
order to prove this, we will make use of a \emph{First Cycle Game}~\cite{ar14}.

To simplify our arguments, in the sequel we will fix an arbitrary unique initial
vertex $v_I$ from which the game starts.

\myparagraph{First cycle energy games} A \emph{first cycle energy game} (FCEG)
is played by \eve and \adam. Formally, an FCEG is---just like an EGR---a tuple
$\calG = (V,V_\exists,v_I,E,E_\rho,w)$ where the digraph $(V,E)$ has no sinks.
We call $\calG$ the \emph{arena} on which both an EGR or an FCEG could be
played.  The main difference between the FCEG and the EGR played on $\calG$ is
that the former is a \textbf{finite} game. More precisely, the FCEG is played up
to the point when the first cycle is formed.  The winner of the game is then
determined by looking at the cycle: if it has a negative sum of weights and it
does not contain a reset edge, then \adam wins; else \eve wins. In other words,
\adam and \eve choose edges, from $v_I$ to form a \emph{lasso}, \ie~a finite
path $\phi = v_0 \dots v_i \dots v_n$ such that $v_n = v_i$ and $v_j \neq v_k$
for all $0 \le j < k < n$. We then say that $\phi$ is winning for \eve if and
only if $(v_j,v_{j+1}) \in E_\rho$ for some $i \le j < n$ or
$\sum_{\ell=i}^{n-1} w(v_\ell,v_{\ell+1}) \ge 0$, otherwise $\phi$ is winning
for \adam. (Note that the property of $\phi$ being winning for \eve is
determined solely on the ``cycle part'' of the lasso. More specifically, the
cycle $v_i \dots v_n$.)

\myparagraph{EGRs are greedy}
Let $Y$ be a \emph{cycle property}. We say
an infinite-duration game is $Y$-greedy (or just greedy, when $Y$ is clear from
the context) if and only if:
\begin{itemize}
	\item all plays $\pi$ such that every cycle in $\pi$ satisfies $Y$ are
		winning for \eve; and
	\item all plays $\pi$ such that every cycle in $\pi$ does not satisfy
		$Y$ are winning for \adam.
\end{itemize}

We will now focus on the cycle property used to determine if \eve wins the FCEG
defined above: either the cycle contains a reset edge or the sum of the weights of
cycle is non-negative. Let us start with the following observation.
\begin{lemma}\label{lem:adam-wins-any}
	If a play $\pi$ is such that all of its cycles have negative sum of
	weights, then for all $c_0 \in \mathbb{N}$ there is a finite path $\phi$ which
	is a prefix of $\pi$ and for which it holds that $\el_{c_0}(\phi) < 0$.
\end{lemma}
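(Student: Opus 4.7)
The plan is to reduce the claim to showing that the running sum $\sum_{j=0}^{n-1} w(v_j,v_{j+1}) \to -\infty$ as $n\to\infty$ along the prefixes $\phi_n = v_0 v_1 \dots v_n$ of $\pi$. To justify this reduction, I would observe that Lemma~\ref{lem:adam-wins-any} is used on plays $\pi$ arising from \adam following a positional winning strategy in the associated First Cycle Game. Any such strategy must avoid reset edges: playing one would close a cycle containing a reset, which is losing for \adam in the FCG. So one may assume $\pi$ contains no reset edges at all, and hence $i_0=0$ in the definition of $\el_{c_0}$, giving $\el_{c_0}(\phi_n) = c_0 + \sum_{j=0}^{n-1} w(v_j,v_{j+1})$ for every prefix.

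For the main step I would apply the standard cycle-decomposition of a walk. Iteratively process $\phi_n$ by finding the smallest index $i$ with $v_i = v_j$ for some $j < i$, peeling off the simple cycle $C = v_j v_{j+1} \dots v_i$ and continuing with the shorter walk $v_0 \dots v_j v_{i+1} \dots v_n$. Since $v_j = v_i$, the substituted edge $(v_j, v_{i+1})$ is literally the edge $(v_i, v_{i+1})$ of $\pi$, so subsequent iterations still only use edges actually traversed by $\pi$. The procedure terminates with a simple path $S$ of at most $|V|-1$ edges together with $K$ simple cycles $C_1,\dots,C_K$, each of whose edges is used by $\pi$; by the hypothesis each $w(C_k)$ is negative, hence at most $-1$ since the weights are integers.

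Counting edges, $\phi_n$ has $n$ edges, $S$ contributes at most $|V|-1$, and each $C_k$ contributes at most $|V|$, so $K \ge (n - |V| + 1)/|V|$. Combined with the trivial bound $|w(S)| \le (|V|-1)\maxweight$, this yields
\[
\textstyle\sum_{j=0}^{n-1} w(v_j,v_{j+1}) \;=\; w(S) + \sum_{k=1}^{K} w(C_k) \;\le\; (|V|-1)\maxweight - K,
\]
which tends to $-\infty$ as $n\to\infty$; picking $n$ large enough delivers the prefix $\phi$ with $\el_{c_0}(\phi) < 0$. The delicate point I expect to need the most care is justifying that each extracted $C_k$ really qualifies as a ``cycle of $\pi$'' under the hypothesis, since after one splice the walk is no longer a subpath of the original $\pi$; this is handled by the remark that every substituted edge coincides with an original edge of $\pi$, so each spliced simple cycle uses only edges already present in $\pi$ and the hypothesis applies.
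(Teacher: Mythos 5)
The paper does not actually supply a proof of this lemma (it is stated as a bare observation), so I can only assess your argument on its own terms. Your preliminary step --- discarding reset edges by appealing to how the lemma is used downstream --- does put a finger on a real defect of the statement: as written, the lemma is false when cycles may contain reset edges (take two vertices $u,v$ with a reset edge $(u,v)$ of weight $-1$ and an edge $(v,u)$ of weight $0$; every cycle of $(uv)^\omega$ has negative weight, yet every prefix has energy level at least $c_0-1$). So some such restriction is genuinely needed, but it should be stated as a correction to the hypothesis (every cycle of $\pi$ has negative weight \emph{and} contains no reset edge, i.e., fails the FCEG property $Y$), not justified by importing the context in which the lemma happens to be invoked; as a proof of the lemma as stated, this step proves a different statement.

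The more serious gap is in the main step: ``by the hypothesis each $w(C_k)$ is negative.'' The hypothesis speaks of cycles \emph{of the play} $\pi$, which --- consistently with how the term is used in the proof of Lemma~\ref{lem:eg-greedy} --- are contiguous segments $v_i\dots v_j$ with $v_i=v_j$. After the first splice your peeled cycles are no longer contiguous segments of $\pi$, and ``uses only edges of $\pi$'' does not bring them under the hypothesis. Concretely, take a play with prefix $a\,b\,c\,b\,c\,a$ and weights satisfying $w(a,b)+w(c,a)=9$, $w(b,c)=0$, $w(c,b)=-10$, after which the play leaves $\{a,b,c\}$ forever: the only contiguous cycles through these vertices are $bcb$ and $cbc$ (weight $-10$) and $abcbca$ (weight $-1$), all negative, yet your peeling extracts $bcb$ and then the spliced cycle $abca$ of weight $+9$. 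Hence the bound $\sum_k w(C_k)\le -K$ fails and the estimate collapses (one cannot recover it from the total either, since that is exactly what is being proved). The conclusion is nevertheless true and has a shorter proof that avoids the decomposition entirely: some vertex $u$ occurs infinitely often in $\pi$; the segment between any two of its occurrences is a contiguous cycle of $\pi$, hence has weight at most $-1$; so the running sums at successive occurrences of $u$ decrease by at least $1$ each time and eventually drop below $-c_0$, and in the reset-free case the corresponding prefix $\phi$ satisfies $\el_{c_0}(\phi)<0$.
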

We claim that EGRs are greedy with respect to
this property.
\begin{lemma}\label{lem:eg-greedy}
	An EGR $\calG = (V,V_\exists,v_I,E,E_\rho,w)$ with initial credit
	$|V|\maxweight$ is greedy.
\end{lemma}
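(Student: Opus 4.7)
The plan is to prove the two implications in the definition of greediness separately, with the initial credit fixed at $c_0 = |V|\maxweight$ throughout.

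For the direction where \emph{every cycle of $\pi$ satisfies $Y$}, I would show that \eve wins by bounding the energy level at an arbitrary finite prefix $\phi = v_0 \dots v_n$ of $\pi$. Let $i_0$ be the minimal index such that no reset occurs after position $i_0$; then by definition $\el_{c_0}(\phi) = c_0 + \sum_{j=i_0}^{n-1} w(v_j,v_{j+1})$, and the suffix $v_{i_0} \dots v_n$ contains no reset edges. Any simple cycle sitting inside this suffix is also a cycle of $\pi$, so by assumption it satisfies $Y$; since it cannot contain a reset edge (there are none in the suffix), it must have non-negative weight sum. I would then iteratively excise simple cycles from the suffix---each time picking the first position whose vertex repeats and removing the loop it forms---until only a simple path of length at most $|V|-1$ remains. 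The excised cycles contribute non-negatively, and the remaining simple path has weight at least $-(|V|-1)\maxweight$, so the total weight of the suffix is at least $-|V|\maxweight$. Hence $\el_{c_0}(\phi) \ge c_0 - |V|\maxweight = 0$, proving that \eve wins.

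For the direction where \emph{no cycle of $\pi$ satisfies $Y$}, first note that any cycle containing a reset edge automatically satisfies $Y$, so no cycle of $\pi$ can contain a reset edge. A pigeonhole argument then shows that reset edges appear only finitely often in $\pi$: if some reset edge were used infinitely often, some vertex along $\pi$ would be visited infinitely often, and between two such visits one would find a cycle using the reset edge. Let $\pi_0$ be the suffix of $\pi$ beginning immediately after the last reset. Every cycle appearing in $\pi_0$ is a cycle of $\pi$ and, by hypothesis, has negative weight sum. Applying Lemma~\ref{lem:adam-wins-any} to $\pi_0$ yields a finite prefix $\phi_0$ of $\pi_0$ whose plain weight sum is strictly less than $-c_0$. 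Extending $\phi_0$ backwards to a prefix $\phi$ of $\pi$ does not affect $\el_{c_0}(\phi)$, since the latter only sums weights from the last reset, so $\el_{c_0}(\phi) < 0$, and \adam wins.

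The main obstacle is formalizing the cycle-decomposition step in \eve's direction: one has to argue carefully that each simple cycle removed during the decomposition is genuinely a cycle of the original play $\pi$ (so that the hypothesis applies), and that the decomposition terminates with a simple path whose length is governed by $|V|$. The natural way to do this is by induction on the length of the finite path, always removing the loop formed by the first repeated vertex; this keeps the invariant that both the excised loops and the remaining path are sub-walks of~$\pi$.
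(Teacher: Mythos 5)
Your proof is correct and takes essentially the same route as the paper's: Eve's direction rests on the observation that a reset-free segment all of whose decomposition cycles are non-negative loses at most $|V|\maxweight$ in total (the paper states the contrapositive, asserting a negative reset-free cycle whenever the energy drops, without spelling out your explicit excision argument), and Adam's direction invokes Lemma~\ref{lem:adam-wins-any} exactly as the paper does, with your finitely-many-resets observation making that application cleaner. The only point to keep straight is that the loops excised after the first excision need not be \emph{contiguous} segments of $\pi$, so ``cycle in $\pi$'' must be read as a cycle of the stack-based cycle decomposition (as in the first-cycle-game framework of~\cite{ar14} on which the paper relies) rather than as a contiguous subsegment --- under that reading, which is the one the paper itself needs, your invariant and hence your bound go through.
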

\begin{proof}
	For convenience, we will focus on simple cycles.

	Let $\pi = v_0 v_1 \dots$ be a play of the game such that all cycles
	from $\pi$ either contain a reset edge or have non-negative sum of weights.
	We will argue that, for all prefixes $\phi$ of $\pi$ with length at
	most $n$ we have that $\el_{c_0}(\phi) \ge 0$. We proceed by induction on $n$.
	If $n = 0$ then the claim holds trivially. Now, let us consider an
	arbitrary prefix $\phi = v_0 \dots v_{n+1}$. By induction
	hypothesis, we have that $\el_{c_0}(v_0 \dots v_n) \ge 0$.  Let
	$i$ be the index of the latest occurrence of a reset edge in $\phi$ (with $i
	= 0$ if there is no such edge). If $\el_{c_0}(\phi) < 0$ then clearly
	$n - i > |V|$ since at least $|V|$ edges are necessary to go from
	$c_0 = |V|\maxweight$ to a negative number in $\calG$. Furthermore, it
	follows that between $i$ and $n$ we have cycle $\chi = v_i \dots
	v_j$ which does not contain a reset edge and such that the sum of
	its transition weights is negative. This contradicts our assumption that
	all cycles from $\pi$ have a reset edge or non-negative sum of weights.
	Hence, the claim holds by induction and the play is winning for \eve
	with initial credit $|V|\maxweight$.

	Let $\pi = v_0 v_1 \dots$ be a play of the game such that all cycles
	from $\pi$ have negative sum of weights. From
	Lemma~\ref{lem:adam-wins-any} we have that for some prefix $\phi$ of
	$\pi$ the energy level becomes negative, \ie~$\el_{c_0}(\phi) < 0$. Hence, the
	play is winning for \adam.	
\end{proof}

Since EGRs are greedy, it follows from~\cite{ar14} that strategies transfer
between an EGR and an FCEG played on the same arena. More formally,
\begin{proposition}\label{pro:strat-transfer}
	Let $\calG = (V,V_\exists,v_I,E,E_\rho,w)$ be an arena.
	Every memoryless strategy $s$ for \eve (\adam) in the ERG played
	on $\calG$ with initial credit $|V|\maxweight$ is winning for her (him)
	if and only if $s$ is winning for her (him) in the FCEG played on
	$\calG$.
\end{proposition}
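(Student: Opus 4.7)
The plan is to exploit the greedy property established in Lemma~\ref{lem:eg-greedy}: the winner of an EGR play with initial credit $|V|\maxweight$ is determined entirely by the simple cycles appearing along the play. Combined with positional determinacy of the FCEG—a finite game with cycle-dependent outcome, essentially a reachability-style game once the cycle-bookkeeping is absorbed into the state space—this lets me translate memoryless strategies between the two games by comparing cycle-based outcomes. This matches the general greedy-games framework of~\cite{ar14}.

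For the forward implication, I will take a memoryless $s$ that is winning for \eve in the EGR with initial credit $|V|\maxweight$ and suppose, for contradiction, that $s$ is not winning in the FCEG. By positional determinacy of the FCEG, \adam has a memoryless counter-strategy $t$ producing a lasso whose cycle has negative sum and contains no reset edge. Interpreting the pair $(s,t)$ as memoryless strategies in the EGR from the same starting vertex $v_I$, the unique compatible play first traces the stem and then repeats the bad cycle forever, so every simple cycle of that play has negative sum and no reset. Applying Lemma~\ref{lem:adam-wins-any} to this play yields a finite prefix with negative energy level for any initial credit, contradicting the hypothesis. The dual for \adam is entirely symmetric.

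For the reverse implication, I will take memoryless $s$ winning for \eve in the FCEG and any play $\pi$ in the EGR consistent with $s$, and show every simple cycle appearing in $\pi$ either has non-negative sum or contains a reset edge; Lemma~\ref{lem:eg-greedy} then delivers that $\pi$ is winning for \eve in the EGR. Each such simple cycle $C$ lives in the subgraph $H$ obtained by keeping, at every \eve-vertex, only the edge chosen by $s$, and $C$ is reachable from $v_I$ in $H$. The key claim is that each reachable simple cycle $C$ of $H$ coincides, up to rotation, with the cycle of some first-cycle lasso arising from an \adam play against $s$ in the FCEG. To see this, I would take a shortest path $P$ in $H$ from $v_I$ to a vertex of $C$: by minimality, the only vertex of $P$ on $C$ is its endpoint, so concatenating $P$ with a suitable rotation of $C$ yields a simple lasso that is a legal FCEG play compatible with $s$. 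Since $s$ wins the FCEG, its cycle-part has non-negative sum or contains a reset; both properties are rotation-invariant, so $C$ inherits them.

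The main obstacle will be the rotation step in the reverse direction, namely arguing that every simple cycle of $H$ reachable from $v_I$ is realised as the final cycle of some FCEG lasso consistent with $s$. The shortest-path trick takes care of it, but one must verify that the \adam moves used in the constructed lasso correspond to genuine edges of $\calG$ (which they do, since $H$ is a subgraph of $\calG$ and \adam is free to pick any outgoing edge at his vertices) and that the lasso is simple. Once the cycle-transfer claim is established, both directions reduce to routine appeals to positional determinacy of the FCEG together with the greedy characterisation of the EGR; the \adam-side of the proposition is obtained by a symmetric argument using the \adam-winning half of Lemma~\ref{lem:eg-greedy}.
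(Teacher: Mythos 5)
Your proof is correct and follows essentially the same route as the paper: the paper simply invokes the greedy-games framework of~\cite{ar14} on top of Lemma~\ref{lem:eg-greedy}, and your argument is precisely an unfolding of that framework's strategy-transfer proof (greediness, rotation-invariance of the cycle objective as in Lemma~\ref{lem:eg-closed}, and the lasso/subgraph construction). One cosmetic remark: in the forward direction the appeal to positional determinacy of the FCEG is unnecessary (and slightly awkward, since the paper derives Proposition~\ref{pro:eg-pos-det} only afterwards from the same machinery) --- all you need is that ``$s$ not winning in the FCEG'' yields some losing lasso consistent with $s$, whose \adam-moves can be replayed forever because the lasso is simple and $s$ is memoryless.
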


\myparagraph{More cycle properties} Let $Y$ be a cycle property. We say $Y$ is
\emph{closed under cyclic permutations} if for any cycle $\chi = v_0 v_1 \dots
v_{n-1}v_0$ such that $\chi \models Y$ we have that $v_1 v_2
\dots v_{n-1} v_0 v_1  \models Y$.
We also say $Y$ is \emph{closed under concatenation} if for any
two cycles $\chi = v_0 v_1 \dots v_{n-1} v_0$ and $\chi' = v_0 v'_1 \dots
v'_{m-1} v_0$ such that $\chi,\chi' \models Y$ we have that $\chi \dot v'_1 \dots
v'_{m-1} v_0 \models Y$.
\begin{lemma}\label{lem:eg-closed}
	The property of a cycle being winning for \eve (\adam) in an FCEG is
	closed under cyclic permutations and concatenation.
\end{lemma}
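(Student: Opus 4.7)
The plan is to unfold the FCEG winning condition and verify both closure properties by direct inspection on edges. Recall that a cycle $\chi$ is winning for \eve iff it contains an edge in $E_\rho$ or its sum of weights is non-negative; equivalently, $\chi$ is winning for \adam iff it contains no reset edge and its sum of weights is strictly negative. These two conditions are complementary with respect to the set of all cycles of the arena, and the set of cycles is itself closed under cyclic permutations and vertex-matched concatenation. Hence it suffices to establish closure for \eve's winning cycles; the case of \adam will follow by taking complements.

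For closure under cyclic permutations, I would observe that any cyclic shift of $\chi = v_0 v_1 \dots v_{n-1} v_0$ traverses exactly the same (multi)set of edges as $\chi$. Consequently it has the same sum of weights and, in particular, contains a reset edge iff $\chi$ does. The two disjuncts defining \eve's winning condition therefore transfer verbatim, and the closure is immediate.

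For closure under concatenation, let $\chi = v_0 v_1 \dots v_{n-1} v_0$ and $\chi' = v_0 v'_1 \dots v'_{m-1} v_0$ be two cycles sharing $v_0$, both winning for \eve. The concatenation $\chi \cdot \chi' = v_0 v_1 \dots v_{n-1} v_0 v'_1 \dots v'_{m-1} v_0$ uses exactly the disjoint union of the edges of $\chi$ and $\chi'$; in particular its total weight is $w(\chi) + w(\chi')$, and it contains a reset edge iff at least one of $\chi, \chi'$ does. I would then split on cases: if either of $\chi, \chi'$ contains a reset edge, so does $\chi \cdot \chi'$, and we are done; otherwise, by the winning hypothesis both weight sums must be non-negative, so the concatenation has non-negative weight sum and is again winning for \eve.

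There is no real obstacle here: the only subtlety is that one cannot conclude additively on the weight sums alone, because \eve's condition is a disjunction and a negative-sum cycle may still be winning thanks to a reset edge. Handling the disjunction by the small case analysis above resolves this cleanly, and the symmetric argument for \adam is pure complementation, since "no reset edge" and "strictly negative sum" are both closed under taking the same multiset of edges.
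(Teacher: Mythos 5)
Your argument for \eve's winning cycles is correct and matches the paper's proof: cyclic permutations preserve the multiset of edges (hence both the reset-containment and the weight sum), and for concatenation the case split on whether a reset edge is present handles the disjunction properly. The one soft spot is your claim that \adam's case ``follows by taking complements.'' For cyclic permutations this is fine, since a cyclic shift preserves exactly the data (multiset of edges) on which both conditions depend. But for concatenation, closure of a cycle property under concatenation does \emph{not} in general imply closure of its complement: a priori, concatenating two cycles that each fail \eve's condition could produce one that satisfies it. Here the conclusion happens to hold, but you need to say why directly, as the paper does: \adam's condition is the conjunction of ``no reset edge'' and ``strictly negative weight sum,'' and concatenating two reset-free, negatively-weighted cycles yields a reset-free cycle whose weight sum is the sum of two negative numbers, hence negative. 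That one-line direct argument replaces the invalid complementation step; with it, your proof coincides with the paper's.
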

\begin{proof}
	Clearly, a cycle being winning for \adam is closed under both
	operations. Indeed, by commutativity of addition, if a cycle has
	negative sum of weights, the order of the weights does not matter.
	Additionally, the concatenation of two negatively weighted
	cycles concatenated yields a
	negatively weighted cycle.

	Let us consider now the property of a cycle being winning for \eve. If
	it contains a reset edge then any cyclic permutation of the cycle will
	also contain it. If it does not have a reset edge then it must have a
	non-negative sum of weights. Thus, the result follows by commutativity
	of addition. Additionally, concatenation preserves containment of
	a reset edge and, again, two positively weighted cycles can only yield a
	positively weighted cycle when concatenated.
\end{proof}
It then follows immediately from Lemmas~\ref{lem:eg-greedy},~\ref{lem:eg-closed}
and~\cite{ar14} that EGRs and FCEGs are uniformly positionally determined.
\begin{proposition}\label{pro:eg-pos-det}
	Let $\calG = (V,V_\exists,v_I,E,E_\rho,w)$ be an arena.
	The EGR played on $\calG$ with initial credit $|V|\maxweight$ and
	the FCEG played on $\calG$ are both uniformly positionally determined.
\end{proposition}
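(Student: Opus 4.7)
The plan is to invoke the general theorem of Aminof and Rubin~\cite{ar14} twice: once to settle positional determinacy for the FCEG, and then to transfer this property to the EGR using the greedy nature of the game.

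First, I would focus on the FCEG played on $\calG$. The cycle property $Y$ defining a cycle to be winning for \eve (``the cycle contains a reset edge, or it has non-negative sum of weights'') was shown in Lemma~\ref{lem:eg-closed} to be closed both under cyclic permutations and under concatenation, and its complement (cycles with negative sum of weights and no reset edge) clearly shares these closure properties. These are precisely the hypotheses under which the main theorem of~\cite{ar14} on first cycle games asserts uniform positional determinacy: in every arena with no sinks, from every starting vertex, exactly one player has a positional strategy that is winning for her/him against every opposing strategy, and this positional strategy is uniform across the winning region. Applying this theorem to $\calG$ yields the FCEG half of the claim.

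Next, I would lift this to the EGR with initial credit $|V|\maxweight$ by appealing to Proposition~\ref{pro:strat-transfer}. That proposition states that a memoryless strategy for either player is winning in the FCEG on $\calG$ if and only if it is winning in the EGR on $\calG$ with this initial credit. Hence the positional strategy given by the FCEG-result is automatically positional and winning in the EGR, and for the same winning region; conversely, any EGR-winning strategy can be replaced by a positional one by transferring to the FCEG and back. Combined with the uniform positional determinacy of the FCEG, this gives uniform positional determinacy of the EGR.

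The one subtlety to be careful about is the treatment of the ``initial'' vertex $v_I$ that was fixed for notational convenience: uniform determinacy is stated from every vertex, so I would emphasize that the strategy-transfer argument and the Aminof--Rubin theorem are quantified uniformly over the choice of starting vertex, which is a routine reinterpretation since neither the greedy property (Lemma~\ref{lem:eg-greedy}) nor the cycle closure properties (Lemma~\ref{lem:eg-closed}) depend on $v_I$. The only real work is therefore to match the hypotheses of the Aminof--Rubin framework with our setting; once this is done, both halves of the statement reduce to direct citations.
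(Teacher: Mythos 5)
Your proposal is correct and matches the paper's argument, which likewise derives the result immediately from the closure properties of the cycle winning condition (Lemma~\ref{lem:eg-closed}), the greediness of the EGR (Lemma~\ref{lem:eg-greedy}), and the general theorem of~\cite{ar14} on first cycle games. Routing the EGR half explicitly through the strategy-transfer statement (Proposition~\ref{pro:strat-transfer}) is just a slightly more spelled-out version of the same citation-based proof.
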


Using Lemma~\ref{lem:adam-wins-any} and Proposition~\ref{pro:eg-pos-det} we can
show that \eve wins an EGR with some initial credit $c_0 \in \mathbb{N}$ if and
only if she wins with initial credit $|V|\maxweight$.
\begin{proposition}\label{pro:enough-credit}
	Let $\calG = (V,V_\exists,v_I,E,E_\rho,w)$ be an arena
	\eve wins the EGR played on $\calG$ with initial credit
	$|V|\maxweight$ if and only if she wins it with some initial credit $c_0
	\in \mathbb{N}$.
\end{proposition}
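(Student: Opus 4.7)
The plan is to prove the easy direction first and then handle the converse by contrapositive. The easy direction is immediate: if \eve wins with initial credit $|V|\maxweight$, take $c_0 := |V|\maxweight$. For the converse, suppose \eve does \emph{not} win with initial credit $|V|\maxweight$. By uniform positional determinacy (Proposition~\ref{pro:eg-pos-det}), \adam has a positional strategy $\tau$ that is winning for him from $v_I$ in the EGR with credit $|V|\maxweight$; by Proposition~\ref{pro:strat-transfer}, the same $\tau$ is also winning for him in the FCEG played on $\calG$ from $v_I$. I will show that $\tau$ then defeats \eve in the EGR with \emph{any} initial credit $c_0 \in \mathbb{N}$.

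The intermediate lifting step I would prove is: every simple cycle $\chi$ that can arise along a play consistent with $\tau$ from $v_I$ is winning for \adam in the FCEG, i.e., has strictly negative weight sum and contains no reset edge. Using the standard closure of the winning region of a uniform positional winner, every vertex $v$ reachable from $v_I$ by a prefix consistent with $\tau$ lies in \adam's FCEG winning region, and $\tau$ remains winning from $v$. Then, given any simple cycle $\chi = v_i v_{i+1}\dots v_{i+n}=v_i$ along a play $\pi$ consistent with $\tau$, I would simulate an FCEG started from $v_i$ in which \adam plays $\tau$ while \eve mirrors her moves from $\pi$; because $\chi$ is simple, the first cycle formed in that FCEG is exactly $\chi$, so by the FCEG winning condition $\chi$ must be \adam-winning.

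To conclude, fix an arbitrary $c_0 \in \mathbb{N}$ and any \eve strategy, and let $\pi = v_0 v_1 \dots$ be the resulting play consistent with $\tau$. Since $V$ is finite, $\pi$ eventually remains inside a single strongly connected region of the play graph, and every edge there lies on some simple cycle of $\pi$; by the lifting step these simple cycles are reset-free, so only finitely many reset edges occur in $\pi$ in total. Let $j^*$ be the position of the last reset (or $0$ if none). For $n > j^*$, unfolding the definition of $\el$ yields
\[
    \el_{c_0}(v_0 \dots v_n) = c_0 + \sum_{j = j^*}^{n-1} w(v_j, v_{j+1}),
\]
and a cycle decomposition of the suffix $v_{j^*}\dots v_n$ into an acyclic skeleton of length at most $|V|$ plus simple cycles of $\pi$ (each with integer weight at most $-1$) forces this quantity to tend to $-\infty$. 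Hence for large enough $n$ the energy level goes negative, so $\tau$ defeats \eve from $v_I$ with initial credit $c_0$, completing the contrapositive.

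The main obstacle will be the lifting step: \adam's strategy $\tau$ is, a priori, only guaranteed to win the \emph{first} cycle formed in an FCEG play from $v_I$, whereas the cycle-decomposition argument at the end needs the stronger structural guarantee that reset-freeness and strict negativity hold for \emph{every} simple cycle the play can traverse. Without this promotion, reset edges occurring inside cycles of the recurrent region could repeatedly replenish \eve's energy level and allow her to survive with a sufficiently large initial credit, which would invalidate the reduction to the $|V|\maxweight$ regime.
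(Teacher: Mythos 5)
Your proof is correct and follows essentially the same route as the paper: the contrapositive via uniform positional determinacy (Proposition~\ref{pro:eg-pos-det}), transfer of \adam's positional strategy to the first cycle game (Proposition~\ref{pro:strat-transfer}), and a cycle-based argument that the energy then diverges to $-\infty$ for any initial credit (the paper's Lemma~\ref{lem:adam-wins-any}). Your explicit ``lifting step'' is exactly the content the paper delegates to the greedy/first-cycle-game framework of Lemmas~\ref{lem:eg-greedy} and~\ref{lem:eg-closed} and the cited results on first cycle games, so you have merely inlined what the paper invokes by reference.
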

\begin{proof}
	One direction is obvious: if \eve wins the EGR with initial credit $c_0 =
	|V|\maxweight$ then clearly she wins the EGR with some initial credit. We
	argue that if there exists some initial credit with which she wins the
	EGR then $|V|\maxweight$ suffices. We will, in fact, show that the
	contrapositive holds. Suppose that \eve does \textbf{not} win the EGR
	with initial credit $|V|\maxweight$. Then by determinacy of the EGR
	with that initial credit (Proposition~\ref{pro:eg-pos-det}), and using
	Proposition~\ref{pro:strat-transfer} together with
	Lemma~\ref{lem:adam-wins-any} we have that \adam has a strategy $\tau$
	in the EGR which, regardless of the initial credit, ensures a negative
	energy level is witnessed. Hence, there is no initial credit for which
	\eve wins.
\end{proof}

\myparagraph{A pseudo-polynomial algorithm} We will reduce the problem of
deciding if \eve wins an EGR with a given initial credit $c_0 \in \mathbb{N}$ to
that of deciding if she wins a \emph{safety game}~\cite{ag11}. Safety games are
played by \eve and \adam on an unweighted arena $(V,V_\exists,v_I,E,U)$ with a
set $U \subseteq V$ of \emph{unsafe vertices} which determine
the goals of the players. \eve wins the safety game if she has a strategy which
ensures no the play does not contain vertices from $U$, otherwise \adam
wins. Safety games are known to be uniformly positionally determined and
solvable in linear time~\cite{ag11}.

More formally, for a given EGR $(V,V_\exists,v_I,E,E_\rho,w)$
and initial credit $c_0$, we define a safety game played on
$(V',V'_\exists,v'_I,E',U)$ where
\begin{itemize}[nolistsep]
	\item $V' = V \times \left(\{\bot\} \cup \{0, 1, \dots,
		|V|\maxweight\}\right)$,
	\item $V'_\exists = V \times \left(\{\bot\} \cup
		\{0,1,\dots,|V|\maxweight\} \right)$,
	\item $v'_I = (v_I,c_0)$,
	\item $E'$ includes the edge $( (u,c), (q,v) )$ if and only
		if
		\begin{itemize}[nolistsep]
			\item $(u,v) \in E \setminus E_\rho$, $c \neq \bot$,
				and $d = \max\{w(u,v) + c,
				|V|\maxweight\}$ or
			\item $(u,v) \in E_\rho$, $c \neq \bot$, 
				and $d = \max\{w(u,v) + c_0, |V|\maxweight$ or
			\item $c = \bot$, $d = \bot$, and $u = v$.
		\end{itemize}
	\item $U = V \times \{\bot\}$.
\end{itemize}
Informally, if the energy level goes above $|V|\maxweight$ then we ``bounce'' it
back to $|V|\maxweight$.

\begin{proposition}\label{pro:eg-to-safe}
	\eve wins the EGR $(V,V_\exists,v_I,E,E_\rho,w)$
	with initial credit $c_0 \in \mathbb{N}$ if and only if
	she wins the safety game played on $(V',V'_\exists,v'_I,E',U)$.
\end{proposition}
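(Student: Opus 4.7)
The plan is to transfer winning strategies between the two games via a natural invariant: the second component of each safety-game state records the current energy level of the corresponding EGR play, capped at $|V|\maxweight$ from above and with $\bot$ standing in for a negative energy level. Before starting either direction, observe that it suffices to handle $c_0 \leq |V|\maxweight$: if $c_0$ is larger, then by Proposition~\ref{pro:enough-credit} \eve wins the EGR with initial credit $c_0$ if and only if she wins with initial credit $|V|\maxweight$, and we compare with the safety game started at $(v_I,|V|\maxweight)$.

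For the forward direction, suppose \eve wins the EGR from $v_I$ with initial credit $c_0$. By uniform positional determinacy (Proposition~\ref{pro:eg-pos-det}), she has a positional winning strategy $\sigma$. Define the safety-game strategy $\sigma'$ at a state $(u,c)$ with $c \neq \bot$ by picking the edge into $(\sigma(u),d)$, where $d$ is the unique successor prescribed by $E'$. A straightforward induction on the length of play shows that, for any prefix $\phi_n = v_0 \dots v_n$ of an EGR play obtained by projecting a safety play consistent with $\sigma'$ onto its first component, the second component at step $n$ equals $\min\{\el_{c_0}(\phi_n),|V|\maxweight\}$ (in particular it is never $\bot$), so the play avoids $U$.

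For the backward direction, take a positional winning strategy $\sigma'$ for \eve in the safety game. Regarding the second component of the safety state as finite memory, let $\sigma$ be the induced finite-memory strategy in the EGR that, from the current vertex and recorded credit, plays the edge indicated by $\sigma'$. The relevant invariant here is that along every EGR play consistent with $\sigma$, the true energy level $\el_{c_0}(\phi_n)$ dominates the recorded value $c_n$: equality holds until the first capping step, and each subsequent cap only decreases $c_n$ further below $\el_{c_0}(\phi_n)$. Since $\sigma'$ keeps $c_n \neq \bot$, \ie $c_n \geq 0$, we conclude $\el_{c_0}(\phi_n) \geq c_n \geq 0$ for every prefix, so $\sigma$ is winning for \eve in the EGR. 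The delicate step, and the one that requires verifying against the definitions, is the treatment of reset edges: on such a transition the EGR energy level restarts at $c_0$ plus the weight of the reset edge, and the construction of $E'$ indeed computes the new component from $c_0$ rather than from the current $c$, so the invariant extends smoothly across resets.
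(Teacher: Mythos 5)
Your backward direction (a winning strategy in the safety game yields one in the EGR) is correct and is essentially what the paper dismisses with ``clearly'': the recorded component is always a lower bound on the true energy level, including across resets, so a play that avoids $U$ corresponds to an EGR play whose energy level never becomes negative.

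The forward direction, however, has a genuine gap. The invariant you claim --- that the second component at step $n$ equals $\min\{\el_{c_0}(\phi_n), |V|\maxweight\}$ --- is false, and the ``straightforward induction'' breaks at the first negative-weight edge after a capping step. Write $M = |V|\maxweight$ and take $c_0 = M$: after two edges of weight $+1$ the true energy is $M+2$ while the recorded value is capped at $M$; after a further edge of weight $-1$ the true energy is $M+1$ but the recorded value is $M-1 \neq \min\{M+1,M\}$. Iterating, the recorded value can reach $\bot$ while the true energy never becomes negative: the recorded value goes negative exactly when some suffix of the play starting after the last cap (or reset) has total weight below $-M$, and a strategy that merely keeps $\el_{c_0}$ non-negative does not rule this out. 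So playing an EGR-winning strategy verbatim in the safety game is not sound, and your conclusion that the play avoids $U$ does not follow. This is precisely why the paper's proof does \emph{not} play $\sigma$ throughout: it plays $\sigma$ only until the first ``bounce'', observes that the vertex $u$ reached there lies in \eve's winning region, and then switches to the \emph{uniformly} positional strategy that is winning from $u$ with initial credit $M$ (Propositions~\ref{pro:eg-pos-det} and~\ref{pro:enough-credit}); repeating this at every cap is what guarantees that no post-cap suffix loses more than $M$. Your argument never invokes this uniform property; it also assumes a positional strategy winning with initial credit $c_0 < M$, which Proposition~\ref{pro:eg-pos-det} does not provide (it is stated only for credit $|V|\maxweight$). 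To repair the proof, replace the false equality by the inequality ``recorded value $\ge 0$'' and establish it segment by segment between consecutive caps, using that the strategy played on each segment is winning with initial credit $M$ from the vertex at which that segment begins.
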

\begin{proof}
	Clearly, if \eve wins the safety game then she wins the EGR with initial
	credit $c_0$.
	
	Conversely, if she has a strategy $\sigma$ to win
	the EGR with initial credit $c_0$, then, by
	Proposition~\ref{pro:eg-pos-det} she can do so with a uniformly
	positional strategy. That is, she has a strategy $\sigma'$ which ensures
	that from any vertex from which she can win with some initial credit,
	she wins with $\sigma'$ and initial credit $|V|\maxweight$.
	We claim that \eve must also be able to win the safety game.
	Indeed, at least until the first time the energy level is ``bounced''
	back, by playing according to $\sigma$, the energy level cannot become
	negative. For any play which does reach a point at
	which the energy level is ``bounced'' back, we observe that the reached
	vertex $u$ (the first component of the safety-game vertex $(u,c)$) is also
	reachable in the EGR by a play consistent with $\sigma$. Hence, $\sigma$
	must be winning for \eve in the EGR played from $u$ with some initial
	credit. By Proposition~\ref{pro:enough-credit}, $|V|\maxweight$ should
	suffice for $\sigma'$ to be winning for her in the EGR from $u$.
	Henceforth, she plays according to $\sigma'$ and the energy level cannot
	become negative by the above argument.

	The result then follows by determinacy of safety games.
\end{proof}

\section{Proof of Proposition~\ref{pro:r-regret-eg}}
Given an automaton $\calN = (Q,I,A,\Delta,w,F)$ and $\calD =
(Q',\{q'_I\},A,\Delta',w',F')$ such that $\calD$ is deterministic and
$\inter{\calD} = \inter{\calN}$, we construct the energy game without resets is
$\calG = (V,V_\exists,E, \emptyset,\mu)$ as described in the sketch provided in
the main body of the paper.

Since $\calG$ is of size polynomial w.r.t. $\calD$ and $\calN$, and we have a
pseudo-polynomial algorithm to determine the winner of energy games, it suffices
for us to prove the following claim.
\begin{lemma}
	The automaton $\calN$ is $r$-regret determinizable if and only if
	there exists $p_I \in I$ such that \eve wins the energy game without
	resets $\calG$ from $(p_I,q'_I)$ with initial credit $r +
	|Q'|(\maxweight + \maxweight')$.
\end{lemma}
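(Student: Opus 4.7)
The plan is to prove both implications of the equivalence by exploiting the fact that, along any play in $\calG$, the energy level starting from $c_0 = r + |Q'|(\maxweight + \maxweight')$ equals
\[
c_0 + w(\rho_{\calN}) - w'(\rho_{\calD}),
\]
where $\rho_{\calN}$ is the partial run of $\calN$ chosen by \eve so far and $\rho_{\calD}$ is the unique (deterministic) partial run of $\calD$ on the same input word. Because $\inter{\calD}=\inter{\calN}$, whenever the play reaches an accepting configuration $(p,q)\in F\times F'$ corresponding to some word $\alpha$, this quantity becomes exactly $c_0 - (\inter{\calN}(\alpha)-w(\rho_{\calN}))$, which is the initial credit minus the current regret of \eve's run; the gadget through $\bot^r_1,\bot^r_2$ drains exactly $|Q'|(\maxweight+\maxweight')$ units of energy, so it survives forever iff the regret is at most $r$.

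For the direction ``$\calG$-winning $\Rightarrow$ $r$-regret determinizable'', I would take a positional winning strategy $\sigma^{\calG}$ for \eve guaranteed by Proposition~\ref{pro:eg-props}~(\ref{itm:pos-det}) and build a finite-memory regret-game strategy $\sigma$ by storing in the memory the current pair $(p,q)\in Q\times Q'$ and applying $\sigma^{\calG}$ at vertices of the form $(p,q,q',a)$ (the $\calD$-successor $q'$ being forced by determinism of $\calD$). Arguing by contrapositive, suppose $\alpha\in\lang{\calN}$ with $\inter{\calN}(\alpha)-\mathsf{Val}(\sigma(\alpha))>r$: playing $\alpha$ in $\calG$ yields an energy drop by more than $r$ by the invariant above, and \adam then forces the $\bot^r_1$-transition, reducing energy by $|Q'|(\maxweight+\maxweight')-1$, contradicting non-negativity along the forced $\bot^r_1$-$\bot^r_2$ loop. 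The cases where \eve blocks or reaches a non-final state while $\calD$ accepts correspond exactly to reaching $\bot$ via the $E_{\forall_3}$ edges, again forbidden by a winning strategy.

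For the converse direction, take an $r$-regret determinizer $\calD^*$ of $\calN$ with homomorphism $\mu$ to $\calN$; I may assume $\calD^*$ is trim. Define \eve's strategy in $\calG$ by choosing $p_I=\mu(d_I)$ (where $d_I$ is $\calD^*$'s initial state) and, at vertex $(p,q,q',a)$, responding with the transition $(\mu(d),a,\mu(d'))\in\Delta$ corresponding to $\calD^*$'s unique $a$-transition from the simulated $\calD^*$-state $d$. The key bound I would establish is that along any reachable prefix on a word $\alpha$, the difference $V-V^*$ between the current $\calD$-value $V$ and the current $\calD^*$-value $V^*$ is at most $c_0 = r+|Q'|(\maxweight+\maxweight')$. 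To see this, extend the prefix $\alpha$ by a $\beta$ of length at most $|Q'|$ leading $\calD$ (and hence, since $\lang{\calD}=\lang{\calD^*}$, also $\calD^*$) to an accepting state; then $V_{\mathrm{full}}-V^*_{\mathrm{full}}\le r$ by the regret bound, and $(V_{\mathrm{full}}-V)-(V^*_{\mathrm{full}}-V^*)$ is at most $|\beta|(\maxweight+\maxweight^*)\le |Q'|(\maxweight+\maxweight')$ (using that the weights of $\calD^*$ come from $\calN$ through $\mu$). At a configuration $(p,q)\in F\times F'$, we get the sharper bound $V-V^*\le r$, so the $\bot^r_1/\bot^r_2$ gadget is survived; the $\bot$-transitions are never triggered because $\lang{\calD^*}=\lang{\calD}$ implies that \eve can always respond and accepts exactly when $\calD$ accepts; and the $\top$-transitions only benefit \eve.

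The main technical obstacle is the bounding argument in the converse direction: one must show that the potential difference between the runs of $\calD^*$ and $\calD$ at intermediate (non-final) prefixes cannot grow beyond $c_0$, which requires combining the $r$-bound on final values with a trimness-based short-extension argument controlled by $|Q'|$. Getting the constant to match the stated initial credit $r+|Q'|(\maxweight+\maxweight')$ exactly is where the choice of the bound $|Q'|(\maxweight+\maxweight')$ in $\calG$'s design pays off.
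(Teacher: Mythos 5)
Your proposal is correct, and the forward direction (a winning energy-game strategy yields a regret strategy with regret at most $r$, verified by checking that a regret larger than $r$, a blocked run, or a missed final state would let \adam reach $\bot^r_1$ or $\bot$) matches the paper's argument. Where you differ is in the other direction: the paper argues by contrapositive, using positional determinacy of energy games to get a winning strategy for \adam, playing \eve's hypothetical regret strategy against it, and deriving a contradiction at the first prefix where the energy drops below $0$; you instead build \eve's energy-game strategy directly from the $r$-regret determinizer $\calD^*$ (via the homomorphism $\mu$) and prove the invariant $w'(\rho_{\calD}) - w(\rho_{\calN}) \le r + |Q'|(\maxweight+\maxweight')$ at every prefix. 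The two arguments rest on the same technical kernel --- extending any prefix by a word $\beta$ of length at most $|Q'|$ to an accepting state of the trim automaton $\calD$ and invoking the regret bound on the completed word, with the weights of $\calD^*$ bounded by $\maxweight$ through $\mu$ --- so neither buys more generality, but your direct version avoids invoking determinacy for that implication and makes the role of the constant $|Q'|(\maxweight+\maxweight')$ in the initial credit more transparent. Your accounting of the $\bot$ and $\bot^r_1/\bot^r_2$ gadgets (in particular the sharper bound $V - V^* \le r$ at configurations in $F \times F'$) is exactly what is needed.
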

\begin{proof}
	We will argue that if \eve wins the energy game, then she wins the
	$r$-regret game and if \adam wins the energy game, then she cannot win
	the $r$-regret game. The desired result follows from determinacy of
	energy games (Proposition~\ref{pro:eg-props}).

	Assume \eve wins the game from some $(p_I,q'_I)$ with strategy $\sigma$.
	Clearly, any play consistent with $\sigma$ never reaches the vertex
	$\bot$. The strategy $\sigma$ can be turned into a strategy $\sigma'$
	for \eve in the regret game as follows: for every symbol given by \adam
	in the regret game, $\sigma'$ selects a transition of $\calN$ based on
	what $\sigma$ does in response to the deterministic transition of
	$\calD$. More formally, for any word $\alpha = a_0 \dots a_{n-1} \in
	A^*$ which can be extended to a word $\alpha' \in \lang{\calN}$, we have
	$\sigma'(\epsilon) = p_I$ and $\sigma'(\alpha) = \sigma\left(
	(p_0,q_0) (p_0,q_0,q_1,a_0) \dots (p_{n-1},q_{n-1},q_n,a_{n-1}) \right)$
	where $p_0 = p_I, q_0 = q'_I$ and
	\[
		(p_0,q_0) (p_0,q_0,q_1,a_0) \dots (p_{n-1},q_{n-1},q_n,a_{n-1})
	\]
	is consistent with $\sigma$. The latter
	is well defined since we have argued that no play consistent with
	$\sigma$ reaches $\bot$. Additionally, since \adam can choose to avoid
	$\top^r_1,\top^r_2$, there are plays consistent with any strategy of
	\eve which do not reach these vertices. Finally, since we have assumed
	$\alpha$ can be extended to a word in the language of $\calN$, $\top$
	cannot be reached. For words which cannot be extended in this way,
	$\sigma'$ behaves arbitrarily. Observe that if $\alpha \in
	\lang{\calN}$, then $\alpha \in \lang{\calD}$ and thus $p_n \in F$ since
	otherwise \adam could reach $\bot$ in the energy game when playing
	against $\sigma$, and this would contradict the fact that $\sigma$ is
	winning. Furthermore, we have that $\inter{\calN}(\alpha) -
	\Val(\sigma'(\alpha)) \le r$ since otherwise \adam could reach
	$\bot^r_1$ in the energy game when playing against $\sigma$ and make her
	lose the game (since $\inter{\calN}(\alpha) = \inter{\calD}(\alpha) =
	w(q_0 \dots q_n)$), again contradicting the fact that $\sigma$ is
	winning not be winning.

	Assume \adam wins the game $\calG$ from every $(p_I,q'_I)$. Suppose, for
	a contradiction, that \eve has a strategy $\sigma$ with which she wins
	the $r$-regret game. Let $\sigma(\epsilon) = p_0$ and $\tau$ be the
	strategy for \adam in the energy game which is winning for him from
	$(p_0,q'_I)$. The strategy $\sigma$ can be turned into a strategy for
	\eve in the energy game by ignoring the states of $\calD$ and choosing
	transitions of $\calN$ when \adam chooses a symbol. Hence, the play
	$\pi^{(p_0,q'_0)}_{\sigma \tau}$ in the energy game must be losing for
	\eve, by choice of $\tau$. If the play is losing because vertex $\bot$
	is reached, then either $\sigma$ does not reach a final state of $\calN$
	after reading a word in $\lang{\calN}$ or she got stuck and cannot
	continue choosing transitions. In both cases, this contradicts the fact
	that $\sigma$ is winning for her in the regret game. If the play is
	losing because vertices $\bot^r_1$ and $\bot^r_2$ were reached, then
	$\sigma$ cannot be winning in the regret game because it assigns a
	weight to a word $\alpha \in \lang{\calN}$ that is too low w.r.t. to
	$\inter{\calD}(\alpha) = \inter{\calN}(\alpha)$. Finally, if the play is
	losing because the energy level drops below $0$, despite the initial
	credit of $r + |Q'|(\maxweight +
	\maxweight')$, after prefix $\phi = \dots (p_n,q_n)$, then since $\calD$
	is trim, there is a word $\beta = b_0 \dots b_{m-1}$ such that $m \le
	|Q'|$ and there is an accepting of $\calD$ from $q_n$ on $\beta$.
	Clearly then, the difference between the value of the run constructed
	by $\sigma$ and the value assigned by $\calD$ to the overall word is
	strictly greater than $r$. Hence, $\sigma$ cannot be winning for \eve
	in the regret game.
\end{proof}

\section{Proof of Lemma~\ref{lem:transitivity}}

\begin{proof} 
	Let $W'$ denote the set of pairs $(p,t) \in Q^2$ such that $(p,q),(q,t)
	\in W^{\JG}$ for some $q \in Q$.  Given $(p,t) \in W'$, let $q_{p,t} \in
	Q$ denote the state such that $\credit(p,q_{p,t}) + \credit(q_{p,t},t)$
	is minimal.  We prove that for every $(p,t) \in W'$, $\credit(p,t) \leq
	\credit(p,q_{p,t}) + \credit(q_{p,t},t)$.  The lemma then follows, by
	Proposition~\ref{pro:eg-props} and the definition of $W^{\JG}$.
	
	The proof is done by exposing a positional strategy $\sigma$ for \eve in
	the JG played on $\calN = (Q,I,A,\Delta,w,F)$.
	For every $(p,q) \in W^{\JG}$, let
	$\sigma_{(p,q)}$ denote a winning strategy for \eve in the JG played
	from $(p,q)$ with initial credit $\credit(p,q)$.  For every $(p,t,a) \in
	Q^2 \times A$, if $(p,t) \in W'$, let $\sigma((p,t,a)) =
	\sigma_{(p,q_{p,t})}((p,q_{p,t})(p,q_{p,t},a))$.  We now prove that, for
	every $p,t \in W'$, the strategy $\sigma$ is winning for \eve in the JG
	played on $\calN$ starting from $(p,t)$, with initial credit
	$\credit(p,q_{p,t}) + \credit(q_{p,t},t)$.  Suppose, towards a
	contradiction, that there exists a play consistent with $\sigma$
	\[
		\phi = (p_0, t_0) (p_0, t_0, a_0) (p_0, t_0,p_1, a_0)(p_1,t_1)
		\dots (p_n,t_n)
	\]
	such that $\el_{c_0}(\phi) < 0$, where $c_0 = \credit(p_0,q_{p_0,t_0}) +
	\credit(q_{p_0,t_0},t_0)$.  Moreover, let us suppose that we have chosen
	the play $\phi$ such that there exists no play of shorter length
	satisfying this property.
	
	First, let us set
	\[
		\phi' = (p_1, t_1) (p_1, t_1, a_1) (p_1, t_1,p_2, a_1)(p_2,t_2) \dots
		(p_n,t_n)
	\]
	and $c_1 = \credit(p_1,q_{p_1,t_1}) + \credit(q_{p_1,t_1},t_1)$.  By the
	hypothesis of minimality over the length of $\phi$, we obtain that
	$\el_{c_1}(\phi') \geq 0$.  Now, let $q_0 = q_{p_0,t_0}$ and let $q_1' =
	\sigma_{(q_0,t_0)}((p_0,t_0,a))$.  Note that, by definition of $\sigma$,
	$p_1 = \sigma_{(p_0,q_0)}((p_0,q_0,a))$.  Therefore, by Lemma
	\ref{lem:joking}, $\credit(p_1,q'_1) \leq \credit(p_0,q_0) +
	w(p_0,a,p_1) - w(q_0,a,q'_1)$ and $\credit(q'_1,t_1) \leq
	\credit(q_0,t_0) + w(q_0,a,q'_1) - w(t_0,a,t_1)$.  Moreover, by
	definition of $q_{p_1,t_1}$, $c_1 \leq \credit(p_1,q'_1) +
	\credit(q'_1,t_1)$, hence
	\[
		c_1 \leq c_0 + w(p_0,a,p_1) - w(t_0,a,t_1).
	\]
	Therefore, we obtain
	\[
		\begin{array}{lll}
		\el_{c_0}(\phi) & = & \el_{0}(\phi) + c_0\\
		& = & \el_{0}(\phi') + c_0 + w(p_0,a,p_1) - w(t_0,a,t_1)\\
		& \geq & c_0 - c_1 + w(p_0,a,p_1) - w(t_0,a,t_1)\\
		& \geq & 0,
		\end{array}
	\]
	which is a contradiction.
\end{proof}

\section{Proof of Theorem~\ref{thm:existential-regret-prob}}
We will adapt the proof of Proposition~\ref{pro:r-regret-eg} to show that the
existential $r$-regret problem can be solved by reduction to an energy game if a
deterministic version of the automaton is known. Together with the techniques
developed in Section~\ref{sec:upper-regret}, this will imply the existential
$r$-regret problem is decidable in exponential time.

\begin{proposition}\label{pro:exist-regret-eg}
	Given an automaton $\calN = (Q,I,A,\Delta,w,F)$ and $\calD =
	(Q',\{q'_I\},A,\Delta',w',F')$ such that $\calD$ is deterministic and
	$\inter{\calD} = \inter{\calN}$, the existential $r$-regret problem for
	$\calN$ is decidable in time polynomial in $|Q|$, $|Q'|$,  $|A|$, 
	$\maxweight$, and $\maxweight'$.
\end{proposition}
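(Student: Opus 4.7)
The plan is to adapt the energy-game construction of Proposition~\ref{pro:r-regret-eg}. Concretely, I would start from exactly the same game $\calG = (V, V_\exists, E, \emptyset, \mu)$ built there, but drop the gadget consisting of the vertices $\bot^r_1, \bot^r_2$ and the incoming edges $\{((p,q),\bot^r_1) \mid p \in F \land q \in F'\}$. In order to keep the graph free of sinks at a configuration $(p,q)$ with $p \in F$ and $q \in F'$, I would add a self-loop of weight $0$. Call the resulting energy game $\calG'$. Note that $|V|$ and $\maxm$ remain polynomial in $|Q|, |Q'|, |A|, \maxweight, \maxweight'$.

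The algorithm then tests whether there exists $p_I \in I$ such that \eve wins $\calG'$ from $(p_I,q'_I)$ with initial credit $|V|\maxm$. By Proposition~\ref{pro:eg-props} items~\ref{itm:any-ic} and~\ref{itm:pos-det} this is equivalent to asking whether she wins from $(p_I, q'_I)$ with some initial credit, and the decision can be performed in time polynomial in $|V|$, $|E|$ and $\maxm$.

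For correctness, I need to argue that $\calN$ is $r$-regret determinizable for some $r$ if and only if \eve wins $\calG'$ from some $(p_I, q'_I)$ with initial credit $|V|\maxm$. For the $(\Leftarrow)$ direction, a positional winning strategy $\sigma$ for \eve (guaranteed by Proposition~\ref{pro:eg-props}) induces a deterministic automaton $\calD_\sigma$, as in the proof of Proposition~\ref{pro:r-regret-eg}. The non-negativity of the energy level gives the invariant $w'(\calD\text{-prefix}) - w(\calD_\sigma\text{-prefix}) \le |V|\maxm$ along all plays; when restricted to words $\alpha \in \lang{\calN}$ this yields $\inter{\calN}(\alpha) - \inter{\calD_\sigma}(\alpha) \le |V|\maxm$ (using $\inter{\calD}=\inter{\calN}$), so $\calN$ is $|V|\maxm$-regret determinizable.

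The harder direction is $(\Rightarrow)$: given an $r$-regret determinizer of $\calN$, turn it into a winning strategy for \eve in $\calG'$ with some initial credit. As in Lemma~\ref{lem:joker-nec-regret}, I would lift the $r$-regret strategy $\sigma$ to a strategy in $\calG'$ by ignoring \adam's choices in the second component. The key step will be to show the energy level stays bounded along every play consistent with $\sigma$. The argument: if at some point the accumulated difference exceeded $r + |Q|\maxweight + |Q'|\maxweight'$, then because $\calN$ and $\calD$ are trim, \adam could steer both to final states in at most $|Q|$, $|Q'|$ extra steps, producing an accepted word whose regret against $\sigma$ strictly exceeds $r$, contradicting the choice of $\sigma$. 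Thus \eve wins $\calG'$ with initial credit $r + (|Q|+|Q'|)(\maxweight+\maxweight')$, and by Proposition~\ref{pro:eg-props} item~\ref{itm:any-ic} she wins with initial credit $|V|\maxm$ as required. The trimness hypothesis and the equality $\inter{\calD} = \inter{\calN}$ are what make this bounding argument go through, and this uniform-boundedness step is the main obstacle I expect.
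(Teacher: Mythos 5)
Your construction and algorithm coincide with the paper's: the same energy game without resets, with the $\bot^r_1,\bot^r_2$ gadget removed, solved with initial credit $|V|\maxm$ (your extra weight-$0$ self-loop is harmless but also unnecessary, since $(p,q)$ already has outgoing edges simulating $\calD$'s transitions, or an edge to $\top$ when $\calD$ blocks). The $(\Leftarrow)$ direction is also the same. Where you genuinely diverge is the $(\Rightarrow)$ direction. The paper argues by contraposition: it first invokes the fixed-$r$ reduction of Proposition~\ref{pro:r-regret-eg} to assume the winning $r$-regret strategy $\sigma$ is realized by a Mealy machine of size $m_\sigma$, takes initial credit $|V|\maxm\, m_\sigma$, extracts a negatively-weighted cycle that is simultaneously a cycle of the product with that machine, and lets \adam pump it to blow up the regret beyond $r$. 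You instead prove a direct uniform bound: since $\calD$ is trim and $\inter{\calD}=\inter{\calN}$, any prefix where $\calD$'s run leads $\sigma$'s run by more than $r$ plus a constant can be completed (in at most $|Q'|$ steps) to a word of $\lang{\calN}$ witnessing regret above $r$, so the energy level never drops below zero for a suitable initial credit and Proposition~\ref{pro:eg-props} upgrades this to credit $|V|\maxm$. This is arguably cleaner: it avoids both the appeal to finite memory and the cycle-pumping step. Two small points to tidy up: your bound should be $r + |Q'|(\maxweight+\maxweight')$ rather than $r + |Q|\maxweight + |Q'|\maxweight'$ --- \adam chooses a \emph{single} completion $\beta$ of length at most $|Q'|$ driving $\calD$ to acceptance, and $\sigma$'s run simply follows it, contributing at most $|Q'|\maxweight$ (one cannot ``steer both'' independently); this is immaterial since any finite credit suffices. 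And you should note explicitly that a winning $\sigma$ never lets the lifted play reach $\bot$ (it always builds a valid run reaching a final state of $\calN$ whenever $\calD$ accepts, by trimness), which is what makes the energy level the only possible losing condition.
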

\begin{proof}
	As for the proof of Proposition~\ref{pro:r-regret-eg}, we
	construct an energy game without resets
	which simulates the regret game played on $\calN$
	while using $\calD$ to compare the weights of transitions chosen by \eve
	to those of the maximal run of $\calN$. Crucially, we will not add
	gadgets to punish \eve if she does not ensure a regret of at most $r$.
	This is because we do not fix such an $r$ a priori.
	Formally, the energy
	game without resets is $\calG = (V,V_\exists,E, \emptyset,\mu)$ where:
	\begin{itemize}[nolistsep]
		\item $V = Q^2 \cup Q^3 \times A \cup \{\top,\bot\}$;
		\item $V_\exists =  Q^3 \times A$;
		\item $E$ contains edges to simulate transitions of $\calN$
			and $\calD$, \ie 
			$\{ \left((p,q),(p,q,q',a)\right) \st (q,a,q') \in
			\Delta' \} \cup \{ \left((p,q,q',a),(p',q')\right) \st
			(p,a,p') \in \Delta \}$, edges required
			to verify \eve does not reach a non-final state when
			$\calD$ accepts, \ie~$\{ \left((p,q),\bot\right) \st p
			\not\in F \land q \in F' \} \cup \{ (\bot,\bot) \}$,
			and edges to punish one of the
			players if an automaton blocks, \ie~$\{ \left( (p,q), \top
			\right) \st \lnot \exists (q,a,q') \in \Delta'\} \cup
			\{ \left( (p,q,q',a), \bot \right) \st \lnot \exists
			(p,a,p') \in \Delta\} \cup \{(\top,\top)\}$;
		\item $\mu : E \to \mathbb{Z}$ is such that
			\begin{itemize}[nolistsep]
				\item $\left( (p,q,q',a),(p',q') \right) \mapsto
					w(p,a,p') - w'(q,a,q')$,
				\item $(\bot,\bot) \mapsto -1$,
				\item $(\top,\top) \mapsto 1$, and
				\item $e \mapsto 0$ for all other $e \in E$.
			\end{itemize}
	\end{itemize}
	We then claim that for some $p_I \in I$, \eve wins the energy game
	without resets $\calG$ from $(p_I,q'_I)$ with initial credit 
	$|V|\maxm$ if and only if $\calN$ is $r$-regret
	determinizable for some $r \in \mathbb{N}$.
	As before, the result will follow from the fact $\calG$ is of size
	polynomial w.r.t. $\calD$ and $\calN$, and the application of the
	pseudo-polynomial algorithm to determine the winner of $\calG$.

	Assume \eve wins the game from some $(p_I,q'_I)$ with strategy $\sigma$.
	Clearly, any play consistent with $\sigma$ never reaches the vertex
	$\bot$. The strategy $\sigma$ can be turned into a strategy $\sigma'$
	for \eve in the regret game as follows: for every symbol given by \adam
	in the regret game, $\sigma'$ selects a transition of $\calN$ based on
	what $\sigma$ does in response to the deterministic transition of
	$\calD$. More formally, for any word $\alpha = a_0 \dots a_{n-1} \in
	A^*$ which can be extended to a word $\alpha' \in \lang{\calN}$, we have
	$\sigma'(\epsilon) = p_I$ and $\sigma'(\alpha) = \sigma\left(
	(p_0,q_0) (p_0,q_0,q_1,a_0) \dots (p_{n-1},q_{n-1},q_n,a_{n-1}) \right)$
	where $p_0 = p_I, q_0 = q'_I$ and 
	\[
		(p_0,q_0) (p_0,q_0,q_1,a_0) \dots (p_{n-1},q_{n-1},q_n,a_{n-1})
	\]
	is consistent with $\sigma$. The latter
	is well defined since we have argued that no play consistent with
	$\sigma$ reaches $\bot$. Also, since we have assumed
	$\alpha$ can be extended to a word in the language of $\calN$, $\top$
	cannot be reached. For words which cannot be extended in this way,
	$\sigma'$ behaves arbitrarily. Observe that if $\alpha \in
	\lang{\calN}$, then $\alpha \in \lang{\calD}$ and thus $p_n \in F$ since
	otherwise \adam could reach $\bot$ in the energy game when playing
	against $\sigma$, and this would contradict the fact that $\sigma$ is
	winning. Furthermore, we have that $\inter{\calN}(\alpha) -
	\Val(\sigma'(\alpha)) \le |V|\maxm$, since $\sigma$
	is winning for eve in the energy game. Hence, \eve wins the $r$-regret
	game for $r = |V|\maxm$.

	Suppose \eve does not win the game from some $(p_I,q'_I)$ with initial
	credit $|V|\maxm$, then by
	Proposition~\ref{pro:eg-props}, \adam wins the game for every
	$(p_I,q'_I)$ regardless of the initial credit $c_0$.  Suppose, for a
	contradiction, that \eve has a strategy $\sigma$ with which she wins the
	$r$-regret game for some $r \in \mathbb{N}$. Because of our reduction
	from $r$-regret games to EGRs, we obtain from
	Proposition~\ref{pro:eg-props} that $\sigma$ can be assumed to be a
	finite memory strategy. Let $m_\sigma$ be the amount of memory used by
	the machine realizing $\sigma$, \ie~the size of the machine.
	Let $\sigma(\epsilon) =
	p_0$ and $\tau$ be the strategy for \adam in the energy game which is
	winning for him from $(p_0,q'_I)$. The strategy $\sigma$ can be turned
	into a strategy for \eve in the energy game by ignoring the states of
	$\calD$ and choosing transitions of $\calN$ when \adam chooses a symbol.
	Hence, the play $\pi^{(p_0,q'_0)}_{\sigma \tau}$ in the energy game must
	be losing for \eve, by choice of $\tau$. If the play is losing because
	vertex $\bot$ is reached, then either $\sigma$ does not reach a final
	state of $\calN$ after reading a word in $\lang{\calN}$ or she got stuck
	and cannot continue choosing transitions. In both cases, this
	contradicts the fact that $\sigma$ is winning for her in the regret
	game. Now, we will focus on the case where 
	the play is losing because the energy level drops
	below $0$. Recall that this will be the case, regardless of the initial
	credit, by choice of $\tau$. Thus, let $c_0 = |V|\maxm m_\sigma$ and
	let $\phi = \dots (p_n,q_n)$ be the minimal prefix of $\pi^{(p_0,q'_0)}_{\sigma
	\tau}$ such that $\el_{c_0}(\phi) < 0$. Clearly $\phi$ contains a
	negatively-weighted cycle $\chi$ which, furthermore, is a cycle on the
	machine realizing $\sigma$. Hence, \adam can ``pump'' $\chi$. A bit more
	precisely: after $\phi$, \adam can repeat $r + |Q'|(\maxweight +
	\maxweight')$ times the cycle $\chi$ and then spell any word which will
	make $\calN$ accept the word (recall $\calN$ is trim, so this is always
	possible from every state) and make sure the regret of $\sigma$ is
	greater than $r$. The latter contradicts the fact that $\sigma$ is
	winning for \eve in the $r$-regret game. The result thus follows.
\end{proof}

\section{Proof of Theorem~\ref{thm:regret-lower-bound}}
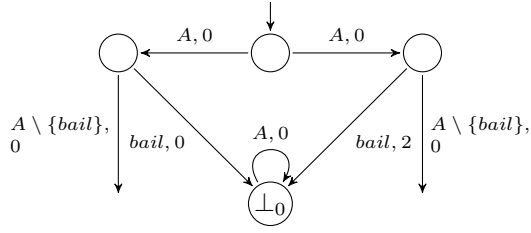
\begin{figure}
\begin{center}
\begin{tikzpicture}
\node[state] (sink0) at (2,0) {{$\bot_0$}};
\node[state] (linter) at (0,2) {};
\node[state,initial above] (vi) at (2,2) {};
\node[state] (rinter) at (4,2) {};
\node (lbinter) at (0,0) {};
\node (rbinter) at (4,0) {};

\path
(sink0) edge[loopabove] node[el] {$A,0$} (sink0)
(vi) edge node[el,swap] {$A,0$} (linter)
(linter) edge node[el,swap] {$bail, 0$} (sink0)
(vi) edge node[el] {$A,0$} (rinter)
(rinter) edge node[el] {$bail, 2$} (sink0)
(linter) edge node[el,swap,align=left] {$A \setminus \{bail\}$,\\$0$} (lbinter)
(rinter) edge node[el,align=left] {$A \setminus \{bail\}$,\\$0$} (rbinter)
;
\end{tikzpicture}
\caption{Initial gadget used in reduction from countdown games.}
\label{fig:initial-gadget}
\end{center}
\end{figure}

\begin{figure*}
\begin{center}
\begin{tikzpicture}
\node at (-0.2,2) {\dots};
\node[state] (xi) at (4,0) {{$x_i$}};
\node[state] (notxi) at (4,4) {{$\overline{x_i}$}};
\node[state] (carryi) at (3,2) {};
\node[state] (sink2) at (0.5,3.5) {{$\bot_2$}};
\node at (6.5,2) {\dots};
\node[state] (xn) at (8,0) {{$\overline{x_n}$}};
\node[state] (notxn) at (8,4) {{$x_n$}};
\node[state] (sink22) at (10,4) {{$\bot_2$}};

\path
(xi) edge[out=190,in=-100] node[el]{$c_{i+1},2$} (sink2)
(xi) edge[bend left] node[swap,el, align=left]{$b_i,0$\\$c_i,0$} (carryi)
(carryi) edge node[el,pos=0.05] {$A \setminus \{c_{i+1}\},2$} (sink2)
(carryi) edge[bend left] node[el,swap] {$c_{i+1},0$} (notxi)
(notxi) edge[bend right] node[el] {$c_{i+1},2$} (sink2)
(notxi) edge[bend left] node[el, align=left]{$b_i,0$\\$c_i,0$} (xi)
(xn) edge node[swap,el,align=left]{$b_n,0$\\$c_n,0$} (notxn)
(notxn) edge node[el] {$A,2$} (sink22)
(sink2) edge[loop] node[el,swap] {$A,0$} (sink2)
(sink22) edge[loop] node[el,swap] {$A,0$} (sink22)
;
\end{tikzpicture}
\caption{Counter gadget.}
\label{fig:counter-gadget}
\end{center}
\end{figure*}
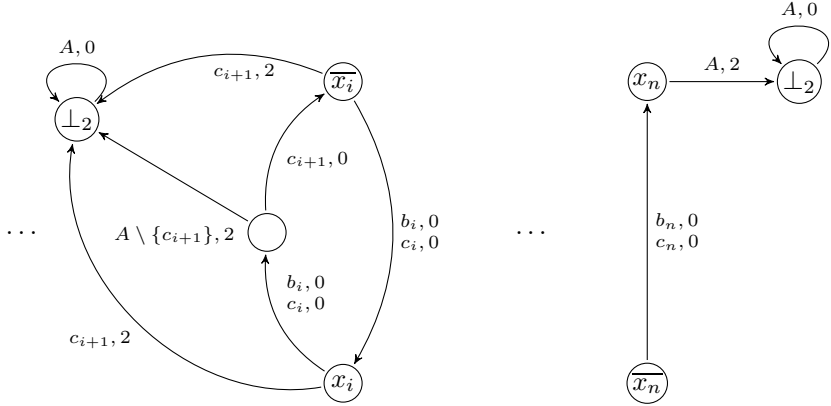

\begin{figure}
\begin{center}
\begin{tikzpicture}
\node[state] (v46) at (4,6) {};
\node[state] (v26) at (2,6) {};
\node[state] (v04) at (0,4.5) {};
\node[state] (v02) at (0,2) {};
\node[state] (v42) at (4,2) {};
\node[state] (v62) at (6,2) {};
\node[state] (v40) at (4,0) {};

\path
(v46) edge node[swap,el] {$b_0,0$} (v26)
(v46) edge node[el] {$b_0,0$} (v42)
(v26) edge node[swap,el] {$c_1,0$} (v04)
(v26) edge node[el] {$c_1,0$} (v42)
(v04) edge node[swap,el] {$c_2,0$} (v02)
(v04) edge node[el] {$c_2,0$} (v42)
(v02) edge node[el] {$c_3,0$} (v42)
(v02) edge[dotted,bend right] node[el,swap] {$c_3,0$} (v42)
(v42) edge node[el,swap] {$b_4,0$} (v40)
(v42) edge node[el] {$b_4,0$} (v62)
(v62) edge[dotted,bend left] node[el] {$c_5,0$} (v40)
(v62) edge[dotted,bend left] node[el,swap] {$c_5,0$} (v40)
;
\end{tikzpicture}
\caption{Adder gadget: depicted +$9$.}
\label{fig:adder-gadget}
\end{center}
\end{figure}
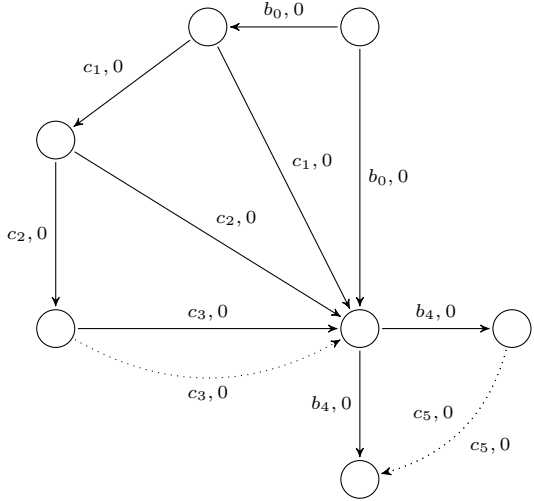

Our proof is by reduction from countdown games.  A countdown game $\calC$
consists of a weighted graph $(S,T)$, where $S$ is the set of states and $T
\subseteq S \times \mathbb{N} \setminus \{0\} \times S$ is the transition
relation, and a target value $N \in \mathbb{N}$. If $t = (s,d,s') \in T$ then we
say that the duration of the transition $t$ is $d$. A configuration of a
countdown game is a pair $(s,c)$, where $s \in S$ is a state and $c \in
\mathbb{N}$. A move of a countdown game from a configuration $(s,c)$ consists in
player Counter choosing a duration $d$ such that $(s,d,s') \in T$ for some $s'
\in S$ followed by player Spoiler choosing $s'$ such that $(s,d,s') \in T$, the
new configuration is then $(s', c + d)$. Counter wins if the game reaches a
configuration of the form $(s, N)$ and Spoiler wins if the game reaches a
configuration $(s,c)$ such that $c < N$ and for all $t = (s,d,s') \in T$ we have
that $c + d > N$.

Deciding the winner in a countdown game $\calC$ from a configuration
$(s,0)$---where $N$ and all durations in $\calC$ are given in binary---is
\EXP-complete~\cite{jsl08}.

Let us fix a countdown game $\calC = ((S,T),N)$ and let
$n = \lfloor \log_2 N \rfloor + 2$.

\myparagraph{Simplifying assumptions}	
Clearly, if Spoiler has a winning strategy and the game continues beyond
his winning the game, then eventually a configuration $(s,c)$, such that
$c \ge 2^n$, is reached. Thus, we can assume w.l.o.g.  that plays in
$\calC$ which visit a configuration $(s,N)$ are winning for Counter and
plays which don't visit a configuration $(s,N)$ but eventually get to a
configuration $(s',c)$ such that $c \ge 2^n$ are winning for Spoiler.

Additionally, we can also assume that $T$ in $\calC$ is total. That is
to say, for all $s \in S$ there is some duration $d$ such that $(s,d,s')
\in T$ for some $s' \in S$. If this were not the case then for every $s$
with no outgoing transitions we could add a transition
$(s,N+1,s_\bot)$ where $s_\bot$ is a newly added state. It is easy
to see that either player has a winning strategy in this new game if and
only if he has a winning strategy in the original game.

\myparagraph{Reduction}
We will now construct a sum-automaton $\calN$ with $\maxweight = 2$ such that,
in a regret game played on $\calN$, \eve
can ensure regret value strictly less than $2$ if and only if Counter
has a winning strategy in $\calC$. It will be clear from the proof how
to generalize the argument to any (strict or non-strict) regret
threshold.

The alphabet of the automaton $\calN =
(Q,\{q_I\},A,\Delta,w,F)$ is $A =
\{b_i \st 0 \le i \le n\} \cup \{c_i \st 0 < i \le n\} \cup
\{bail,choose\} \cup S$. We assume all states are final, \ie~$Q = F$.
We now describe the structure of $\calN$ (\ie
$Q$, $\Delta$ and $w$). 
	
\myparagraph{Initial gadget} Fig.~\ref{fig:initial-gadget} depicts the
initial state of the automaton.  Here, \eve has the choice of playing left
or right. If she plays to the left then \adam can play $bail$ and force
her to $\bot_0$ while the alternative run resulting from her having
chosen to go right goes to $\bot_2$. Hence, playing left already gives
\adam a winning strategy to ensure regret $2$, so she plays to the
right. If \adam now plays $bail$ then \eve can go to $\bot_2$ and as $W
=2$ this implies the regret will be $0$. Therefore, \adam plays $0$.

\myparagraph{Counter gadget} Fig.~\ref{fig:counter-gadget} shows the left
sub-automaton. All states from $\{\overline{x_i} \st 0 \le i \le n\}$ have
incoming transitions from the left part of the initial gadget with
symbol $A \setminus \{bail\}$ and weight $0$. Let $y_0 \dots y_n \in
\mathbb{B}$ be the (little-endian) binary representation of $N$, then
for all $x_i$ such that $y_i = 1$ there is a transition from $x_i$ to
$\bot_0$ with weight $0$ and symbol $bail$. Similarly, for all
$\overline{x_i}$ such that $y_i = 0$ there is a transition from
$\overline{x_i}$ to $\bot_0$ with weight $0$ and symbol $bail$. All the
remaining transitions not shown in the figure cycle on the same state,
e.g. $x_i$ goes to $x_i$ with symbol $choose$ and weight $0$. 

The sub-automaton we have just described corresponds to a counter gadget
(little-endian encoding) which keeps track of the sum of the durations
``spelled'' by \adam. At any point in time, the states of this sub-automaton
in which \eve \emph{believes} alternative runs are now will represent the
binary encoding of the current sum of durations.  Indeed, the initial
gadget makes sure \eve plays into the right sub-automaton and therefore she
knows there are alternative runs that could be at any of the
$\overline{x_i}$ states. This corresponds to the $0$ value of the
initial configuration.

\myparagraph{Adder gadget} Let us now focus on the right sub-automaton in which
\eve finds herself at the moment. The right transition with symbol
$A \setminus \{bail\}$ from the initial gadget goes to state $s$---the
initial state from $\calC$. It is easy to see how we can simulate
Counter's choice of duration and Spoiler's choice of successor. From $s$
there are transitions to every $(s,c)$, such that $(s,c,s') \in T$ for
some $s' \in S$ in $\calC$, with symbol $choose$ and weight $0$.
Transitions with all other symbols and weight $1$ going to $\bot_1$---a
trapping state with a $0$-weight cycle with every symbol---from $s$ ensure \adam
plays $choose$, else since $\maxweight = 2$ the regret of the game will be at
most $1$ and \eve wins.

Fig.~\ref{fig:adder-gadget} shows how \eve forces \adam to ``spell''
the duration $c$ of a transition of $\calC$ from $(s,c)$. For
concreteness, assume that \eve has chosen duration $9$. The top source
in Fig.~\ref{fig:adder-gadget} is therefore the state $(s,9)$. Again,
transitions with all the symbols not depicted go to $\bot_1$ with weight
$1$. Hence, \adam will play $b_0$ and \eve has the choice of going
straight down or moving to a state where \adam is forced to play $c_1$.
Recall from the description of the counter gadget that the belief of
\eve encodes the binary representation of the current sum of delays. If
she believes a play is in $x_1$ (and therefore none in $\overline{x_1}$)
then after \adam plays $b_0$ it is important for her to make him play
$c_1$ or this alternative run will end up in $\bot_2$. It will be clear
from the construction that \adam always has a strategy to keep the play
in the right sub-automaton without reaching $\bot_1$ and therefore if any
alternative run from the left sub-automaton is able to reach $\bot_2$ then
\adam wins (\ie can ensure regret $2$). Thus, \eve decides to force
\adam to play $c_1$. As the duration was $9$ this gadget now forces
\adam to play $b_4$ and again presents the choice of forcing \adam to
play $c_5$ to \eve. Clearly this can be generalized for any duration.
This gadget in fact simulates a \emph{cascade configuration} of $n$
$1$-bit adders.

Finally, from the bottom trap in the adder gadget, we have transitions
with symbols from $S$ with weight $0$ to the corresponding states (thus
simulating Spoiler's choice of successor state). Additionally, with any
symbol from $S$ and with weight $0$ \eve can also choose to go to a
state $q_{bail}$ where \adam is forced to play $bail$ and \eve is
forced into $\bot_0$.
	
\begin{proof}
	Note that if the simulation of the counter has
	been faithful and the belief of \eve encodes the value $N$ then by
	playing $bail$, \adam forces all of the alternative runs in the
	left sub-automaton into the $\bot_0$ trap. Hence, if Counter has a winning
	strategy and \eve faithfully simulates the $\calC$ she can force this
	outcome of all plays going to $\bot_0$. Note that from the right
	sub-automaton we have that $\bot_2$ is not reachable and therefore the
	highest value assigned to any word is $1$. Therefore, her regret is of
	at most $1$.

	Conversely, if both players faithfully simulate $\calC$ and the
	configuration $N$ is never reached, \ie Spoiler had a winning strategy
	in $\calC$, then eventually some alternative run in the left
	sub-automaton
	will reach $x_n$ and from there it will go to $\bot_2$. Again, the
	construction makes sure that \adam always has a strategy to keep the
	play in the right sub-automaton from reaching $\bot_1$ and therefore this
	outcome yields a regret of $2$ for \eve.
\end{proof}
\end{document}